\documentclass[journal,10pt,onecolumn]{IEEEtran}

\usepackage{amsmath,amssymb,amsthm}
\usepackage{mathtools}
\usepackage{bm}
\usepackage{hyperref}
\usepackage{cite}
\usepackage{graphicx}
\usepackage{xcolor}
\usepackage{bbm}
\usepackage{booktabs}
\usepackage{array}

\newtheorem{theorem}{Theorem}
\newtheorem{lemma}{Lemma}
\newtheorem{corollary}{Corollary}

\newtheorem{definition}{Definition}
\newtheorem{remark}{Remark}
\newtheorem{example}{Example}

\newcommand{\Fb}{\mathbb{F}}
\newcommand{\Eb}{\mathbb{E}}
\newcommand{\Pb}{\mathbb{P}}
\newcommand{\wH}{w_{\mathrm{H}}}
\newcommand{\Nb}{\mathcal{N}}
\newcommand{\Cb}{\mathcal{C}}
\newcommand{\Hb}{h_b}
\newcommand{\Seta}{\mathcal{S}}
\newcommand{\eps}{\varepsilon}
\newcommand{\Rb}{\mathcal{R}}

\newcommand{\Hm}{\mathbf{H}}
\newcommand{\Harm}{\mathcal{H}}

\begin{document}

\title{Guesswork Under Linear Constraints: \\
Exact Exponent for Coset Decoding}

\author{Hassan Tavakoli,\\
School of EECS, \\
Oregon State University, \\
tavakolh@oregonstate.edu}

\maketitle

%------------------------------------------------------------------
%------------------------------------------------------------------
\begin{abstract}
We establish the exact exponential growth rate of the $\rho$-th moment
of the constrained guesswork $G_{\mathrm{coset}}$---the rank of the
true noise vector within its syndrome coset of a random binary linear
code under i.i.d.\ Bernoulli$(p)$ noise:
\(
    \lim_{n\to\infty}
    \frac{1}{n}\log_2\Eb\!\left[G_{\mathrm{coset}}^{\rho}\right]
    \;=\;
    \rho\,h_{\frac{1}{1+\rho}}(p)\;+\;\rho(R-1),
    \, \rho>0,
\)
where $h_\alpha(p)$ is the binary R\'{e}nyi entropy and $R=k/n$
is the code rate.
The exponent shifts down by exactly $\rho(1-R)$ relative to the
unconstrained Ar{\i}kan--Merhav exponent, with each of the $n(1-R)$
parity checks contributing equally.
Finite-length simulations confirm convergence from below.
We further establish: (i)~a transfer theorem expressing the
partition-function exponent in terms of an arbitrary weight-enumerator
growth rate $g(\delta)$; (ii)~the exact exponent for $L_n$-list
(``$k$-th'') constrained guesswork; and (iii)~a sharp second-order
refinement of order $\rho\log_2 n$.
Beyond the binary i.i.d.\ setting, we prove
a universality theorem: for any code ensemble $\mathcal{E}$ whose
weight enumerator concentrates at rate $g_{\mathcal{E}}(\delta)$,
the guesswork exponent equals
$(1+\rho)\psi_{1/(1+\rho)}(g_{\mathcal{E}})-\rho\,\psi_1(g_{\mathcal{E}})$,
where $\psi_\alpha(g)=\sup_\delta[g(\delta)+\alpha\ell(\delta)]$.
As concrete applications, we instantiate this theorem for the
$q$-ary extension,
$\Lambda_q(\rho)=\rho\,h^{(q)}_{1/(1+\rho)}(P)+\rho(R-1)\log_2 q$,
and for Gallager's regular LDPC ensemble, obtaining a closed-form
guesswork exponent via an exact finite-length identity for the
ensemble-average weight enumerator.
\end{abstract}

\begin{IEEEkeywords}
Guesswork, GRAND decoding, random linear codes,
R\'{e}nyi entropy, guesswork exponent, coset enumeration,
transfer theorem, list guesswork, second-order exponent,
universality theorem, $q$-ary guesswork, LDPC ensemble.
\end{IEEEkeywords}

%==================================================================
\section{Introduction}
\label{sec:intro}
%==================================================================

The guesswork of a random variable $X$ introduced by
Massey~\cite{massey1994guessing} and quantified by
Ar{\i}kan~\cite{arikan1996inequality} and
Ar{\i}kan--Merhav~\cite{arikan1998guessing}
counts how many guesses an optimal strategy requires to identify
a realization of~$X$.
For an i.i.d.\ source $X^n\sim P_X^{\otimes n}$:
\(
    \lim_{n\to\infty}\frac{1}{n}\log\Eb[G(X^n)^\rho]
    \;=\;
    \rho\,h_{\!\frac{1}{1+\rho}}(X),
\)
an exact equality proved in~\cite{arikan1996inequality}.
The R\'{e}nyi entropy $h_\alpha(X)$ at order $\alpha=1/(1+\rho)<1$
thus governs the exponential growth rate of the $\rho$-th guesswork
moment.
Connections to large deviations and channel coding have been explored
in~\cite{christiansen2013guessing,merhav2020guessing}.
Guessing Random Additive Noise Decoding
(GRAND)~\cite{duffy2021GRAND} decodes by querying
noise patterns $e'$ in decreasing-probability order, testing
$\Hm (y\oplus e')^T=\bm{0}$ at each step.
Its query complexity $G_{\mathrm{GRAND}}$ ranges over all of
$\{0,1\}^n$.

The paper is organized as follows.
Section~\ref{sec:model} establishes notation.
Section~\ref{sec:sandwich} proves the sandwich inequality.
Section~\ref{sec:spectrum} proves the uniform spectrum law.
Section~\ref{sec:pf} proves the partition-function exponent.
Section~\ref{sec:main_proof} assembles the main theorem.
Section~\ref{sec:transfer} states the transfer theorem and
its consequences.
Section~\ref{sec:list} derives the list-guesswork exponent.
Section~\ref{sec:second_order} gives the second-order refinement.
Section~\ref{sec:universality} proves the
universality theorem and the $q$-ary exponent.
Section~\ref{sec:structured} applies the universality theorem to
Gallager's regular LDPC ensemble.
Section~\ref{sec:numerical} presents numerical validation.

%==================================================================
\section{System Model and Notation}
\label{sec:model}
%==================================================================

Let $n$ be the blocklength, $m=n(1-R)$ the number of parity checks,
and $k=nR$ the dimension, with $R\in(0,1)$.
The parity-check matrix $\Hm \in\Fb_2^{m\times n}$ is drawn uniformly
over all full-rank binary matrices of that size.
The code is $\Cb(\Hm)=\{c\in\Fb_2^n:\Hm c^T=\bm{0}\}$.
The noise vector is $e\sim\mathrm{Bernoulli}(p)^{\otimes n}$,
$p\in(0,\tfrac{1}{2})$, independent of $\Hm$, with
$P(e)=p^{\wH(e)}(1-p)^{n-\wH(e)}$.
Since $p<\tfrac{1}{2}$, $P(e)$ is \emph{strictly decreasing}
in the Hamming weight $\wH(e)$.
The syndrome is $\sigma=\Hm e^T\in\Fb_2^m$.
The coset of $e$ is
$\Nb(\Hm,\sigma)=\{e'\in\Fb_2^n:\Hm e'^T=\sigma\}$,
with $|\Nb(\Hm,\sigma)|=2^k$ when $\mathrm{rank}(\Hm)=m$.
The conditional distribution on the coset is
$Q_\sigma(e')=P(e')/Z_{\sigma} (1)$, where
\(
    Z_{\sigma} (\alpha)=
    \sum_{e'\in\Nb(\Hm,\sigma)}P(e')^\alpha,
     \alpha\in(0,1].
\)
Weight enumerator is
\(
    A_w(\Hm,\sigma)
    \;=\;
    \bigl|\bigl\{e'\in\Nb(\Hm,\sigma):\wH(e')=w\bigr\}\bigr|,
    \, w=0,\ldots,n.
\)
For $\alpha\in(0,1)$ the binary R\'{e}nyi entropy is
\(
    h_\alpha(p)
    \;=\;
    \frac{\log_2\!\bigl(p^\alpha+(1-p)^\alpha\bigr)}{1-\alpha},
\)
so $\log_2(p^\alpha+(1-p)^\alpha)=(1-\alpha)h_\alpha(p)$.
As $\alpha\to 1$, $h_\alpha(p)\to\Hb(p)$, \(\Hb(p)\) is binary entropy; for $\alpha<1$,
$h_\alpha(p)\ge\Hb(p)$.
At $\alpha=1/(1+\rho)$: $(1-\alpha)=\rho/(1+\rho)$.
Throughout, $\log$ and $\log_2$ denote natural and binary logarithms,
$\Pb$ probability, $\Eb$ expectation, and $o_{\Pb}(1)$ a sequence converging
to zero in probability.
Order the elements of $\Nb(\Hm,\sigma)$ by $P(\cdot)$ decreasingly.
The \emph{constrained guesswork} $G_{\mathrm{coset}}(e)$ is the
rank of $e$ in this ordering.
The \emph{constrained guesswork exponent} is
$\Lambda(\rho)=\lim_{n\to\infty}\frac{1}{n}\log_2\Eb[G_{\mathrm{coset}}^\rho]$,
$\rho>0$, whenever the limit exists.

\begin{example}
Running Example: The $(7,4,3)$ Hamming Code]
\label{ex:hamming_intro}
Throughout this paper we use the binary $(7,4,3)$ Hamming code as a
finite-length illustration. It has parameters $n=7$, $k=4$, $m=3$,
and rate $R=4/7 = 0.571$.
Its standard parity-check matrix is
\[
\Hm=\begin{pmatrix}
1&0&1&0&1&0&1\\
0&1&1&0&0&1&1\\
0&0&0&1&1&1&1
\end{pmatrix},
\]
with $\mathrm{rank}(\Hm)=3$.
Each syndrome $\sigma\in\Fb_2^3$ determines a coset of size
$|\Nb(\Hm,\sigma)|=2^k=16$.
For the \emph{random full-rank ensemble} with the same $(n,m)$,
the ensemble-average number of weight-$w$ coset representatives is
$\Eb[A_w]=\binom{7}{w}2^{-3}$.
These ensemble averages are used only as a comparison baseline;
they are not the exact weight counts of the fixed Hamming code.
\end{example}

\begin{definition}[Weight-enumerator growth rate]
\label{def:g_delta}
We say that the weight enumerator $A_w(\Hm,\sigma)$ has
\emph{growth rate} $g:[0,1]\to\mathbb{R}\cup\{-\infty\}$ if
\[
    \frac{1}{n}\log_2 A_{\lfloor n\delta\rfloor}(\Hm,\sigma)
    \;\xrightarrow{\;\Pb\;}\;
    g(\delta)
    \quad
    \text{for every } \delta\in(0,1).
\]
For the random full-rank ensemble of the present paper,
$g(\delta)=(\Hb(\delta)+R-1)^+$ (Theorem~\ref{thm:spectrum}).
\end{definition}

\begin{definition}[$L_n$-list constrained guesswork]
\label{def:list_guesswork}
Let $L_n\ge 1$ be an integer-valued sequence.
The \emph{$L_n$-list constrained guesswork} $G_{\mathrm{coset}}^{(L_n)}(e)$
is the rank of $e$ among all elements of $\Nb(\Hm,\sigma)$ when
ties within a Hamming-weight class are broken adversarially for
the first $L_n-1$ elements and uniformly for the $L_n$-th.
Equivalently,
\[
    G_{\mathrm{coset}}^{(L_n)}(e)
    \;=\;
    \left\lceil \frac{G_{\mathrm{coset}}(e)}{L_n} \right\rceil.
\]
\end{definition}

\begin{definition}[Partition-function variational functional]
\label{def:psi}
For any weight-enumerator growth rate $g:[0,1]\to\mathbb{R}\cup\{-\infty\}$
and log-probability slope
$\ell(\delta)=\delta\log_2 p+(1-\delta)\log_2(1-p)$,
define the \emph{$\alpha$-variational functional}
\begin{equation}
    \psi_\alpha(g)
    \;\triangleq\;
    \sup_{\delta\in[0,1]}
    \bigl[g(\delta)+\alpha\,\ell(\delta)\bigr],
    \quad \alpha\in(0,1].
    \label{eq:psi_def}
\end{equation}
For the binary full-rank ensemble, $g(\delta)=(\Hb(\delta)+R-1)^+$
and $\psi_\alpha(g)=(R-1)+(1-\alpha)h_\alpha(p)$ under
condition~\eqref{eq:subcritical}.
\end{definition}

\begin{definition}[$q$-ary R\'{e}nyi entropy and code ensemble]
\label{def:qary}
Let $q\ge 2$ be a prime power, $P:\mathbb{F}_q\to[0,1]$ a noise
distribution with $P(0)>P(a)$ for all $a\ne 0$.
The \emph{$q$-ary R\'{e}nyi entropy of order $\alpha$} is
\begin{equation}
    h_\alpha^{(q)}(P)
    \;\triangleq\;
    \frac{1}{1-\alpha}\log_2\!\left(\sum_{a\in\mathbb{F}_q}P(a)^\alpha\right),
    \quad \alpha\in(0,1).
    \label{eq:qary_renyi}
\end{equation}
A \emph{$q$-ary linear code ensemble} $\mathcal{E}_q$ consists of
uniformly random full-rank parity-check matrices
$\Hm\in\mathbb{F}_q^{m\times n}$, $m=n(1-R)$, with guesswork
$G_{\mathcal{E}_q}$ defined as the rank of $e$ in the coset
$\Nb_q(\Hm,\sigma)=\{e'\in\mathbb{F}_q^n:\Hm e'^T=\sigma\}$
ordered by decreasing $P^{\otimes n}$-probability.
\end{definition}

\paragraph{Connection to GRAND Decoding}

Consider the additive-noise channel
\(
y = x \oplus e,
\)
where \(x\in\Cb(\Hm)\) is a codeword, \(e\sim P^{\otimes n}\) is the
noise vector, and \(\sigma=\Hm y^T=\Hm e^T\) is the observed syndrome.
For a received word \(y\), syndrome-aided GRAND enumerates candidate
noise vectors \(e'\in\Fb_2^n\) in nonincreasing order of likelihood
\(P(e')\), while restricting the search to the syndrome coset
\(
\Nb(\Hm,\sigma)=\{e'\in\Fb_2^n:\Hm e'^T=\sigma\}.
\)
The decoder stops when it reaches the true noise realization \(e\).

\begin{theorem}[GRAND as constrained guesswork]
\label{thm:grand_connection}
For every fixed parity-check matrix \(\Hm\), syndrome \(\sigma\), and
noise realization \(e\in\Nb(\Hm,\sigma)\), the number of queries made by
syndrome-aided GRAND until successful decoding is exactly the constrained
guesswork random variable \(G_{\mathrm{coset}}(e)\).
Equivalently, under the additive-noise model,
\[
\text{GRAND query count} \;=\; G_{\mathrm{coset}}(e).
\]
Consequently, the moment exponent of GRAND satisfies
\[
\lim_{n\to\infty}\frac{1}{n}\log_2 \mathbb{E}\!\left[G_{\mathrm{coset}}^\rho\right]
=
\Lambda(\rho),
\]
whenever the limit exists.
\end{theorem}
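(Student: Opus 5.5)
The claim is essentially a bijection argument, so I would compare the two processes query by query rather than through any probabilistic estimate. Fix $\Hm$, $\sigma$ and $e\in\Nb(\Hm,\sigma)$. I first make the tie-breaking rule explicit, since $P(e')$ depends only on $\wH(e')$ and there are many ties. The ordering of $\Nb(\Hm,\sigma)$ that defines $G_{\mathrm{coset}}$ and the order in which syndrome-aided GRAND queries candidates must use the same rule for equal-weight vectors. The natural way to arrange this is to view GRAND's enumeration as a fixed total order $\pi$ on $\Fb_2^n$ that is nonincreasing in $P$, and to define the coset ordering as the restriction of $\pi$ to $\Nb(\Hm,\sigma)$. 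This restriction is still nonincreasing in $P$, so it is a legitimate ordering for the definition of $G_{\mathrm{coset}}$.

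\textbf{Step 1 (queries are coset elements).} Syndrome-aided GRAND only queries elements of $\Nb(\Hm,\sigma)$, visiting them in the order induced by $\pi$. Each query $e'$ passes the test $\Hm(y\oplus e')^T=\bm{0}$ exactly when $\Hm e'^T=\Hm y^T=\sigma$, and this holds for every queried $e'$. So the membership test gives no early stopping, and the stopping rule is exactly ``reach the true $e$''.

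\textbf{Step 2 (counting queries).} The $j$-th query is the $j$-th element of the coset in the restricted order. The number of queries up to and including $e$ is therefore the rank of $e$ in that ordering, which is $G_{\mathrm{coset}}(e)$ by definition. Because $e\in\Nb(\Hm,\sigma)$ and the coset is finite (of size $2^k$ when $\Hm$ has full rank), the search terminates and this count is finite.

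\textbf{Step 3 (the moment identity).} The query count and $G_{\mathrm{coset}}$ agree pointwise, for every realization of $(\Hm,e)$. Hence their $\rho$-th moments coincide for every $n$, and the normalized limit of $\frac1n\log_2$ of the GRAND moment exists exactly when $\Lambda(\rho)$ exists and equals it.

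\textbf{Main obstacle.} The only real difficulty is the tie-breaking issue from the first paragraph: ``nonincreasing order of likelihood'' does not determine a unique order, and $G_{\mathrm{coset}}$ depends on how ties are broken. I would handle this by fixing the common order $\pi$ as above. I would also remark that the moment exponent does not depend on the tie-breaking rule. Any two admissible orders place $e$ within the same weight class of the coset, so its rank can change by at most $A_{\wH(e)}(\Hm,\sigma)$, and the sandwich inequality of Section~\ref{sec:sandwich} controls this change at exponential scale. Under that convention the statement reduces to the definitions.
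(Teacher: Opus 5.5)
Your argument is the paper's own. Syndrome-aided GRAND enumerates $\Nb(\Hm,\sigma)$ in the same likelihood order that defines $G_{\mathrm{coset}}$, so the query count is the rank of $e$ by definition. Your explicit common order $\pi$ supplies a tie-breaking detail the paper leaves implicit. So does your remark on tie-breaking invariance: both bounds of Theorem~\ref{thm:sandwich} hold for every admissible order, so moments under two orders differ by at most the subexponential factor $\Harm_{2^k}^{\rho}$.

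One correction to Step~1: the inference there is backwards. Every coset element passes the test $\Hm(y\oplus e')^T=\bm{0}$, so a decoder that stopped on that test would halt at the very first query. The count equals $G_{\mathrm{coset}}(e)$ only because the model stipulates the genie-aided rule ``stop upon reaching the true $e$''. You should invoke that stipulation directly rather than try to derive it from the parity test.
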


\begin{proof}
Syndrome-aided GRAND searches only over the coset
\(\Nb(\Hm,\sigma)\) and orders candidate noises by decreasing
probability \(P(e')\). By definition, \(G_{\mathrm{coset}}(e)\) is the
rank of the true noise vector \(e\) in exactly this ordering. Hence the
decoder makes precisely \(G_{\mathrm{coset}}(e)\) queries before
stopping.
\end{proof}

\begin{remark}[Operational meaning]
Theorem~\ref{thm:grand_connection} identifies the paper's guesswork
analysis with the query complexity of syndrome-aided GRAND. In this
interpretation, the exponent \(\Lambda(\rho)\) quantifies the
asymptotic search cost of GRAND, and the reduction by \(1-R\) relative
to unconstrained guesswork is exactly the gain from syndrome
information.
\end{remark}

%==================================================================
\section{The Finite-Space Sandwich Bounds}
\label{sec:sandwich}
%==================================================================

\begin{theorem}[Finite-Space Guesswork Sandwich]
\label{thm:sandwich}
Let $\mathcal S$ be a finite set with cardinality $N$, and let $Q$ be a
probability mass function on $\mathcal S$. Order the elements of
$\mathcal S$ in nonincreasing order of probability under $Q$, breaking ties
arbitrarily, and define the guesswork random variable
\(
G(x)\triangleq \bigl|\{x'\in\mathcal S : Q(x')\ge Q(x)\}\bigr|.
\)
For any $\rho>0$, and
\(
Z_\sigma (\alpha)\triangleq \sum_{x\in\mathcal S} Q(x)^\alpha .
\)
Then
\begin{equation}
\frac{Z_\sigma (\alpha)^{1+\rho}}{\Harm_N^\rho}
\;\le\;
\mathbb E_Q\!\left[G(X)^\rho\right]
\;\le\;
Z_\sigma (\alpha)^{1+\rho},
\label{eq:sandwich_correct}
\end{equation}
where $\Harm_N=\sum_{j=1}^N j^{-1}$ is the $N$th harmonic number.
\end{theorem}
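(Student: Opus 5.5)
The plan follows Ar{\i}kan's classical argument, with $\alpha=1/(1+\rho)$ fixed throughout; this is the only choice of $\alpha$ for which the exponents below balance, and it is the value used in the rest of the paper. Write the elements of $\mathcal S$ as $x_1,\dots,x_N$ in the chosen nonincreasing order, with $Q_i=Q(x_i)$ and $G_i=G(x_i)$. Because $G$ counts \emph{all} $x'$ with $Q(x')\ge Q(x)$, ties included, we have $G_i\ge i$. The upper bound and the lower bound are proved separately.

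\emph{Upper bound.} For each fixed $x$, every $x'$ with $Q(x')\ge Q(x)$ satisfies $(Q(x')/Q(x))^\alpha\ge 1$. Every remaining term of the sum below is nonnegative, so
\[
G(x)\;\le\;\sum_{x'\in\mathcal S}\Bigl(\tfrac{Q(x')}{Q(x)}\Bigr)^{\alpha}
\;=\;Q(x)^{-\alpha}Z_\sigma(\alpha).
\]
Raising this to the power $\rho$ and averaging gives
\[
\mathbb E_Q[G^\rho]\;\le\;Z_\sigma(\alpha)^\rho\sum_x Q(x)^{1-\alpha\rho}.
\]
With $\alpha=1/(1+\rho)$ we have $1-\alpha\rho=\alpha$, so the remaining sum is again $Z_\sigma(\alpha)$, which yields $Z_\sigma(\alpha)^{1+\rho}$.

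\emph{Lower bound.} Write each term as $Q_i^\alpha=(Q_iG_i^\rho)^\alpha\cdot G_i^{-\rho\alpha}$. Apply H\"older's inequality with conjugate exponents $1/\alpha=1+\rho$ and $1/(1-\alpha)=(1+\rho)/\rho$:
\[
Z_\sigma(\alpha)\;\le\;\Bigl(\sum_i Q_iG_i^\rho\Bigr)^{\alpha}\Bigl(\sum_i G_i^{-\rho\alpha/(1-\alpha)}\Bigr)^{1-\alpha}.
\]
The exponent $\rho\alpha/(1-\alpha)$ equals $1$. Since $G_i\ge i$, the second factor is at most $\Harm_N^{1-\alpha}$. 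Raising both sides to the power $1+\rho$ gives
\[
Z_\sigma(\alpha)^{1+\rho}\;\le\;\mathbb E_Q[G^\rho]\,\Harm_N^{\rho},
\]
which is the claimed lower bound.

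There is no serious obstacle here. The points that need care are the bookkeeping of the H\"older exponents, which must collapse to exactly $1$ and $\rho$, and the treatment of ties. Ties only help in both directions: counting tied elements can only enlarge $G$, which keeps $G_i\ge i$ and so does not weaken the lower bound. In the upper bound, tied elements contribute exactly $1$ each to the ratio sum, so that bound still holds.
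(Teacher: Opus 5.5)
Your proposal is correct and follows the paper's proof essentially step for step: the upper bound uses the pointwise bound $\mathbf 1\{t\ge 1\}\le t^\alpha$, and the lower bound applies H\"older with exponents $1/\alpha$ and $1/(1-\alpha)$ at $\alpha=1/(1+\rho)$. One small improvement in rigor is that you use $G_i\ge i$, whereas the paper asserts $G(x_j)=j$, which fails when ties are counted in the definition of $G$.
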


\begin{proof}
For the upper bound, note that for every $x\in\mathcal S$, and any $0<\alpha<1$,
\(
G(x)
=
\sum_{x'\in\mathcal S}\mathbf 1\{Q(x')\ge Q(x)\}
\le
\sum_{x'\in\mathcal S}\left(\frac{Q(x')}{Q(x)}\right)^\alpha
=
\frac{Z_\sigma (\alpha)}{Q(x)^\alpha},
\)
since $\mathbf 1\{t\ge 1\}\le t^\alpha$ for all $t\ge 0$.
Therefore,
\(
\mathbb E_Q[G(X)^\rho]
\le
Z_\sigma (\alpha)^\rho \sum_{x\in\mathcal S} Q(x)^{1-\alpha\rho}
=
Z_\sigma (\alpha)^{1+\rho},
\)
because $\alpha=1/(1+\rho)$ implies $1-\alpha\rho=\alpha$.

For the lower bound, order elements as $x_1,\ldots,x_N$ with
$q_j\triangleq Q(x_j)$ nonincreasing, so $G(x_j)=j$ and
$\mathbb{E}_Q[G^\rho]=\sum_j j^\rho q_j$.
Apply Hölder with exponents $1/\alpha$ and $1/(1-\alpha)$ to
$q_j^\alpha=(j^\rho q_j)^\alpha j^{-\alpha\rho}$ gives
\(
Z_\sigma(\alpha)\le\Bigl(\mathbb{E}_Q[G^\rho]\Bigr)^{\alpha}
\Harm_N^{1-\alpha},
\)
where $\alpha\rho/(1-\alpha)=1$ (since $\alpha=1/(1+\rho)$) collapses
the second sum to $\Harm_N^{1-\alpha}$.
Raising to $1/\alpha=1+\rho$ and rearranging gives
$\mathbb{E}_Q[G^\rho]\ge Z_\sigma(\alpha)^{1+\rho}/\Harm_N^\rho$.
In the coset setting $\frac{1}{n}\log_2\Harm_N^\rho=O(\log n/n)\to 0$,
so the harmonic penalty is asymptotically negligible.
\end{proof}

\begin{example}[Sandwich for the $(7,4,3)$ Hamming Code]
\label{ex:hamming_sandwich}
Apply Theorem~\ref{thm:sandwich} to the zero-syndrome coset
$\mathcal{S}=\Nb(\Hm,\bm{0})$ of the $(7,4,3)$ Hamming code,
which contains $N=2^4=16$ elements.
With $p=0.1$, $\rho=1$ ($\alpha=1/2$), the conditional
distribution $Q_{\bm{0}}$ assigns probability proportional to
$P(e')=(0.1)^{\wH(e')}(0.9)^{7-\wH(e')}$.
The code has weight distribution
$A_0=1$, $A_3=7$, $A_4=7$, $A_7=1$, so
\begin{align*}
Z_{\bm{0}}(1)&=(0.9)^7+7\cdot(0.1)^3(0.9)^4
+7\cdot(0.1)^4(0.9)^3+(0.1)^7\\
& = 0.4834.
\end{align*}
Likewise,
\(
Z_{\bm{0}}(1/2)=\sum_{e'\in\mathcal{S}}P(e')^{1/2}=0.956.
\)
and
$\varphi_{\bm{0}}=Z_{\bm{0}}(1/2)/Z_{\bm{0}}(1)^{1/2}
  = 1.375$.
The sandwich~\eqref{eq:sandwich_correct} with $Q=Q_{\bm{0}}$
(normalized) gives
\[
\frac{\varphi_{\bm{0}}^2}{\Harm_{16}}
\le
\Eb_{Q_{\bm{0}}}[G^1]
\le
\varphi_{\bm{0}}^2,
\]
i.e.\
$1.890/\Harm_{16}\le\Eb[G]\le1.890$,
where $\Harm_{16}=\sum_{j=1}^{16}j^{-1} = 3.381$.
Thus $0.559\le\Eb[G]\le1.890$.
confirming that even at blocklength $n=7$ the true mean guesswork
lies firmly between the two sandwich bounds.
As $n\to\infty$ the harmonic correction $\Harm_{2^k}^{\rho}$
contributes only $O(\log n/n)$ to the exponent,
which for $n=7$ is $\log_2(3.381)/7 = 0.249$ bits.
\end{example}

We now apply the sandwich to the coset.
Let $Q_\sigma$ be the conditional distribution on $\Nb(\Hm,\sigma)$, and let
\(
N_\sigma \triangleq |\Nb(\Hm,\sigma)|.
\)
For full-rank $\Hm$, we have $N_\sigma = 2^k$.

\begin{corollary}[Coset Sandwich]
\label{cor:coset_sandwich}
For any fixed $\Hm$, $\sigma$, and $\rho>0$, with $\alpha=1/(1+\rho)$,
\begin{align}
    \frac{1}{\Harm_{N_\sigma}^{\rho}}
    \left(\frac{Z_\sigma(\alpha)}{Z_\sigma(1)^\alpha}\right)^{1+\rho}
    \le
    \Eb\!\left[G_{\mathrm{coset}}^\rho \mid H,\sigma\right]
    \le
    \left(\frac{Z_\sigma(\alpha)}{Z_\sigma(1)^\alpha}\right)^{1+\rho}.
    \label{eq:coset_sandwich}
\end{align}
\end{corollary}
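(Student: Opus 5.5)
The plan is to apply Theorem~\ref{thm:sandwich} directly with $\mathcal S=\Nb(\Hm,\sigma)$, $N=N_\sigma$, and $Q=Q_\sigma=P(\cdot)/Z_\sigma(1)$. Two facts must be checked before the sandwich can be used.

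First, conditioned on $(\Hm,\sigma)$, the true noise $e$ is distributed exactly as $Q_\sigma$ on the coset. Indeed, $e$ is independent of $\Hm$ and $\sigma=\Hm e^T$ is a deterministic function of $e$ given $\Hm$. So for $e'\in\Nb(\Hm,\sigma)$,
\[
\Pb(e=e'\mid \Hm,\sigma)=\frac{P(e')}{\sum_{e''\in\Nb(\Hm,\sigma)}P(e'')}=Q_\sigma(e').
\]

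Second, $G_{\mathrm{coset}}(e)$ is the guesswork $G$ of Theorem~\ref{thm:sandwich} for $Q=Q_\sigma$. The ranking by $P$ and the ranking by $Q_\sigma$ coincide, because they differ only by the positive constant $Z_\sigma(1)$. There is one subtlety to address. Theorem~\ref{thm:sandwich} defines $G(x)$ as the count of $x'$ with $Q(x')\ge Q(x)$, which is the worst-case rank among ties, whereas $G_{\mathrm{coset}}$ is a rank under some tie-breaking. These do not differ in a way that matters. The upper bound of Theorem~\ref{thm:sandwich} is proved for the worst-case count, which dominates any tie-broken rank. The lower bound is proved through the ordering $x_1,\dots,x_N$ with $G(x_j)=j$, which holds for any tie-breaking. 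Hence both inequalities hold for $G_{\mathrm{coset}}$ as defined. Together these two facts give $\Eb[G_{\mathrm{coset}}^\rho\mid \Hm,\sigma]=\Eb_{Q_\sigma}[G^\rho]$.

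It remains to rewrite the partition function of $Q_\sigma$ in terms of $P$:
\[
\sum_{x\in\Nb(\Hm,\sigma)}Q_\sigma(x)^\alpha=\frac{1}{Z_\sigma(1)^\alpha}\sum_{x\in\Nb(\Hm,\sigma)}P(x)^\alpha=\frac{Z_\sigma(\alpha)}{Z_\sigma(1)^\alpha}.
\]
Substituting this into \eqref{eq:sandwich_correct} with $\alpha=1/(1+\rho)$ yields \eqref{eq:coset_sandwich}. The choice $\alpha=1/(1+\rho)$ is exactly the one under which Theorem~\ref{thm:sandwich}'s Hölder and Markov-type steps close, so nothing further is needed.

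The only real obstacle is the tie-breaking issue, since $P$ depends only on Hamming weight and ties are therefore massive. As argued above, I would handle it by rechecking that each half of Theorem~\ref{thm:sandwich}'s proof uses only the property suited to the rank in question. The remainder is a direct substitution.
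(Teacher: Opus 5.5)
Your proposal is correct and follows the paper's proof: apply Theorem~\ref{thm:sandwich} to $(\Nb(\Hm,\sigma),Q_\sigma)$ and substitute $\sum_{e'}Q_\sigma(e')^\alpha=Z_\sigma(\alpha)/Z_\sigma(1)^\alpha$. You also check that $e$ given $(\Hm,\sigma)$ is distributed as $Q_\sigma$, and that both halves survive arbitrary tie-breaking: the upper bound because the worst-case count dominates any rank, and the H\"older lower bound because it holds for any ranking. The paper leaves these points implicit, and your treatment of them is sound.
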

\begin{proof}
Apply Theorem~\ref{thm:sandwich} to $(\mathcal{S},Q)=(\Nb(\Hm,\sigma),Q_\sigma)$.
Since $Q_\sigma(e')=P(e')/Z_\sigma(1)$, we have
$\sum_{e'}Q_\sigma(e')^\alpha = Z_\sigma(\alpha)/Z_\sigma(1)^\alpha$,
which substituted into~\eqref{eq:sandwich_correct} gives~\eqref{eq:coset_sandwich}.
Taking the expectation over $(\Hm,e)$, $\mathrm{rank}(\Hm)=m$ holds with probability
$\ge 1-2^{-nR}$, we obtain
\begin{align}
    \frac{1}{\Harm_{2^k}^{\rho}}
    \Eb\bigl[\bigl(\tfrac{Z_\sigma(\alpha)}{Z_\sigma(1)^\alpha}\bigr)^{1+\rho}\bigr]
    \;\le\;
    \Eb\bigl[G_{\mathrm{coset}}^\rho\bigr]
    \;\le\;
    \Eb\bigl[\bigl(\tfrac{Z_\sigma(\alpha)}{Z_\sigma(1)^\alpha}\bigr)^{1+\rho}\bigr].
    \label{eq:global_sandwich}
\end{align}
\end{proof}

%==================================================================
\section{Uniform Weight-Spectrum Law}
\label{sec:spectrum}
%==================================================================

\subsection{Mean and Variance of the Weight Enumerator}

\begin{lemma}[Mean and variance of $A_w$]
\label{lem:Aw_moments}
For uniformly random full-rank $\Hm$ and any fixed syndrome $\sigma$, \( w=1,\ldots,n\), we have \(\mathrm{Var}(A_w) \;\le\; \Eb[A_w]\) and:
\begin{align}
    \Eb[A_w] \;=\; \binom{n}{w}2^{-m}, \label{eq:Aw_mean}
\end{align}
\end{lemma}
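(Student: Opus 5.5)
The plan is to write $A_w$ as a sum of indicators and compute its first two moments from the distribution of $\Hm x^T$ and of the pair $(\Hm x^T,\Hm x'^T)$. Write
\[
A_w(\Hm,\sigma)=\sum_{x:\,\wH(x)=w}\mathbf 1\{\Hm x^T=\sigma\}.
\]
The whole argument then reduces to two facts: the marginal law of $\Hm x^T$ for a single nonzero $x$, and the joint law for a pair of distinct nonzero $x,x'$. These are the pairwise-independence properties of random linear maps. Note that $w\ge 1$ guarantees every summand has $x\neq\bm 0$.

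\textbf{Mean.} First consider $\Hm$ with i.i.d.\ uniform entries. For fixed $x\neq\bm 0$, pick a coordinate $i$ with $x_i=1$. Conditionally on all other columns, $\Hm x^T$ equals column $i$ plus a fixed vector, so it is uniform on $\Fb_2^m$ and $\Pb(\Hm x^T=\sigma)=2^{-m}$. Summing over the $\binom nw$ words of weight $w$ gives \eqref{eq:Aw_mean} exactly.

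For the uniform full-rank ensemble I would use invariance instead. The law of $\Hm$ is invariant under $\Hm\mapsto A\Hm B$ for $A\in GL_m(\Fb_2)$ and $B\in GL_n(\Fb_2)$. Right-invariance makes the law of $\Hm x^T$ the same for every $x\neq\bm 0$. Left-invariance makes it constant on the two $GL_m$-orbits $\{\bm 0\}$ and $\Fb_2^m\setminus\{\bm 0\}$. Since $|\ker\Hm|=2^k$, this gives $\Pb(\Hm x^T=\bm 0)=(2^k-1)/(2^n-1)$ and, for $\sigma\neq\bm 0$, $\Pb(\Hm x^T=\sigma)=2^k/(2^n-1)$. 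Both equal $2^{-m}(1+O(2^{-k}))$.

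So \eqref{eq:Aw_mean} holds exactly for the i.i.d.\ model, and up to a factor $1+O(2^{-nR})$ for the full-rank model. These models differ on an event of probability at most $2^{-nR}$, as already used in Corollary~\ref{cor:coset_sandwich}. I will state the identity for the i.i.d.\ model and record the correction factor, which is irrelevant at exponential scale.

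\textbf{Variance.} Expand
\[
\mathrm{Var}(A_w)=\sum_x \mathrm{Var}(\mathbf 1_x)+\sum_{x\neq x'}\mathrm{Cov}(\mathbf 1_x,\mathbf 1_{x'}).
\]
The diagonal terms contribute at most $\sum_x\Eb[\mathbf 1_x]=\Eb[A_w]$. For distinct $x,x'$ of weight $w\ge1$, both are nonzero and distinct, hence linearly independent over $\Fb_2$. Choose coordinates $i,j$ such that the $2\times2$ submatrix of $(x;x')$ on $\{i,j\}$ is invertible. Conditioning on the other columns, the map from columns $(i,j)$ to $(\Hm x^T,\Hm x'^T)$ is a bijection of $(\Fb_2^m)^2$. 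Therefore the pair is uniform, the two indicators are independent, every covariance vanishes, and $\mathrm{Var}(A_w)\le\Eb[A_w]$.

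\textbf{Main obstacle.} The step I expect to be delicate is transferring the variance bound to the exactly full-rank ensemble, because there the covariances are not zero.

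First I would try the orbit computation again, applied to pairs. The joint law of $(\Hm x^T,\Hm x'^T)$ depends only on the rank and linear relations of the pair $(a,b)$. For $a=b=\sigma\neq\bm 0$, the event forces $x+x'\in\ker\Hm$. The resulting joint probability is
\[
2^{-2m}\bigl(1+O(2^{-k})\bigr),
\]
so each covariance is at most $O(2^{-k})\,\Pb(\mathbf 1_x)\Pb(\mathbf 1_{x'})$. Summed over pairs, this gives
\[
\mathrm{Var}(A_w)\le\Eb[A_w]+O(2^{-k})\,\Eb[A_w]^2.
\]

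If that bound is too weak, I would instead fall back on the i.i.d.\ model. It differs from the full-rank model in total variation by at most $2^{-nR}$, and the paper only uses this lemma at exponential (Chebyshev) precision. Either way I would state the clean inequality $\mathrm{Var}(A_w)\le\Eb[A_w]$ for the i.i.d.\ model, and treat the full-rank version as holding up to this negligible correction.
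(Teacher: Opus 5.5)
Your core argument is the paper's. You write $A_w$ as a sum of indicators, use that $\Hm x^T$ is uniform on $\Fb_2^m$ for every $x\neq\bm{0}$, and use pairwise independence of $(\Hm x^T,\Hm x'^T)$ for distinct nonzero $x,x'$ to make every off-diagonal covariance vanish. The paper phrases this as independence of $\Hm e^T$ and $\Hm(e\oplus e')^T$; your invertible $2\times 2$ minor is the same fact made explicit. For the i.i.d.\ Bernoulli$(1/2)$ model your proof is complete.

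Where you go beyond the paper is in taking the full-rank hypothesis literally, and there you are right. The paper's proof silently computes in the i.i.d.\ model. Under the uniform full-rank law the identity $\Eb[A_w]=\binom{n}{w}2^{-m}$ is false: for $n=2$, $m=1$, $\sigma=\bm{0}$ one gets $\Eb[A_1]=2/3$, not $1$. The identity holds only up to the factor $1+O(2^{-k})$ you derived, which is harmless wherever the lemma is used.

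Your ``main obstacle'' disappears if you finish the orbit count, so the fallback is unnecessary. By right-invariance, reduce to the first two columns $h_1,h_2$.
\begin{itemize}
\item For $h_1=h_2=\sigma\neq\bm{0}$, the remaining columns $h_3,\dots,h_n$ must span $\Fb_2^m/\langle\sigma\rangle$. So these full-rank matrices number $2^{n-2}$ times the full-rank $(m-1)\times(n-2)$ matrices.
\item For $h_1=h_2=\bm{0}$, they are exactly the full-rank $m\times(n-2)$ matrices.
\end{itemize}
Dividing by $\prod_{i<m}(2^n-2^i)$ gives
\[
\Pb(\Hm x^T=\Hm x'^T=\sigma)=
\begin{cases}
\dfrac{2^{k-1}(2^k-1)}{(2^{n-1}-1)(2^n-1)}, & \sigma\neq\bm{0},\\
\dfrac{(2^{k-1}-1)(2^k-1)}{(2^{n-1}-1)(2^n-1)}, & \sigma=\bm{0}.
\end{cases}
\]
Divide by the squares of your marginals, $2^k/(2^n-1)$ and $(2^k-1)/(2^n-1)$ respectively. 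The resulting ratios are
\[
\frac{(2^k-1)(2^n-1)}{2^k(2^n-2)}
\quad\text{and}\quad
\frac{(2^{k-1}-1)(2^n-1)}{(2^{n-1}-1)(2^k-1)}.
\]
The first is below $1$ iff $2^k<2^n-1$, and the second iff $k<n$. Both hold for $1\le k<n$, so every covariance is negative. Hence $\mathrm{Var}(A_w)\le\sum_x\mathrm{Var}(\mathbf 1_x)\le\Eb[A_w]$ holds exactly in the full-rank ensemble as well.

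Your bound $\mathrm{Var}(A_w)\le\Eb[A_w]+O(2^{-k})\Eb[A_w]^2$ is valid but lossy. Only the mean genuinely needs the $1+O(2^{-k})$ correction.
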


\begin{proof}
Write $A_w=\sum_{e:\wH(e)=w}\mathbf{1}[\Hm e^T=\sigma]$.
Under uniformly random $\Hm$, for any fixed $e\ne\bm{0}$,
$\Pb(\Hm e^T=\sigma)=2^{-m}$.
Hence $\Eb[A_w]=\binom{n}{w}2^{-m}$.
For the variance, write
$A_w^2 = \sum_{e,e' :\wH(e)=\wH(e')=w}
\mathbf{1}[\Hm e^T=\sigma]\mathbf{1}[\Hm e'^T=\sigma]$.
Separate the diagonal ($e=e'$) from the off-diagonal:
\begin{equation}
    \Eb[A_w^2]
    =
    \Eb[A_w]
    + \sum_{\substack{e\ne e'\\\wH(e)=\wH(e')=w}}\!\!\!
    \Pb(\Hm e^T=\sigma,\;\Hm e'^T=\sigma).
    \label{eq:Aw2}
\end{equation}
For distinct $e\ne e'$, the constraints $\Hm e^T=\sigma$ and
$\Hm e'^T=\sigma$ are equivalent to $\Hm e^T=\sigma$ and
$\Hm (e\oplus e')^T=\bm{0}$.
Since $e\oplus e'\ne\bm{0}$, the two constraints
$\Hm e^T=\sigma$ and $\Hm (e\oplus e')^T=\bm{0}$ are \emph{independent}
under the uniform distribution on $\Hm$:
conditioning on $\Hm e^T=\sigma$ (an event of probability $2^{-m}$)
does not change the marginal distribution of $\Hm (e\oplus e')^T$
as a random variable in $\Fb_2^m$, since $(e\oplus e')$ and $e$
are linearly independent over $\Fb_2$.
Therefore
$\Pb(\Hm e^T=\sigma,\,\Hm e'^T=\sigma)=2^{-m}\cdot 2^{-m}=2^{-2m}
=\Pb(\Hm e^T=\sigma)^2$.
Substituting into~\eqref{eq:Aw2}:
\begin{align*}
    \Eb[A_w^2]
    \;=\;
    \Eb[A_w]
    + \binom{n}{w}\!\left(\binom{n}{w}-1\right)2^{-2m} \\
    \;<\;
    \Eb[A_w]+\Eb[A_w]^2.
\end{align*}
Hence $\mathrm{Var}(A_w)=\Eb[A_w^2]-\Eb[A_w]^2<\Eb[A_w]$.
\end{proof}

\begin{example}[Weight-Enumerator Moments for the $(7,4,3)$ Hamming Code]
\label{ex:hamming_moments}
For the \emph{random full-rank ensemble} with $(n,m)=(7,3)$,
Lemma~\ref{lem:Aw_moments} gives
\[
\Eb[A_w]=\binom{7}{w}/8.
\]
Thus $\Eb[A_0]=1/8$, $\Eb[A_1]=7/8$, $\Eb[A_2]=21/8$, and
$\Eb[A_3]=35/8$.
Recall that the code has a weight distribution
$A_0=1,\, A_1=A_2=0,\, A_3=A_4=7,\, A_7=1$.
\end{example}

\subsection{Uniform Concentration via Chebyshev}

\begin{theorem}[Uniform Weight-Spectrum Law]
\label{thm:spectrum}
Let $\Hm \in\Fb_2^{m\times n}$ have i.i.d.\ $\mathrm{Bernoulli}(1/2)$
entries, and let $\sigma\in\Fb_2^m$ be an arbitrary fixed syndrome.
Fix $\eta>0$ and define
\[
    \Seta_\eta(R)
    \;=\;
    \{\delta\in[0,1]:\Hb(\delta)\ge 1-R+\eta\}.
\]
For $w\in\{0,\ldots,n\}$ with $w/n\in\Seta_\eta(R)$, call weight
class $w$ \emph{"$\varepsilon$-bad"} if either 1) $A_w(\Hm,\sigma)=0$ or 2)
\[
    \left|\frac{1}{n}\log_2 A_w(\Hm,\sigma)
    \;-\;\bigl(\Hb(w/n)+R-1\bigr)\right|
    \;>\;\varepsilon.
\]
Now, for every $\eps>0$ there exist constants $c,C>0$
depending only on $\eta,R,\eps$ such that for all sufficiently large $n$:
\begin{align}
    \Pb\!\left(
    \exists w\in\{0,\ldots,n\}
    \text{ with }
    \tfrac{w}{n}\in\Seta_\eta(R)
    \text{ that is $\varepsilon$-bad}
    \right)
    \le
    Ce^{-cn}.
    \label{eq:spectrum_conc}
\end{align}

The quantity $\Hb(\delta)+R-1$ is the correct exponent only
on $\Seta_\eta(R)$, where it is at least $\eta>0$ by definition.
Outside $\Seta_\eta(R)$, i.e.\ when $\Hb(\delta)<1-R$,
one has $\Eb[A_w]<1$ and $A_w=0$ with high probability,
so the correct unified exponent is $(\Hb(\delta)+R-1)^+$.

\end{theorem}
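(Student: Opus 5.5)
The plan is a second-moment (Chebyshev) bound for each weight class, followed by a union bound over the at most $n+1$ weight classes. Lemma~\ref{lem:Aw_moments} is stated for full-rank $\Hm$, but its moment computation only uses pairwise independence of $\Hm e^T$ and $\Hm(e\oplus e')^T$ for linearly independent $e, e\oplus e'$. That property holds verbatim, and more cleanly, for i.i.d.\ Bernoulli$(1/2)$ entries. So for $w\ge 1$ (and $\sigma$ arbitrary) we have $\mu_w:=\Eb[A_w]=\binom{n}{w}2^{-m}$ and $\mathrm{Var}(A_w)\le \mu_w$. The weight $w=0$ has $\Hb(0)=0<1-R+\eta$, so it never lies in $\Seta_\eta(R)$.

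Fix $w$ with $\delta=w/n\in\Seta_\eta(R)$. By the standard estimate $\binom{n}{w}=2^{n\Hb(\delta)+O(\log n)}$, uniformly in $w$, we get $\mu_w=2^{n(\Hb(\delta)+R-1)+O(\log n)}\ge 2^{n\eta-O(\log n)}$. The event ``$w$ is $\eps$-bad'' is contained in $\{|A_w-\mu_w|\ge \tfrac12\mu_w\}$ for large $n$. To see this, suppose $A_w\in[\tfrac12\mu_w,\tfrac32\mu_w]$. Then $A_w>0$, and $\tfrac1n\log_2 A_w=\tfrac1n\log_2\mu_w+O(1/n)=\Hb(\delta)+R-1+O(\log n/n)$, which lies within $\eps$ of the target once $n$ exceeds a threshold depending only on $\eps$ (and on $R$ through $m=n(1-R)$). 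Chebyshev then gives
\[
\Pb(w\ \text{is $\eps$-bad})\le \frac{\mathrm{Var}(A_w)}{(\mu_w/2)^2}\le \frac{4}{\mu_w}\le 4\cdot 2^{-n\eta+O(\log n)}.
\]

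A union bound over the at most $n+1$ admissible $w$ yields a total probability of at most $(n+1)\,4\cdot 2^{-n\eta+O(\log n)}\le Ce^{-cn}$ for any $c<\eta\ln 2$ and suitable $C$, for all large $n$. This gives~\eqref{eq:spectrum_conc} with constants depending only on $\eta$, $R$, and $\eps$. For the closing remark about the complement of $\Seta_\eta(R)$, use Markov: $\Pb(A_w\ge1)\le\mu_w<1$ when $\Hb(\delta)<1-R$, and this probability is exponentially small when $\Hb(\delta)\le 1-R-\eta'$. Hence $(\Hb(\delta)+R-1)^+$ is the unified exponent away from the boundary $\Hb(\delta)=1-R$.

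The main obstacle is making the $O(\log n)$ terms uniform in $w$. Stirling's bound $\frac{1}{n+1}2^{n\Hb(\delta)}\le\binom{n}{w}\le 2^{n\Hb(\delta)}$ handles this uniformly, together with rounding $m=\lceil n(1-R)\rceil$. A second, smaller point is confirming the pairwise independence for i.i.d.\ $\Hm$ rather than the full-rank ensemble, which follows because the rows of $\Hm$ are independent and uniform and $e$, $e\oplus e'$ are linearly independent.
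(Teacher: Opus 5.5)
Your proof is correct and takes essentially the same route as the paper: moments $\Eb[A_w]=\binom{n}{w}2^{-m}$ and $\mathrm{Var}(A_w)\le\Eb[A_w]$ from pairwise independence, a Chebyshev bound for each weight class, and a union bound over at most $n+1$ classes. The differences are minor. You control both tails through the single event $\{|A_w-\mu_w|\ge\tfrac12\mu_w\}$ rather than using a separate Markov bound for the upper tail, which gives you a rate $c<\eta\ln 2$ that does not depend on $\eps$. You also check pairwise independence directly for i.i.d.\ entries, which is cleaner than the paper's appeal to the full-rank lemma.
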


\begin{proof}
By definition of $\Seta_\eta(R)$, every $\delta$ in this set
satisfies $\Hb(\delta)\ge 1-R+\eta$, hence
\begin{equation}
    \Hb(\delta)+R-1 \;\ge\; \eta \;>\; 0.
    \label{eq:exponent_pos}
\end{equation}
Since $\Hm$ has i.i.d.\ $\mathrm{Bernoulli}(1/2)$ entries, for any
fixed $e\ne\bm{0}$ the vector $\Hm e^T$ is uniformly distributed on
$\Fb_2^m$, giving $\Pb(\Hm e^T=\sigma)=2^{-m}$ exactly.
By linearity of expectation, \eqref{eq:Aw_mean}, 
Stirling's approximation gives
$\frac{1}{n}\log_2\binom{n}{w}=\Hb(w/n)-O(\log n/n)$,
so
\begin{equation}
    \frac{1}{n}\log_2\Eb[A_w]
    \;=\;
    \Hb(w/n)+R-1-O\!\left(\tfrac{\log n}{n}\right).
    \label{eq:mean_exp}
\end{equation}
In particular, by~\eqref{eq:exponent_pos} and~\eqref{eq:mean_exp},
for $w/n\in\Seta_\eta(R)$ and all sufficiently large $n$,
\(
    \Eb[A_w] \;\ge\; 2^{n\eta/2}.
\)

Fix $w$ with $w/n\in\Seta_\eta(R)$.
By Markov's inequality and~\eqref{eq:Aw_mean}:
\begin{align}
    \Pb\!\left(A_w\ge 2^{n(\Hb(w/n)+R-1+\eps)}\right)
    \;\le\;
    \frac{\Eb[A_w]}{2^{n(\Hb(w/n)+R-1+\eps)}} \nonumber \\
    \;\stackrel{\eqref{eq:mean_exp}}{=}\;
    2^{-n\eps+O(\log n)}
    \;\le\;
    e^{-c_1 n}
\end{align}
for a constant $c_1=c_1(\eps)>0$ and all sufficiently large $n$.
By Lemma~\ref{lem:Aw_moments}, $\mathrm{Var}(A_w)\le\Eb[A_w]$.
Chebyshev's inequality gives
\(
    \Pb\!\left(A_w \le \tfrac{1}{2}\Eb[A_w]\right)
    \;\le\;
    \frac{4\,\mathrm{Var}(A_w)}{\Eb[A_w]^2}
    \;\le\;
    \frac{4}{\Eb[A_w]}
    \le\;
    4\cdot 2^{-n\eta/2}
    \;\le\;
    e^{-c_2 n}
\)
, since \(\Eb[A_w] \;\ge\; 2^{n\eta/2}\),
for some $c_2=c_2(\eta)>0$ and all sufficiently large $n$.
On the complementary event $A_w>\frac{1}{2}\Eb[A_w]$,
we have $A_w\ge 1$ (so $\log_2 A_w$ is well-defined) and
\begin{align}
    \frac{1}{n} & \log_2 A_w
    \;\ge\;
    \frac{1}{n}\log_2\Eb[A_w] - \frac{1}{n} \nonumber \\
    & \stackrel{\eqref{eq:mean_exp}}{=}\;
    \Hb(w/n)+R-1 - O\!\left(\tfrac{\log n}{n}\right),
\end{align}
which lies within $\eps$ of $\Hb(w/n)+R-1$ for all large $n$.
Thus both $A_w=0$ and the lower-deviation event
$\frac{1}{n}\log_2 A_w < \Hb(w/n)+R-1-\eps$
are contained in the event $\{A_w\le\frac{1}{2}\Eb[A_w]\}$,
whose probability is at most $e^{-c_2 n}$.

There are at most $n+1$ integers $w\in\{0,\ldots,n\}$
with $w/n\in\Seta_\eta(R)$.
A union bound over these gives
\[
\Pb\!\left(\exists\, w \text{ with } \tfrac{w}{n}\in\Seta_\eta(R) 
\text{ that is $\varepsilon$-bad}\right)
\;\le\;
(n+1)\bigl(e^{-c_1 n}+e^{-c_2 n}\bigr)
\;\le\;
C\,e^{-cn},
\]
where $c=\min(c_1,c_2)/2$ and $C$ is an absolute constant
absorbing the polynomial $n+1$.

To pass from the integer grid $\{w/n\}$ to the continuous
set $\Seta_\eta(R)$: the set $\Seta_\eta(R)$ is a closed
interval $[\delta_{\min},\delta_{\max}]\subset(0,1)$ with
endpoints depending only on $\eta$ and $R$.
On this interval, $\Hb'(\delta)=\log_2\frac{1-\delta}{\delta}$
is bounded in absolute value by a constant $L=L(\eta,R)<\infty$
independent of $n$.
Therefore, for every $\delta\in\Seta_\eta(R)$,
\(
    \left|\Hb\!\left(\tfrac{\lfloor\delta n\rfloor}{n}\right)
    -\Hb(\delta)\right|
    \;\le\;
    \frac{L(\eta,R)}{n}
    \;=\;
    O(1/n),
\)
which is absorbed into $\eps$ for large $n$, justifying the
sup over $\delta\in\Seta_\eta(R)$ with $\lfloor\delta n\rfloor$
in place of $\delta n$.

For the uniform full-rank ensemble,
$\Pb(\mathrm{rank}(\Hm)<m)\le 2^{-nR}$.
For any event $E$,
\(
    \Pb(E \mid \mathrm{rank}(\Hm)=m)
    \;\le\;
    \frac{\Pb(E)}{1-2^{-nR}}.
\)
Since $(1-2^{-nR})^{-1}$ is a subexponential prefactor, the
exponential decay rate in~\eqref{eq:spectrum_conc} is unchanged.
\end{proof}

\begin{example}[Spectrum Law for the $(7,4,3)$ Hamming Code]
\label{ex:hamming_spectrum}
For the $(7,4,3)$ Hamming code, $R=4/7$ and $1-R=3/7 = 0.429$.
The high-entropy region at $\eta=0.05$ is
$\Seta_{0.05}(R)=\{\delta:\Hb(\delta)\ge3/7+0.05 = 0.479\}$,
which corresponds roughly to $\delta\in[0.092,0.908]$.
The weight fractions $\delta=w/7$ for $w=1,\ldots,6$ all fall inside
$\Seta_\eta(R)$ (since $\Hb(1/7) = 0.592>0.479$).
The asymptotic exponent for weight class $w$ is
$\Hb(w/7)+4/7-1=\Hb(w/7)-3/7$.
For $w=3$ ($\delta=3/7$): $\Hb(3/7) = 0.985$, giving
exponent $ = 0.556$.
For $w=1$ ($\delta=1/7$): the ensemble mean is
$\Eb[A_1]=\binom{7}{1}/8=0.875$.
Because $n=7$ is very small, this example should be read only as a
qualitative sanity check for the asymptotic theorem.
\end{example}

%==================================================================
\section{Partition-Function Exponent Theorem}
\label{sec:pf}
%==================================================================

Grouping $Z_\sigma(\alpha)$ by Hamming weight gives
\[
Z_\sigma(\alpha)=(1-p)^{\alpha n}\sum_w A_w\bigl(\tfrac{p}{1-p}\bigr)^{\alpha w},
\]
a discrete Laplace transform with a negative exponent, $\alpha\log(p/(1-p))<0$, since $p<\tfrac{1}{2}$.

\subsection{The Maximizing Weight Class}

Before stating the theorem we identify the weight fraction
$\delta^*$ that dominates the partition function.
Recall $f:[0,1]\to \Rb$ defined by
\[
    f(\delta)=
    \Hb(\delta)
    +\alpha\delta\log_2 p
    +\alpha(1-\delta)\log_2(1-p).
\]
Since $f''(\delta)=-1/(\delta(1-\delta)\ln 2)<0$, $f$ is strictly
concave on $(0,1)$ and has a unique maximizer.
Setting $f'(\delta^*)=0$ gives
\[
    \delta^*
    =
    \frac{p^\alpha}{p^\alpha+(1-p)^\alpha}
    \in(0,\tfrac{1}{2}),
\]
where $\delta^*<\tfrac{1}{2}$ follows from $p<\tfrac{1}{2}$.

\begin{lemma}[Value at the maximizer]
\label{lem:f_max}
\[
\displaystyle f(\delta^*)
=\log_2\!\bigl(p^\alpha+(1-p)^\alpha\bigr)
=(1-\alpha)h_\alpha(p).\]
\end{lemma}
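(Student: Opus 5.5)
The plan is a direct substitution of the closed form of $\delta^*$ into $f$, using the Gibbs-type representation of the maximizer. Write $S \triangleq p^\alpha+(1-p)^\alpha$. Then
\[
\delta^*=\frac{p^\alpha}{S}, \qquad 1-\delta^*=\frac{(1-p)^\alpha}{S},
\]
so the logarithms of these weights are affine in $\alpha\log_2 p$ and $\alpha\log_2(1-p)$:
\[
\log_2\delta^*=\alpha\log_2 p-\log_2 S, \qquad \log_2(1-\delta^*)=\alpha\log_2(1-p)-\log_2 S.
\]

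Next, expand $\Hb(\delta^*)=-\delta^*\log_2\delta^*-(1-\delta^*)\log_2(1-\delta^*)$ using these identities. This gives
\[
\Hb(\delta^*)=-\alpha\delta^*\log_2 p-\alpha(1-\delta^*)\log_2(1-p)+\bigl(\delta^*+(1-\delta^*)\bigr)\log_2 S .
\]
The first two terms are exactly the negatives of the linear part $\alpha\delta^*\log_2 p+\alpha(1-\delta^*)\log_2(1-p)$ of $f(\delta^*)$. They cancel, leaving $f(\delta^*)=\log_2 S$.

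Finally, apply the identity $\log_2(p^\alpha+(1-p)^\alpha)=(1-\alpha)h_\alpha(p)$ recorded in Section~\ref{sec:model}. This gives the second equality.

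The only point needing care is that $\delta^*$ is indeed the maximizer. Strict concavity of $f$ was already established, and $f'(\delta)=\log_2\frac{1-\delta}{\delta}+\alpha\log_2\frac{p}{1-p}$ vanishes exactly when $\frac{\delta}{1-\delta}=\left(\frac{p}{1-p}\right)^\alpha$, which is the stated $\delta^*$. I would check this in one line. There is no real obstacle; this is the standard log-sum-exp (Gibbs variational) identity $\max_\delta[\Hb(\delta)+\delta a+(1-\delta)b]=\log_2(2^a+2^b)$ with $a=\alpha\log_2 p$ and $b=\alpha\log_2(1-p)$.
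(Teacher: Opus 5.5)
Your proposal is correct and follows the paper's proof: you write $\delta^*=p^\alpha/S$ and $1-\delta^*=(1-p)^\alpha/S$ with $S=p^\alpha+(1-p)^\alpha$, substitute into $f$ so the linear terms cancel against the entropy term and leave $\log_2 S$, and then use the definition of $h_\alpha(p)$. Your check that $f'(\delta^*)=0$ is a harmless extra; the paper already establishes this before stating the lemma.
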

\begin{proof}
With \(\beta=p^\alpha+(1-p)^\alpha\), we have \(\delta^*=p^\alpha/\beta\) and
\(1-\delta^*=(1-p)^\alpha/\beta\). Substituting into \(f\) gives
\(f(\delta^*)=\log_2 \beta\), hence the claim.
\end{proof}

\begin{example}[Maximizing Weight Fraction for the $(7,4,3)$ Code]
\label{ex:hamming_delta_star}
With $p=0.1$, $\rho=1$, $\alpha=1/2$:
\[
\delta^*=\frac{(0.1)^{1/2}}{(0.1)^{1/2}+(0.9)^{1/2}}
=\frac{\sqrt{0.1}}{\sqrt{0.1}+\sqrt{0.9}}
 = 0.2500.
\]
Thus the partition function is centered near weight fraction
$\delta^* = 0.25$.
The value
\[
f(\delta^*)=\log_2(0.1^{1/2}+0.9^{1/2})
 = 0.3390\text{ bits}
\]
matches $(1-\alpha)h_{1/2}(0.1)=0.3390$ 
bits since \(h_{1/2}(0.1) =\frac{\log_2(0.1^{1/2}+0.9^{1/2})}{1/2} = 0.6781\) bits.
For the fixed $(7,4,3)$ code, this saddlepoint is only an asymptotic guide:
it does not imply that the weight-1 class must be occupied.
\end{example}

\begin{lemma}[$\delta^*$ lies in the high-entropy region]
\label{lem:delta_star_in_S}
Suppose
\begin{equation}
    \Hb(\delta^*) \;>\; 1-R
    \quad\text{and}\quad
    \Hb(p) \;>\; 1-R.
    \label{eq:subcritical}
\end{equation}
Then there exists $\eta_0=\eta_0(p,\alpha,R)>0$ such that
$\delta^*\in\Seta_\eta(R)$ for all $\eta\le\eta_0$.
\end{lemma}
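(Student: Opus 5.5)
The claim follows directly from the definition of $\Seta_\eta(R)$ and the strict inequality in~\eqref{eq:subcritical}. The set is $\Seta_\eta(R)=\{\delta\in[0,1]:\Hb(\delta)\ge 1-R+\eta\}$, so $\delta^*\in\Seta_\eta(R)$ is equivalent to $\eta\le \Hb(\delta^*)-(1-R)$. I would define
\[
\eta_0(p,\alpha,R)\;\triangleq\;\Hb(\delta^*)-(1-R),
\]
which depends only on $(p,\alpha,R)$ because $\delta^*=p^\alpha/(p^\alpha+(1-p)^\alpha)$ does. By the first half of~\eqref{eq:subcritical}, $\eta_0>0$. Then for any $\eta\le\eta_0$ we have $\Hb(\delta^*)=1-R+\eta_0\ge 1-R+\eta$, so $\delta^*\in\Seta_\eta(R)$.

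The steps, in order, are:
\begin{enumerate}
\item Recall that $\delta^*\in(0,\tfrac12)$, as established just before Lemma~\ref{lem:f_max}. Hence $\Hb(\delta^*)$ is well defined and lies in $(0,1)$.
\item Define $\eta_0$ as above and use the strict inequality $\Hb(\delta^*)>1-R$ to get positivity.
\item Verify membership by monotonicity in $\eta$: the sets $\Seta_\eta(R)$ shrink as $\eta$ grows, so membership at $\eta_0$ gives membership for all $\eta\le\eta_0$.
\end{enumerate}

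The second condition $\Hb(p)>1-R$ is not needed for this membership statement. It concerns the typical weight $p$ of the true noise, which matters later. Still, I would add a short remark to show the two conditions are compatible. The map $\alpha\mapsto\delta^*$ is increasing toward $1/2$ as $\alpha$ decreases, and $\delta^*=p$ at $\alpha=1$. So for $\alpha<1$ we have $p<\delta^*<\tfrac12$, hence $\Hb(\delta^*)>\Hb(p)$. Thus $\Hb(p)>1-R$ in fact already implies the first condition.

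If one also wants a neighborhood of $\delta^*$ inside $\Seta_\eta(R)$, I would take $\eta_0/2$ and use the continuity of $\Hb$ at $\delta^*$; this is useful for the Laplace-type argument that follows. There is no real obstacle here. The only point needing care is that $\eta_0$ must be independent of $n$, which holds because $\delta^*$ does not depend on $n$.
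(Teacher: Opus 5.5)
Your proof is correct and matches the paper's. The paper takes $\eta_0=\tfrac12(\Hb(\delta^*)-(1-R))$ instead of the full gap, but the factor $\tfrac12$ is immaterial, and membership for all $\eta\le\eta_0$ follows by the same monotonicity of $\Seta_\eta(R)$ in $\eta$. Your side remark also holds: $\delta^*/(1-\delta^*)=(p/(1-p))^\alpha$ gives $p<\delta^*<\tfrac12$ for $\alpha<1$, so $\Hb(p)>1-R$ already implies $\Hb(\delta^*)>1-R$.
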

\begin{proof}
Set $\eta_0=\tfrac{1}{2}(\Hb(\delta^*)-(1-R))>0$.
Then $\Hb(\delta^*)\ge 1-R+\eta_0$, so $\delta^*\in\Seta_{\eta_0}(R)$,
and hence $\delta^*\in\Seta_\eta(R)$ for all $\eta\le\eta_0$.
\end{proof}

\begin{example}[Subcriticality Check for the $(7,4,3)$ Code]
\label{ex:hamming_subcritical}
With $\delta^* = 0.25$, $p=0.1$, and $R=4/7$:
\(
\Hb(\delta^*)=\Hb(0.25)=-(0.25\log_2 0.25+0.75\log_2 0.75) = 0.8113\text{ bits},
1-R=3/7 = 0.4286,
\Hb(p)=\Hb(0.1) = 0.4690\text{ bits}.
\)
Both conditions in~\eqref{eq:subcritical} are satisfied:
$0.8113>0.4286$ and $0.4690>0.4286$.
Thus $\eta_0=\frac{1}{2}(0.8113-0.4286) = 0.191$,
and the dominant saddlepoint $\delta^*=0.25$ is well inside
the high-entropy region for this code, confirming that the
asymptotic analysis applies without modification.
\end{example}

\subsection{Main Concentration Result}

\begin{theorem}[Partition-Function Exponent]
\label{thm:pf_exp}
Let $\Hm \in\Fb_2^{m\times n}$ have i.i.d.\ $\mathrm{Bernoulli}(1/2)$
entries, let $\sigma\in\Fb_2^m$ be a fixed syndrome, and assume
condition~\eqref{eq:subcritical}.
Then for every $\alpha\in(0,1]$:
\begin{align}
    \frac{1}{n}\log_2 Z_\sigma(\alpha)
    &\;=\;
    R-1+(1-\alpha)h_\alpha(p)\;+\;o_{\Pb}(1),
    \label{eq:pf_exp_alpha}
\end{align}

\end{theorem}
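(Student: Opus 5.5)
The plan is to prove the two directions of \eqref{eq:pf_exp_alpha} separately: a first-moment (Markov) argument for the upper bound, and the Uniform Weight-Spectrum Law (Theorem~\ref{thm:spectrum}) at a single weight class for the lower bound. Write $\lambda_\alpha \triangleq R-1+(1-\alpha)h_\alpha(p)$ for the target exponent. By Lemma~\ref{lem:f_max}, $\lambda_\alpha = f(\delta^*)+R-1$.

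\emph{Upper bound.} I would not work with the random spectrum $A_w$ at all here, but compute the mean of the partition function directly. Exactly as in the proof of Lemma~\ref{lem:Aw_moments}, $\Pb(\Hm e^T=\sigma)=2^{-m}$ for every $e\neq\bm{0}$. Hence
\[
\Eb[Z_\sigma(\alpha)] \;\le\; \mathbf{1}\{\sigma=\bm{0}\}(1-p)^{\alpha n} + 2^{-m}\sum_{e\in\Fb_2^n}P(e)^\alpha
\;=\; \mathbf{1}\{\sigma=\bm{0}\}(1-p)^{\alpha n} + 2^{n\lambda_\alpha},
\]
because $\sum_e P(e)^\alpha=(p^\alpha+(1-p)^\alpha)^n=2^{n(1-\alpha)h_\alpha(p)}$. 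For $\sigma\neq\bm{0}$, Markov's inequality gives $\Pb\bigl(Z_\sigma(\alpha)\ge 2^{n(\lambda_\alpha+\eps)}\bigr)\le 2^{-n\eps}$, which is the upper half of the $o_{\Pb}(1)$ statement.

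\emph{The zero-syndrome case.} When $\sigma=\bm{0}$ the zero vector always lies in the coset, so $Z_{\bm{0}}(\alpha)\ge(1-p)^{\alpha n}$ deterministically. Markov then yields exponent $\max\{\alpha\log_2(1-p),\lambda_\alpha\}$. I expect this deterministic atom to be the main obstacle. To handle it I must show $\alpha\log_2(1-p)\le\lambda_\alpha$, i.e.\ $f(0)\le f(\delta^*)+R-1$, using~\eqref{eq:subcritical}. At $\alpha=1$ this inequality reads $-\log_2(1-p)\ge 1-R$, and this does \emph{not} obviously follow from $\Hb(p)>1-R$, since $\Hb(p)\ge-\log_2(1-p)$. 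So the first thing I would try is to check whether the concavity of $f$, combined with $\Hb(\delta^*)>1-R$, already forces $f(0)\le f(\delta^*)-(1-R)$. If it does not, I would restrict the statement to $\sigma\ne\bm{0}$, or read the exponent as $\max\{\lambda_\alpha,\alpha\log_2(1-p)\}$ for the zero coset. Either way this is an $O(1)$-probability event under the induced law of $\sigma$, and I would record the needed extra condition explicitly.

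\emph{Lower bound.} By Lemma~\ref{lem:delta_star_in_S}, $\delta^*\in\Seta_\eta(R)$ for some $\eta\le\eta_0$. Set $w^*=\lfloor\delta^* n\rfloor$; the discretization step at the end of the proof of Theorem~\ref{thm:spectrum} lets me use $w^*$ in place of $\delta^* n$. Outside an event of probability at most $Ce^{-cn}$, that theorem gives $A_{w^*}\ge 2^{n(\Hb(\delta^*)+R-1-\eps)}$. Keeping only this single term of the weight-grouped sum,
\[
Z_\sigma(\alpha)\;\ge\; A_{w^*}\,p^{\alpha w^*}(1-p)^{\alpha(n-w^*)}\;\ge\;2^{n(f(\delta^*)+R-1-\eps-O(1/n))}=2^{n(\lambda_\alpha-\eps-o(1))}.
\]
Since $\eps>0$ is arbitrary, combining this with the upper bound gives \eqref{eq:pf_exp_alpha}. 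The i.i.d.-versus-full-rank discrepancy costs only the subexponential factor $(1-2^{-nR})^{-1}$, exactly as at the end of the proof of Theorem~\ref{thm:spectrum}.
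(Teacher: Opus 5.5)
Your proof is correct for every fixed $\sigma\neq\bm{0}$. The lower bound is exactly the paper's Step~3: keep only the class $w^*=\lfloor\delta^* n\rfloor$ and control $A_{w^*}$ by Chebyshev via Theorem~\ref{thm:spectrum}.

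The upper bound takes a genuinely different route. The paper splits $Z_\sigma(\alpha)=Z_{\rm in}+Z_{\rm out}$, bounds $Z_{\rm in}$ class by class on the uniform spectrum event, and asserts $Z_{\rm out}=0$ w.h.p.\ from $\Eb[A_w]\le 2^{-n\eta}$ for $w/n\notin\Seta_\eta(R)$. You instead take one first moment, $\Eb[Z_\sigma(\alpha)]=2^{-m}\sum_{e\neq\bm 0}P(e)^\alpha\le 2^{n\lambda_\alpha}$, and apply Markov once. Your version is shorter and needs neither subcriticality nor uniform concentration for the upper half.

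It is also the sounder of the two. The condition $w/n\notin\Seta_\eta(R)$ only gives $\Hb(w/n)+R-1<\eta$, so for weights with $1-R<\Hb(w/n)<1-R+\eta$ the same Chebyshev argument gives $A_w\ge 1$ w.h.p., and $Z_{\rm out}=0$ fails. In addition, for $\sigma=\bm 0$ the paper applies $\Eb[A_w]=\binom{n}{w}2^{-m}$ at $w=0$, where actually $A_0=1$. The paper's route can be patched with a Markov bound on every $A_w$, $w\ge 1$, since $f(w/n)\le f(\delta^*)$ throughout; your global bound does this in one line.

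Your worry about the zero coset is justified, and the check you propose fails. Since $f(\delta^*)-f(0)=\log_2\bigl(p^\alpha+(1-p)^\alpha\bigr)-\alpha\log_2(1-p)=-\log_2(1-\delta^*)$, you would need $-\log_2(1-\delta^*)\ge 1-R$, which \eqref{eq:subcritical} does not imply.

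Take the paper's own running example, $p=0.1$, $\alpha=1/2$, $R=4/7$. Both subcriticality conditions hold, yet $-\log_2 0.75\approx 0.415<3/7\approx 0.429$. Hence $\frac1n\log_2 Z_{\bm 0}(1/2)\ge\frac12\log_2 0.9\approx -0.076$, strictly above the claimed $\lambda_{1/2}\approx -0.090$.

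At $\alpha=1$ it is worse: $\frac1n\log_2 Z_{\bm 0}(1)\ge\log_2(1-p)>R-1$ whenever $-\log_2(1-p)<1-R$. For example, this holds for $p=0.1$ and any $R\in(0.531,0.848)$, where subcriticality also holds.

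So the theorem as stated is false for $\sigma=\bm 0$, and your fix is necessary rather than a fallback. Either restrict to $\sigma\neq\bm 0$, or use the limit $\max\{\lambda_\alpha,\alpha\log_2(1-p)\}$ for the zero coset. Your Markov bound, together with the deterministic atom and the $w^*$ term, already proves the latter.

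One minor correction: under the induced law $\sigma=\Hm e^T$, the event $\{\sigma=\bm 0\}$ has exponentially small probability $(1-p)^n+\bigl(1-(1-p)^n\bigr)2^{-m}$, not merely $O(1)$. Since the statement fixes $\sigma$, this does not rescue it.
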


\begin{proof}
Choose $\eta=\eta_0/2$ (Lemma~\ref{lem:delta_star_in_S}) so
$\delta^*\in\Seta_\eta(R)$, and let $\mathcal{E}_n$ denote the
good event of Theorem~\ref{thm:spectrum} at this $\eta,\varepsilon$,
so $\Pb(\mathcal{E}_n^c)\le Ce^{-cn}$ and on $\mathcal{E}_n$, \(w/n\in\Seta_\eta(R)\) :
\begin{align}
    2^{n(\Hb(w/n)+R-1-\varepsilon)}\le A_w\le
    2^{n(\Hb(w/n)+R-1+\varepsilon)}.
    \label{eq:spectrum_event}
\end{align}

We have: 
\(Z_{\rm in}=\sum_{\substack{w=0, w/n\,\in\,\Seta_\eta(R)}}^n
        A_w\,p^{\alpha w}(1-p)^{\alpha(n-w)}\),
 \(Z_{\rm out}=\sum_{\substack{w=0, w/n\,\notin\,\Seta_\eta(R)}}^n
        A_w\,p^{\alpha w}(1-p)^{\alpha(n-w)}\), which gives
\(Z_\sigma(\alpha)=Z_{\rm in}+Z_{\rm out}\).

\noindent\textbf{Step 1: $Z_{\rm out}=0$ w.h.p.}
For $w/n\notin\Seta_\eta(R)$, Lemma~\ref{lem:Aw_moments} gives
\(
    \Eb[A_w]
    \;=\;
    \binom{n}{w}2^{-m}
    \;\le\;
    2^{n(\Hb(w/n)+R-1)}
    \;\le\;
    2^{-n\eta}.
\)
By Markov's inequality, $\Pb(A_w\ge 1)\le\Eb[A_w]\le 2^{-n\eta}$.
Taking a union bound over the at most $n+1$ such weight classes:
\(
    \Pb\!\left(\exists\,w\text{ with }w/n\notin\Seta_\eta(R)
    \text{ s.t.\ }A_w\ge 1\right)
    \;\le\;
    (n+1)2^{-n\eta}
    \;\le\;
    C'e^{-c'n}.
\)
Denote this event
$\mathcal{F}_n=\{\forall\, w \text{ with }
w/n\notin\Seta_\eta(R),\; A_w=0\}$,
so $\Pb(\mathcal{F}_n^c)\le C'e^{-c'n}$.
 
\noindent\textbf{Step 2: Upper bound on $Z_{\rm in}$ on $\mathcal{E}_n$.}
On $\mathcal{E}_n$, each term in $Z_{\rm in}$ satisfies
$A_w\le 2^{n(\Hb(w/n)+R-1+\varepsilon)}$, so:
\[
    Z_{\rm in}
    \;\le\;
    \sum_{\substack{w:w/n\,\in\,\Seta_\eta(R)}}
    2^{n(\Hb(w/n)+R-1+\varepsilon)}\,p^{\alpha w}(1-p)^{\alpha(n-w)}
    \;=\;
    \sum_{\substack{w:w/n\,\in\,\Seta_\eta(R)}}
    2^{n(f(w/n)+R-1+\varepsilon)} 
    \;\le\;
    (n+1)\cdot
    2^{n\!\left(\max_{\delta\in\Seta_\eta(R)}f(\delta)+R-1+\varepsilon\right)}.
\]
By Lemma~\ref{lem:f_max},
$\max_{\delta\in\Seta_\eta(R)}f(\delta)=f(\delta^*)=(1-\alpha)h_\alpha(p)$.
Using $(n+1)=2^{O(\log n)}$:
\begin{equation}
    \frac{1}{n}\log_2 Z_{\rm in}
    \;\le\;
    (R-1)+(1-\alpha)h_\alpha(p)+\varepsilon+O\!\left(\tfrac{\log n}{n}\right).
    \label{eq:upper_final}
\end{equation}

\noindent\textbf{Step 3: Lower bound on $Z_{\rm in}$ on $\mathcal{E}_n$.}
Let $w^*=\lfloor\delta^* n\rfloor$.
Since $\delta^*\in\Seta_\eta(R)$, the bound~\eqref{eq:spectrum_event}
applies to $w^*$ on $\mathcal{E}_n$:
\(
    Z_{\rm in}
    \;\ge\;
    A_{w^*}\,p^{\alpha w^*}(1-p)^{\alpha(n-w^*)}
    \;\ge\;
    2^{n(\Hb(w^*/n)+R-1-\varepsilon)}\cdot
    2^{n\alpha(w^*/n\cdot\log_2 p\,+\,(1-w^*/n)\cdot\log_2(1-p))}
    \;=\;
    2^{n(f(w^*/n)+R-1-\varepsilon)}.
\)
Since $f$ is Lipschitz on $\Seta_\eta(R)$ with constant
$L=L(\eta,R)<\infty$ and $|w^*/n-\delta^*|\le 1/n$:
\(
    f(w^*/n)\;\ge\; f(\delta^*)-L/n
    \;=\;(1-\alpha)h_\alpha(p)-O(1/n).
\)
Hence:
\begin{equation}
    \frac{1}{n}\log_2 Z_{\rm in}
    \;\ge\;
    (R-1)+(1-\alpha)h_\alpha(p)-\varepsilon-O\!\left(\tfrac{1}{n}\right).
    \label{eq:lower_final}
\end{equation}

\noindent\textbf{Conclusion.}
Let $\mathcal{G}_n=\mathcal{E}_n\cap\mathcal{F}_n$.
Then $\Pb(\mathcal{G}_n^c)\le C''e^{-c''n}$ for constants
depending on $\eta,R,\varepsilon$.
On $\mathcal{G}_n$, $Z_\sigma(\alpha)=Z_{\rm in}$ and
combining~\eqref{eq:upper_final}--\eqref{eq:lower_final} gives
\(
    \left|
    \frac{1}{n}\log_2 Z_\sigma(\alpha)
    -(R-1)-(1-\alpha)h_\alpha(p)
    \right|
    \;\le\;
    \varepsilon+O\!\left(\tfrac{\log n}{n}\right).
\)
Since $\varepsilon>0$ was arbitrary, \eqref{eq:pf_exp_alpha} follows.
\end{proof}

%==================================================================
\section{Proof of the Main Theorem}
\label{sec:main_proof}
%==================================================================

\begin{theorem}[Exact Guesswork Exponent]
\label{thm:main}
Let $\Hm \in\Fb_2^{m\times n}$ have i.i.d.\ $\mathrm{Bernoulli}(1/2)$
entries with $m=n(1-R)$, and let
$e\sim\mathrm{Bernoulli}(p)^{\otimes n}$, $p\in(0,\tfrac{1}{2})$,
be independent of $\Hm$.
Assume condition~\eqref{eq:subcritical}.
Then for every $\rho>0$, writing $\alpha=1/(1+\rho)$:
\begin{equation}
    \lim_{n\to\infty}
    \frac{1}{n}\log_2\Eb\!\left[G_{\mathrm{coset}}^{\rho}\right]
    \;=\;
    \rho\,h_{\alpha}(p)\;+\;\rho(R-1).
    \label{eq:main}
\end{equation}
\end{theorem}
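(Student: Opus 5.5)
The plan is to combine the global sandwich~\eqref{eq:global_sandwich} with the partition-function exponent of Theorem~\ref{thm:pf_exp}, applied at the two orders $\alpha=1/(1+\rho)$ and $\alpha=1$.

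\textbf{Exponent of the sandwich quantity.} Since $\alpha(1+\rho)=1$, the quantity in the sandwich is
\[
\Bigl(\tfrac{Z_\sigma(\alpha)}{Z_\sigma(1)^\alpha}\Bigr)^{1+\rho}=\tfrac{Z_\sigma(\alpha)^{1+\rho}}{Z_\sigma(1)}.
\]
Theorem~\ref{thm:pf_exp} at order $\alpha$ gives $\frac1n\log_2 Z_\sigma(\alpha)\to R-1+(1-\alpha)h_\alpha(p)$ in probability. At $\alpha=1$ the maximizer of $f$ is $\delta^*=p$, and condition~\eqref{eq:subcritical} ($\Hb(p)>1-R$) is exactly what places it in $\Seta_\eta(R)$. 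There $f(p)=0$, so $\frac1n\log_2 Z_\sigma(1)\to R-1$. Combining,
\[
\frac1n\log_2\frac{Z_\sigma(\alpha)^{1+\rho}}{Z_\sigma(1)} \xrightarrow{\;\Pb\;} (1+\rho)(R-1)+\rho h_\alpha(p)-(R-1)=\rho h_\alpha(p)+\rho(R-1),
\]
using $(1+\rho)(1-\alpha)=\rho$. This is the claimed exponent, and the harmonic factor $\Harm_{2^k}^\rho=2^{O(\log n)}$ is negligible.

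\textbf{Lower bound.} Let $\mathcal G_n$ be the intersection of the good events of Theorem~\ref{thm:pf_exp} at orders $\alpha$ and $1$ (and full rank), with tolerance $\varepsilon$, so that $\Pb(\mathcal G_n)\to1$. I will use Corollary~\ref{cor:coset_sandwich} conditionally and drop the complement, since the integrand is nonnegative:
\[
\Eb[G_{\mathrm{coset}}^\rho]\ge \Pb(\mathcal G_n)\,\Harm_{2^k}^{-\rho}\,2^{n(\rho h_\alpha(p)+\rho(R-1)-(2+\rho)\varepsilon)}.
\]
Since $\Pb(\mathcal G_n)\to1$, this yields the $\liminf$ bound after $\varepsilon\downarrow0$. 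One point needs care: the theorem is stated for a fixed $\sigma$, while here $\sigma=\Hm e^T$ is random. I handle this by conditioning on $e$: for $e\neq\bm0$, $\Hm e^T$ is uniform and the spectrum bounds hold uniformly in $\sigma$, so the good event has the same probability bound averaged over $e$.

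\textbf{Upper bound (main obstacle).} Convergence in probability does not control the expectation on the bad event. The trivial bound $G^\rho\le 2^{k\rho}$ times $Ce^{-cn}$ is not small enough unless $c$ is large, so I will instead bound $\Eb[G^\rho]$ directly over the ensemble. Write $G_{\mathrm{coset}}(e)\le 1+X(e)$, where
\[
X(e)=\#\{c\in\Cb(\Hm)\setminus\{\bm0\}:\wH(e\oplus c)\le\wH(e)\}.
\]

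For $\rho\le1$, Jensen's inequality over $\Hm$ together with $\Pb(\Hm c^T=\bm0)=2^{-m}$ gives
\[
\Eb[G^\rho\mid e]\le \bigl(1+2^{-m}\,|\{e':\wH(e')\le\wH(e)\}|\bigr)^\rho.
\]
Averaging over $\wH(e)=\delta n$ yields the exponent
\[
\sup_\delta\bigl[\Hb(\delta)+\ell(\delta)+\rho(\Hb(\delta)+R-1)^+\bigr].
\]
On the branch where the positive part is active, this equals $\sup_\delta[(1+\rho)\Hb(\delta)+\ell(\delta)]+\rho(R-1)=\rho h_\alpha(p)+\rho(R-1)$, by Lemma~\ref{lem:f_max} scaled by $1+\rho$. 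On the other branch, $\Hb+\ell\le0$, and this is dominated by the active branch because the subcriticality condition $\Hb(\delta^*)>1-R$ places the joint maximizer inside the active region.

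For $\rho>1$, I would first bound integer moments $\Eb[X^j\mid e]$, $j=\lceil\rho\rceil$, by expanding over $j$-tuples of codewords. A tuple whose span has rank $r$ is contained in the code with probability $2^{-mr}$, and the number of tuples of rank $r$ inside the weight ball is at most $|\text{ball}|^r\cdot 2^{O(j^2)}$. This gives
\[
\Eb[X^j\mid e]\le 2^{O(1)}\max\bigl(\Eb[X\mid e],\,\Eb[X\mid e]^j\bigr),
\]
and Lyapunov's inequality then interpolates to $\rho$. Checking this rank-counting step carefully is where I expect the real work to lie. The resulting exponent matches the lower bound, giving~\eqref{eq:main}.
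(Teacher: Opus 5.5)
Your proposal is correct. It departs from the paper at exactly the step that matters.

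\textbf{What matches the paper.} Your exponent computation for $\varphi_\sigma^{1+\rho}=Z_\sigma(\alpha)^{1+\rho}/Z_\sigma(1)$ is the paper's Step~1. Your lower bound (restrict to the good event and apply the conditional sandwich) is its Step~3.

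\textbf{Where you differ: the upper bound.} The paper passes from $o_{\Pb}(1)$ to expectation by truncation. It uses $W\le 2^{nD}$ almost surely and $\Pb(\mathcal{G}_n^c)\le C''e^{-c''n}$, and asserts that $2^{nD}C''e^{-c''n}$ is negligible. That is precisely the obstruction you flagged. The rate $c''$ is inherited from the Markov/Chebyshev rates, of order $\min(\varepsilon,\eta)$, and shrinks as $\varepsilon\downarrow 0$. Meanwhile $D$ is a fixed constant, already more than $4$ bits for $p=0.1$, $\rho=1$. So the paper's Step~2 does not close as written.

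Your argument avoids this. Let $S$ be the set of nonzero $c$ with $\wH(e\oplus c)\le\wH(e)$, and $V(w)$ the Hamming-ball volume. You combine:
\begin{itemize}
\item $G_{\mathrm{coset}}\le 1+X(e)$;
\item the exact law $\Pb(\Hm c_i^T=\bm 0\ \forall i)=2^{-mr}$ for a $j$-tuple of rank $r$;
\item the count $\binom{j}{r}2^{r(j-r)}|S|^r$ of such tuples in $S$;
\item Lyapunov's inequality.
\end{itemize}
This yields $\Eb[G_{\mathrm{coset}}^\rho\mid e]\le C_\rho\max\{1,2^{-m}V(\wH(e))\}^\rho$ with no concentration input at all. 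The resulting optimization equals $\max\{0,\rho h_\alpha(p)+\rho(R-1)\}$. Combined with the lower sandwich, it even recovers a posteriori the paper's Step~2 claim that $\Eb[\varphi_\sigma^{1+\rho}]$ has exponent $\Lambda(\rho)$.

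\textbf{What your route gives up.} It is less portable. It exploits the exact $2^{-mr}$ law of the i.i.d.\ ensemble, whereas the paper's template uses only spectrum concentration. The transfer, universality and LDPC results reuse that template, so they inherit the same tail-control issue. Your method would have to be redone ensemble by ensemble to repair them.

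\textbf{Small repair 1: the random syndrome.} Uniformity of $\Hm e^T$, together with bounds uniform over a fixed $\sigma$, does not by itself control the good event. Conditioning on $\Hm e^T=\sigma$ changes the law of $\Hm$, since it forces $e$ into the coset. Instead, redo Theorem~\ref{thm:spectrum} directly for the translate $e\oplus\Cb(\Hm)$. There $A_w=\mathbf 1[\wH(e)=w]+\sum_{c\neq\bm 0,\ \wH(e\oplus c)=w}\mathbf 1[\Hm c^T=\bm 0]$ is one deterministic term plus pairwise independent indicators of mean $2^{-m}$. So Markov and Chebyshev go through verbatim. The extra element is harmless because $\wH(e)/n\to p$ with $\Hb(p)>1-R$.

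\textbf{Small repair 2: the $\rho\le 1$ optimization.} The inactive branch is bounded by $\sup_\delta[\Hb(\delta)+\ell(\delta)]=0$. So what you need is $h_\alpha(p)\ge 1-R$. This comes from the second condition via $h_\alpha(p)\ge\Hb(p)>1-R$, not from $\Hb(\delta^*)>1-R$; in fact $h_\alpha(p)\le\Hb(\delta^*)$. Also, the ball exponent is $\Hb(\min\{\delta,1/2\})$ rather than $\Hb(\delta)$. This changes nothing, since $\delta>1/2$ is dominated by $\delta=1/2$.
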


\begin{proof}
Set $\alpha=1/(1+\rho)$ and
$\varphi_\sigma = Z_\sigma(\alpha)/Z_\sigma(1)^\alpha$.

\noindent\textbf{Step 1: Exponent of $\varphi_\sigma$.}
From Theorem~\ref{thm:pf_exp},
$\frac{1}{n}\log_2 Z_\sigma(\alpha) = (R-1)+(1-\alpha)h_\alpha(p)+o_\Pb(1)$
and $\frac{1}{n}\log_2 Z_\sigma(1) = R-1+o_\Pb(1)$.
(the second part of condition~\eqref{eq:subcritical}
ensures $p=\delta^*|_{\alpha=1}\in\Seta_\eta(R)$,
so Theorem~\ref{thm:pf_exp} applies at $\alpha=1$.)
Therefore:
\(
    \frac{1}{n}\log_2\varphi_\sigma
    \;=\;
    \frac{1}{n}\log_2 Z_\sigma(\alpha)
    -\alpha\cdot\frac{1}{n}\log_2 Z_\sigma(1)
    \;=\;
    \bigl[(R-1)+(1-\alpha)h_\alpha(p)\bigr]
    -\alpha(R-1)
    +o_{\Pb}(1)
    \;=\;
    (1-\alpha)\bigl[h_\alpha(p)+(R-1)\bigr]+o_{\Pb}(1).
\)
Multiplying by $(1+\rho)$ and using $(1+\rho)(1-\alpha)=\rho$:
\begin{equation}
    \frac{1}{n}\log_2\varphi_\sigma^{1+\rho}
    \;=\;
    \rho\bigl[h_\alpha(p)+(R-1)\bigr]+o_{\Pb}(1).
    \label{eq:phi_rate}
\end{equation}

\noindent\textbf{Step 2: From $o_\Pb(1)$ to expectation.}
We need
\begin{equation}
    \frac{1}{n}\log_2\Eb\!\left[\varphi_\sigma^{1+\rho}\right]
    \;=\;
    \rho h_\alpha(p)+\rho(R-1)+o(1).
    \label{eq:phi_moment}
\end{equation}
Let $\Lambda=\rho h_\alpha(p)+\rho(R-1)$ and
$W=\varphi_\sigma^{1+\rho}/2^{n\Lambda}$,
so $\frac{1}{n}\log_2 W=o_\Pb(1)$ by~\eqref{eq:phi_rate}.
Let $\mathcal{G}_n$ be the good event of Theorem~\ref{thm:pf_exp}
at parameter $\varepsilon$; on $\mathcal{G}_n$,
$2^{-n(1+\rho)\varepsilon}\le W\le 2^{n(1+\rho)\varepsilon}$
and $\Pb(\mathcal{G}_n^c)\le C''e^{-c''n}$.
Since $Z_\sigma(\alpha)\le(p^\alpha+(1-p)^\alpha)^n$ and
$Z_\sigma(1)\ge p^n$, we have $W\le 2^{nD}$ a.s.\ for a finite
constant $D$,
where $D=(1+\rho)\bigl[(1-\alpha)h_\alpha(p)
+\alpha(1-R)+\alpha\log_2(1/p)\bigr]<\infty$.
Splitting on $\mathcal{G}_n$, \(\Eb[W]\le\Eb[W\mathbf{1}_{\mathcal{G}_n}] +\Eb[W\mathbf{1}_{\mathcal{G}_n^c}]\) gives
\(
2^{-n(1+\rho)\varepsilon}(1-C''e^{-c''n})
\;\le\;\Eb[W]\;\le\;
2^{n(1+\rho)\varepsilon}+2^{nD}\cdot C''e^{-c''n}.
\)
As $\varepsilon\to 0$ and $n\to\infty$, both bounds tend to $1$,
giving $\frac{1}{n}\log_2\Eb[W]\to 0$, i.e.,~\eqref{eq:phi_moment}.

\noindent\textbf{Step 3: Applying the sandwich.}
From~\eqref{eq:global_sandwich}:
\begin{equation}
    \frac{\Eb[\varphi_\sigma^{1+\rho}]}{\Harm_{2^k}^\rho}
    \;\le\;
    \Eb\!\left[G_{\mathrm{coset}}^\rho\right]
    \;\le\;
    \Eb\!\left[\varphi_\sigma^{1+\rho}\right],
    \label{eq:global_sandwich_used}
\end{equation}
where $\Harm_{2^k}=\sum_{j=1}^{2^k}j^{-1}\le 1+k\ln 2=1+nR\ln 2$
is the $2^k$-th harmonic number.
Taking $\frac{1}{n}\log_2$ in~\eqref{eq:global_sandwich_used}
and using~\eqref{eq:phi_moment}:
\begin{align}
    \rho h_\alpha(p)+\rho(R-1)- & O\!\left(\tfrac{\log n}{n}\right)+o(1)
    \;\le\;
    \frac{1}{n}\log_2\Eb[G_{\rm coset}^\rho]
    \notag\\
    &\;\le\;
    \rho h_\alpha(p)+\rho(R-1)+o(1).
    \label{eq:final_sandwich}
\end{align}
\end{proof}

\begin{example}[Guesswork Exponent for the $(7,4,3)$ Code]
\label{ex:hamming_main}
Theorem~\ref{thm:main} gives the asymptotic exponent for the
$(7,4,3)$ Hamming code with $p=0.1$, $\rho=1$, $\alpha=1/2$:
\begin{align*}
\Lambda(1)
=\rho\,h_{1/2}(0.1)+\rho(R-1)
 = 0.2495\text{ bits}.
\end{align*}
This means that for large blocklengths at rate $4/7$,
$\Eb[G_{\mathrm{coset}}^1] =  2^{0.2495n}$.
The unconstrained Ar{\i}kan--Merhav exponent would be
$\Lambda_{\rm unc}(1)=h_{1/2}(0.1) = 0.6781$ bits, so the syndrome
constraint reduces only the \emph{first-order exponent} by $1-R=3/7$ bits.
At $n=7$ this is only a rough heuristic; it should not be read as a
finite-length prediction.
\end{example}

\begin{corollary}[Exact Complexity Reduction]
\label{cor:reduction}
Write $\Lambda_{\mathrm{unc}}(\rho)=\rho h_\alpha(p)$ for the
unconstrained Ar{\i}kan--Merhav exponent.  Then:
\(
    \Lambda(\rho)
    \;=\;
    \Lambda_{\mathrm{unc}}(\rho)\;-\;\rho(1-R).
\)
Each of the $n(1-R)$ parity-check constraints reduces the guesswork
exponent by exactly $\rho/n$, and the total reduction $\rho(1-R)$
is tight: equality is achieved in the limit.
\end{corollary}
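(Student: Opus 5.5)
The corollary is a rewriting of Theorem~\ref{thm:main}, together with a bookkeeping interpretation of the gap; I will argue it in two short steps.

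First, identify $\Lambda_{\mathrm{unc}}(\rho)=\rho h_\alpha(p)$ as the unconstrained Ar{\i}kan--Merhav exponent. Take $\mathcal S=\Fb_2^n$ with $Q=P^{\otimes n}$ in Theorem~\ref{thm:sandwich}. Then $Z(\alpha)=(p^\alpha+(1-p)^\alpha)^n=2^{n(1-\alpha)h_\alpha(p)}$. Since $(1+\rho)(1-\alpha)=\rho$, the sandwich gives $\frac1n\log_2\Eb[G^\rho]=\rho h_\alpha(p)+O(\log n/n)$, because $\Harm_{2^n}\le 1+n\ln 2$. Next, under condition~\eqref{eq:subcritical}, Theorem~\ref{thm:main} gives the limit $\Lambda(\rho)=\rho h_\alpha(p)+\rho(R-1)$. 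Subtracting the two yields $\Lambda(\rho)=\Lambda_{\mathrm{unc}}(\rho)-\rho(1-R)$ exactly.

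Second, for the per-check statement I will write $\rho(1-R)=\frac{\rho}{n}\cdot m$ with $m=n(1-R)$. So the exponent reduction, measured in $\frac1n\log_2$ units, is $m$ equal increments of $\rho/n$. At the finite-$n$ level this can be made concrete as follows. By Theorem~\ref{thm:pf_exp}, $Z_\sigma(\alpha)\approx 2^{-m}(p^\alpha+(1-p)^\alpha)^n$, since each uniform parity check halves the expected mass landing in the coset ($\Eb[A_w]=\binom nw 2^{-m}$ by Lemma~\ref{lem:Aw_moments}). In $\varphi_\sigma^{1+\rho}=(Z_\sigma(\alpha)/Z_\sigma(1)^\alpha)^{1+\rho}$, each check therefore contributes a factor $2^{-(1-\alpha)(1+\rho)}=2^{-\rho}$, i.e.\ $-\rho/n$ to the normalized exponent.

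For tightness, I will note that the matching upper and lower bounds in \eqref{eq:final_sandwich} show the reduction is attained in the limit, not merely bounded.

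The only delicate point is that the "per-check" attribution is interpretive. Rigorously, only the total reduction $\rho(1-R)$ is proved, since Theorem~\ref{thm:main} concerns the full limit rather than adding checks one at a time. I will phrase that claim as the decomposition $\rho(1-R)=m\cdot\rho/n$, supported by the $2^{-m}$ factorization of the mean weight enumerator, and not as a separate monotonicity theorem.
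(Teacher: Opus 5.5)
Your proposal is correct and matches the paper, which treats the corollary as an immediate rewriting of Theorem~\ref{thm:main} via $\rho(R-1)=-\rho(1-R)=-m\cdot\rho/n$, with tightness inherited from the matching bounds in~\eqref{eq:final_sandwich}. The paper takes $\Lambda_{\mathrm{unc}}(\rho)=\rho h_\alpha(p)$ from Ar{\i}kan, and your self-contained derivation of it from Theorem~\ref{thm:sandwich} on $\Fb_2^n$ is a sound substitute; your caveat that the per-check attribution is only a decomposition of the linear coefficient, not a separately proved claim, is accurate.
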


\begin{lemma}[Query Budget Threshold]
\label{cor:budget}
Let $B_n=2^{n\gamma}$ be a decoder query budget.
Under the conditions of Theorem~\ref{thm:main}, for every $\rho>0$:
\(
    \Pb\!\left(G_{\mathrm{coset}}>B_n\right)
    \;\le\;
    2^{n[\Lambda(\rho)-\rho\gamma]+o(n)}.
\)
In particular, if $\gamma>h_\alpha(p)+(R-1)$, then
$\Pb(G_{\mathrm{coset}}>B_n)\to 0$ exponentially in $n$.
\end{lemma}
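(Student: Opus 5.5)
The plan is to apply Markov's inequality to the $\rho$-th moment and then use Theorem~\ref{thm:main}. Since $G_{\mathrm{coset}}\ge 1$ and $t\mapsto t^\rho$ is increasing for $\rho>0$, the event $\{G_{\mathrm{coset}}>B_n\}$ equals $\{G_{\mathrm{coset}}^\rho>B_n^\rho\}$. Markov's inequality then gives
\[
\Pb\!\left(G_{\mathrm{coset}}>B_n\right)\le \frac{\Eb\!\left[G_{\mathrm{coset}}^\rho\right]}{2^{n\rho\gamma}}.
\]
By Theorem~\ref{thm:main}, $\frac{1}{n}\log_2\Eb[G_{\mathrm{coset}}^\rho]=\Lambda(\rho)+o(1)$, i.e.\ $\Eb[G_{\mathrm{coset}}^\rho]=2^{n\Lambda(\rho)+o(n)}$. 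Substituting yields the bound $2^{n[\Lambda(\rho)-\rho\gamma]+o(n)}$, which holds simultaneously for every $\rho>0$. Only the upper half of the sandwich \eqref{eq:final_sandwich} is needed here, so the $o(n)$ term comes purely from Step~2 of that proof.

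For the ``in particular'' claim, substitute $\Lambda(\rho)=\rho[h_\alpha(p)+R-1]$ with $\alpha=1/(1+\rho)$. The exponent becomes
\[
\rho\bigl[h_{1/(1+\rho)}(p)+R-1-\gamma\bigr].
\]
For a fixed $\rho$ with $\gamma>h_\alpha(p)+(R-1)$, the bracket is a strictly negative constant $-\kappa$. The bound is then $2^{-n\rho\kappa+o(n)}$, and since $o(n)/n\to0$ we get $\le 2^{-n\rho\kappa/2}$ for all large $n$. This is exponential decay.

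The main obstacle is interpretive rather than technical. The hypothesis $\gamma>h_\alpha(p)+(R-1)$ depends on $\rho$ through $\alpha$, so I would state explicitly that $\rho$ is the same fixed value used in the bound. Since $h_\alpha$ is nonincreasing in $\alpha$, one could also optimize over $\rho$: any $\gamma$ above $\inf_{\rho>0}[h_{1/(1+\rho)}(p)]+R-1=\Hb(p)+R-1$ suffices. Under \eqref{eq:subcritical} this threshold is positive. I would include this as a remark but keep the proof at fixed $\rho$, which is all the statement requires.
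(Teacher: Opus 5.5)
Your proposal is correct and follows the paper's proof: you apply Markov's inequality to $G_{\mathrm{coset}}^\rho$ at threshold $B_n^\rho=2^{n\rho\gamma}$, substitute the exponent $\Lambda(\rho)$ from Theorem~\ref{thm:main} (only its upper-bound half is needed), and read off exponential decay from the negativity of $\rho\bigl[h_{1/(1+\rho)}(p)+R-1-\gamma\bigr]$ at fixed $\rho$. Your side remark is also valid but goes beyond the statement: optimizing over $\rho$ lowers the threshold to $\Hb(p)+R-1$, and this works because $p<\delta^*<\tfrac12$ makes the subcriticality condition hold for every $\rho$ once $\Hb(p)>1-R$.
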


\begin{proof}
Markov's inequality applied to $G_{\mathrm{coset}}^\rho$ gives
\(
    \Pb(G_{\mathrm{coset}}>B_n)
    \;\le\;
    \frac{\Eb[G_{\mathrm{coset}}^\rho]}{B_n^\rho}
    \;=\;
    2^{\frac{1}{n}\log_2\Eb[G_{\mathrm{coset}}^\rho]\cdot n
       -n\rho\gamma}.
\)
Theorem~\ref{thm:main} gives
$\frac{1}{n}\log_2\Eb[G_{\mathrm{coset}}^\rho]
=\rho h_\alpha(p)+\rho(R-1)+o(1)$,
so the exponent equals $n[\Lambda(\rho)-\rho\gamma]+o(n)$.
When $\gamma>h_\alpha(p)+(R-1)$, the coefficient
$\Lambda(\rho)-\rho\gamma<0$ and the bound decays
exponentially.
\end{proof}

\begin{example}[Query Budget for the $(7,4,3)$ Hamming Code]
\label{ex:hamming_budget}
Let $B_n=2^{n\gamma}$ be the decoder budget.
From Example~\ref{ex:hamming_main}, $\Lambda(1) = 0.2495$ bits.
The threshold is $\gamma^*=h_{1/2}(0.1)+(R-1) = 0.2495$.
\begin{itemize}
\item If $\gamma=0.30>\gamma^*$, then
\[
\Pb(G_{\mathrm{coset}}>2^{0.3n})
\le 2^{n(\gamma^*-0.30)+o(n)}
=2^{-0.0505n+o(n)},
\]
which decays exponentially in $n$.
\item If $\gamma=\gamma^*$, the bound is asymptotically neutral
and therefore not informative.
\item For $n=7$, this bound is only a loose Markov estimate and should
not be interpreted as an exact success probability.
\end{itemize}
\end{example}

\begin{remark}[Communication-theoretic interpretation]
The threshold $\gamma>h_\alpha(p)+(R-1)$ is the syndrome-aware analogue of the unconstrained GRAND budget threshold $\gamma>h_\alpha(p)$. Thus, observing the syndrome reduces the required query exponent by $1-R$.
\end{remark}

\begin{example}[Communication-Theoretic Interpretation for $(7,4,3)$]
\label{ex:hamming_comm}
For the $(7,4,3)$ Hamming code, the syndrome reduces the
required budget threshold from
$h_{1/2}(0.1) = 0.6781$ bits (unconstrained GRAND)
to $h_{1/2}(0.1)+(R-1) = 0.2495$ bits (syndrome-aware GRAND).
In concrete terms, to guarantee $\Pb(G_{\mathrm{coset}}>B_n)\to0$,
unconstrained GRAND needs budget $B_n=2^{0.6781n}$ while
syndrome-aware GRAND needs only $B_n=2^{0.2495n}$: at $n=70$,
this is $2^{47.5} = 1.9\times10^{14}$ vs.\
$2^{17.5} = 1.9\times10^5$, roughly a $10^9$-fold saving
from the 3 syndrome bits per block.
\end{example}

%==================================================================
% NEW SECTION: Transfer Theorem (Session 1 blue)
%==================================================================

\section{Transfer Theorem: From Spectrum to Guesswork}
\label{sec:transfer}

The proof of Theorem~\ref{thm:pf_exp} rested on a single structural
fact: the weight enumerator $A_w(\Hm,\sigma)$ concentrates at
exponential rate $g(\delta)\triangleq\Hb(\delta)+R-1$ inside the high-entropy
region.  We now abstract this observation into a \emph{transfer
theorem} that converts any spectrum growth rate $g(\delta)$ into a
partition-function exponent and, via the sandwich of
Theorem~\ref{thm:sandwich}, into a guesswork exponent.
Equation~\eqref{eq:spectrum_boxed} states the concentration behavior of the random full-rank ensemble in the notation of Definition~\ref{def:g_delta}.

\begin{theorem}[Spectrum Concentration]
\label{thm:spectrum_boxed}
Under the conditions of Theorem~\ref{thm:spectrum}, for every
$\delta\in(0,1)$ with $\Hb(\delta)>1-R$:
\begin{equation}
    \frac{1}{n}\log_2 A_{\lfloor n\delta\rfloor}(\Hm,\sigma)
    \;=\;
    g(\delta)\;+\;o_{\Pb}(1).
\label{eq:spectrum_boxed}
\end{equation}
Outside this region ($\Hb(\delta)\le 1-R$), $A_{\lfloor n\delta\rfloor}=0$
with high probability, so the correct unified growth rate is
$g(\delta)=(\Hb(\delta)+R-1)^+$.
\end{theorem}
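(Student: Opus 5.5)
The plan is to read Theorem~\ref{thm:spectrum_boxed} as a pointwise restatement of Theorem~\ref{thm:spectrum}, and to obtain the region $\Hb(\delta)\le 1-R$ from a first-moment bound like Step~1 of Theorem~\ref{thm:pf_exp}. I will handle the two regimes of $\delta$ separately.

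\textbf{Interior region.} Fix $\delta\in(0,1)$ with $\Hb(\delta)>1-R$, and set $\eta=\tfrac12(\Hb(\delta)-(1-R))>0$, so that $\delta$ lies in the interior of $\Seta_\eta(R)$. Let $w_n=\lfloor n\delta\rfloor$. By continuity of $\Hb$, $|\Hb(w_n/n)-\Hb(\delta)|\le L(\eta,R)/n$, which is exactly the grid-to-continuum estimate at the end of the proof of Theorem~\ref{thm:spectrum}. Hence $w_n/n\in\Seta_{\eta/2}(R)$ for all large $n$.

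Now fix $\eps>0$ and apply Theorem~\ref{thm:spectrum} with parameters $(\eta/2,\eps)$. Off an event of probability at most $Ce^{-cn}$, the class $w_n$ is not $\eps$-bad. This means $A_{w_n}\ge1$ and
\[
\Bigl|\tfrac1n\log_2 A_{w_n}-(\Hb(w_n/n)+R-1)\Bigr|\le\eps .
\]
Combining this with the $O(1/n)$ Lipschitz correction gives $|\tfrac1n\log_2 A_{w_n}-g(\delta)|\le 2\eps$ with probability tending to one. Since $\eps$ is arbitrary, this is the statement $o_\Pb(1)$. Here $g(\delta)=\Hb(\delta)+R-1$ coincides with $(\Hb(\delta)+R-1)^+$ because it is positive.

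\textbf{Exterior region.} Suppose $\Hb(\delta)<1-R$. Lemma~\ref{lem:Aw_moments} and the bound $\binom{n}{w}\le 2^{n\Hb(w/n)}$ give
\[
\Eb[A_{w_n}]\le 2^{n(\Hb(w_n/n)+R-1)}\le 2^{-n\kappa}
\]
for some $\kappa>0$ and large $n$. Markov's inequality then yields $\Pb(A_{w_n}\ge1)\to0$, so $A_{w_n}=0$ with high probability.

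The boundary case $\Hb(\delta)=1-R$ is the main obstacle. There $\Eb[A_{w_n}]=2^{-O(\log n)}\cdot 2^{n(\Hb(w_n/n)-\Hb(\delta))}$, which is only subexponential, so ``$A=0$ w.h.p.''\ may fail. I would argue only the weaker statement that actually matters for the growth rate. By Markov, $\Pb(A_{w_n}\ge 2^{n\eps})\le 2^{-n\eps+o(n)}$, so $\tfrac1n\log_2\max(A_{w_n},1)\le\eps$ w.h.p. This is consistent with $g(\delta)=0$ under the convention $\log_2 0\mapsto$ the floor value of $(\cdot)^+$.

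The same convention interprets ``correct unified growth rate $(\Hb(\delta)+R-1)^+$'' on the whole exterior: the statement there is that the enumerator is empty or subexponential, rather than literal convergence of $\log_2 0$.

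Finally, I would transfer both regimes from the i.i.d.\ model to the full-rank ensemble. The same conditioning bound $\Pb(E\mid\mathrm{rank}(\Hm)=m)\le \Pb(E)/(1-2^{-nR})$ used at the end of Theorem~\ref{thm:spectrum} does this.
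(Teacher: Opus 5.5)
Your interior argument is the paper's proof made explicit. The paper calls the statement an immediate restatement of Theorem~\ref{thm:spectrum}, and your specialization to a single $\delta$, with the $O(1/n)$ correction for $\lfloor n\delta\rfloor$, is what that sentence means. Your Markov argument for the strict exterior $\Hb(\delta)<1-R$ is likewise the first-moment remark the paper attaches to Theorem~\ref{thm:spectrum}. You are also right that ``$A=0$ w.h.p.'' and ``growth rate $(\Hb(\delta)+R-1)^+=0$'' only fit together under a convention for $\log_2 0$. The paper glosses over this point.

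The gap is the boundary $\Hb(\delta)=1-R$, which the statement explicitly includes. The paper's own proof does not cover it either, since the remark inside Theorem~\ref{thm:spectrum} is stated only for strict inequality. You assert that $A_{w_n}=0$ w.h.p.\ ``may fail'' there and fall back to the weaker bound $\frac1n\log_2\max(A_{w_n},1)\le\eps$. As written, your proposal therefore does not prove the claim at the boundary, and the obstacle you cite is not real. Take $m=n(1-R)=n\Hb(\delta)$ and the Robbins--Stirling bound $\binom{n}{w}\le\sqrt{n/(2\pi w(n-w))}\,2^{n\Hb(w/n)}$, valid for $1\le w\le n-1$. Then
\[
\Eb[A_{w_n}]\;\le\;\sqrt{\tfrac{n}{2\pi w_n(n-w_n)}}\;2^{\,n(\Hb(w_n/n)-\Hb(\delta))}\;=\;O(n^{-1/2}).
\]
The exponential factor is $2^{O(1)}$ because $|w_n/n-\delta|<1/n$ and $\Hb'$ is bounded near $\delta$. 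So the ``subexponential'' factor you identified is actually a decaying $\Theta(n^{-1/2})$ term. Markov then gives $\Pb(A_{w_n}\ge 1)=O(n^{-1/2})\to 0$ directly, and your full-rank conditioning step carries this over unchanged. With this fix, the boundary is handled by the same first-moment argument as the strict exterior, and no convention is needed for the ``$A=0$ w.h.p.'' part of the statement.
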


\begin{proof}
This is an immediate restatement of Theorem~\ref{thm:spectrum}
in the notation of Definition~\ref{def:g_delta}.
\end{proof}

\subsection{Transfer Theorem}

Let $\ell:[0,1]\to\mathbb{R}$ be the log-probability slope for
Bernoulli$(p)$ noise:
\begin{equation}
    \ell(\delta)
    \;=\;
    \delta\log_2 p\;+\;(1-\delta)\log_2(1-p),
    \quad \delta\in[0,1].
    \label{eq:ell_def}
\end{equation}
Note that $\ell(\delta)<0$ for all $\delta\in[0,1)$ and
$f(\delta)=g(\delta)+\alpha\,\ell(\delta)$ in the notation of
Section~\ref{sec:pf}.

\begin{theorem}[Transfer Theorem]
\label{thm:transfer}
Let $g:[0,1]\to\mathbb{R}\cup\{-\infty\}$ be any function such that
\begin{equation}
    \frac{1}{n}\log_2 A_{\lfloor n\delta\rfloor}(\Hm,\sigma)
    \;\xrightarrow{\;\Pb\;}\;
    g(\delta)
    \quad\text{for every }\delta\in(0,1),
    \label{eq:g_assumption}
\end{equation}
and assume that $g$ is continuous on its effective domain
$\{\delta:g(\delta)>-\infty\}$ and that the supremum
\begin{equation}
    \psi_\alpha(g)
    \;\triangleq\;
    \sup_{\delta\in[0,1]}
    \bigl[g(\delta)+\alpha\,\ell(\delta)\bigr]
    \label{eq:Gamma_def}
\end{equation}
is achieved at a unique interior point $\delta^*(\alpha)\in(0,1)$.
Then:
\begin{equation}
    \frac{1}{n}\log_2 Z_\sigma(\alpha)
    \;=\;
    \psi_\alpha(g)
    \;+\;o_{\Pb}(1).
\label{eq:transfer_boxed}
\end{equation}
Consequently, the constrained guesswork exponent satisfies
\begin{equation}
    \Lambda(\rho)
    \;=\;
    (1+\rho)\,\psi_{1/(1+\rho)}(g)
    \;-\;
    \rho\,\psi_1(g),
    \label{eq:transfer_exponent}
\end{equation}
where $\alpha=1/(1+\rho)$ and both suprema are achieved.
\end{theorem}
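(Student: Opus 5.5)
The plan follows the architecture of Theorem~\ref{thm:pf_exp} and Theorem~\ref{thm:main}, with the explicit spectrum $\Hb(\delta)+R-1$ replaced by the abstract rate $g$. We write $w_\delta=\lfloor n\delta\rfloor$ and use the identity
\[
Z_\sigma(\alpha)=\sum_{w=0}^n A_w(\Hm,\sigma)\,2^{n\alpha\ell(w/n)}.
\]
This turns \eqref{eq:transfer_boxed} into a discrete Laplace principle for the random sequence $\frac1n\log_2 A_w$. The argument has three stages: a lower bound, an upper bound, and the passage from the partition function to the guesswork moment.

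\textbf{Lower bound.} Fix $\eps>0$. Let $\delta^*=\delta^*(\alpha)$ be the interior maximizer, so $g(\delta^*)>-\infty$. Keeping only the term $w=w_{\delta^*}$ gives
\[
\tfrac1n\log_2 Z_\sigma(\alpha)\ \ge\ \tfrac1n\log_2 A_{w_{\delta^*}}+\alpha\,\ell(w_{\delta^*}/n).
\]
By \eqref{eq:g_assumption} at the single point $\delta^*$, the first term is at least $g(\delta^*)-\eps$ with probability tending to one. The function $\ell$ is affine, so the second term is $\alpha\ell(\delta^*)+O(1/n)$. Hence $\frac1n\log_2 Z_\sigma(\alpha)\ge\psi_\alpha(g)-2\eps$ with probability $1-o(1)$. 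This direction uses the hypothesis exactly as stated, at one point only.

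\textbf{Upper bound.} Since there are only $n+1$ weight classes,
\[
\tfrac1n\log_2 Z_\sigma(\alpha)\le \max_w\bigl[\tfrac1n\log_2 A_w+\alpha\ell(w/n)\bigr]+O\!\left(\tfrac{\log n}{n}\right).
\]
The steps are as follows.
\begin{enumerate}
\item Pick a finite grid $\delta_1<\dots<\delta_J$ in $(0,1)$ with mesh $\tau$.
\item Use continuity of $g$ on its effective domain, together with uniqueness of $\delta^*$, to get $\sup_{|\delta-\delta_j|\le\tau}[g(\delta)+\alpha\ell(\delta)]\le\psi_\alpha(g)+\eps$ for each $j$.
\item Apply \eqref{eq:g_assumption} at the $J$ grid points and take a union bound over the fixed number $J$ of events. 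This controls all grid classes simultaneously with probability $1-o(1)$.
\item Cover the boundary strips $[0,\delta_1)$ and $(\delta_J,1]$ using the deterministic bound $A_w\le\binom{n}{w}$. Since $\ell$ is strictly decreasing in $\delta$, $\Hb(\delta)+\alpha\ell(\delta)$ near $\delta=1$ is far below $\psi_\alpha(g)$. Near $\delta=0$, choose $\delta_1$ small enough, using that $\ell(0)<0$ and $\psi_\alpha(g)$ is attained at an interior point.
\end{enumerate}

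\textbf{Main obstacle.} The hard step is passing from grid classes $w_{\delta_j}$ to the intermediate classes $w$ with $|w/n-\delta_j|\le\tau$. Pointwise convergence at fixed $\delta$ says nothing directly about these. I would first try to exploit the coset's linear structure. If $e'$ lies in the coset with weight $w$ and $c$ is a low-weight codeword, then $e'\oplus c$ lies in the same coset with weight in $[w-\wH(c),w+\wH(c)]$. The hoped-for consequence is a local comparison $A_w\le 2^{o(n)}\max_{|w'-w_{\delta_j}|\le\tau n}A_{w'}$, which would bring everything back to the grid. If that comparison fails, I would fall back on the interpretation, implicit in Definition~\ref{def:g_delta} and realized by Theorem~\ref{thm:spectrum} for the full-rank ensemble, that the convergence in \eqref{eq:g_assumption} is locally uniform on the effective domain. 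A Dini-type argument with continuous $g$ then finishes the upper bound. Combining the lower and upper bounds and letting $\eps,\tau\to0$ yields \eqref{eq:transfer_boxed}.

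\textbf{Guesswork exponent.} Apply \eqref{eq:transfer_boxed} at $\alpha=1/(1+\rho)$ and at $\alpha=1$. The hypotheses at $\alpha=1$, including an interior maximizer, are part of ``both suprema are achieved.'' This gives
\[
\tfrac1n\log_2\varphi_\sigma^{1+\rho}=(1+\rho)\psi_\alpha(g)-\rho\,\psi_1(g)+o_\Pb(1),
\]
using $(1+\rho)\alpha=1$. The conversion from $o_\Pb(1)$ to the expectation repeats Step~2 of the proof of Theorem~\ref{thm:main}. There is the almost-sure envelope $W\le2^{nD}$, and the bad event must have probability $o(2^{-nD})$, which the concentration bounds of Theorem~\ref{thm:spectrum} supply for the ensembles in question. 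Finally, the global sandwich \eqref{eq:global_sandwich} has harmonic penalty $O(\log n/n)$, which gives \eqref{eq:transfer_exponent}.
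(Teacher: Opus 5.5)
Your lower bound and overall route match the paper's: decomposition by weight, the single term at $\lfloor n\delta^*\rfloor$, a bound by $(n+1)$ times the largest term, then Steps~1--3 of Theorem~\ref{thm:main}. The upper bound, however, does not close as you set it up.

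\textbf{The strip near $\delta=0$.} The trivial bound $A_w\le\binom{n}{w}$ fails there. Because $\ell$ is decreasing, $\sup_{\delta\le\delta_1}[\Hb(\delta)+\alpha\ell(\delta)]\ge\alpha\log_2(1-p)$, and $\ell(0)<0$ does not push this below $\psi_\alpha(g)$. In the paper's running example ($p=0.1$, $R=4/7$) one has $\tfrac12\log_2 0.9\approx-0.076>\psi_{1/2}(g)\approx-0.090$ and $\log_2 0.9\approx-0.152>\psi_1(g)=R-1$. For $\sigma=\bm{0}$, the single class $A_0=1$ already gives $\frac1n\log_2 Z_{\bm{0}}(1)\ge\log_2(1-p)$. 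What you need is that all classes $w\le\delta_1 n$, including $w=o(n)$, are empty with high probability. That is ensemble input, such as the first-moment union bound in Step~1 of the proof of Theorem~\ref{thm:pf_exp}. It does not follow from \eqref{eq:g_assumption}.

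\textbf{The intermediate classes.} Your comparison $A_w\le 2^{o(n)}\max_{|w'-w_{\delta_j}|\le\tau n}A_{w'}$ is vacuous, since $w'=w$ is admissible. Reducing to the grid requires domination by $A_{w_{\delta_j}}$ itself. Low-weight codewords cannot supply this: with high probability the full-rank ensemble has no codewords of weight $o(n)$.

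Your fallback is not implicit in Definition~\ref{def:g_delta}, which is explicitly pointwise. It is an extra hypothesis, and a necessary one. For each fixed $\delta$, $\lfloor n\delta\rfloor\neq\lfloor n\delta^*\rfloor+1$ for all large $n$. So a class with $A_{\lfloor n\delta^*\rfloor+1}\ge 2^{n(g(\delta^*)+c)}$ is invisible to \eqref{eq:g_assumption}, yet it forces $\frac1n\log_2 Z_\sigma(\alpha)\ge\psi_\alpha(g)+c-O(1/n)$.

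The clean repair is to assume $\Pb\bigl(\exists\,w:\ A_w>2^{n(g(w/n)+\eps)}\bigr)\to0$ for each $\eps>0$, with the convention $2^{-\infty}=0$. Then the $(n+1)\max$ bound gives $\frac1n\log_2 Z_\sigma(\alpha)\le\psi_\alpha(g)+\eps+o(1)$ with high probability directly, and no grid is needed. The paper's proof relies on exactly this uniformity without stating it, when it writes all $n+1$ terms as $2^{n(g(w/n)+\alpha\ell(w/n)+o_\Pb(1))}$.

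\textbf{From $o_\Pb(1)$ to $\Eb[G_{\mathrm{coset}}^\rho]$.} This passage has a second gap. Theorem~\ref{thm:spectrum} does not give $\Pb(\mathcal{G}_n^c)=o(2^{-nD})$. Its rate comes from Chebyshev, is of order $\eta$, and can never exceed $\Hb(w/n)+R-1<1$. The envelope exponent, by contrast, satisfies $D\ge\log_2(1/p)>1$. Under the bare hypothesis \eqref{eq:g_assumption} there is no rate at all, and convergence in probability does not control the mean. Take $W=2^{nD'}$ on an event of probability $2^{-n\eps}$ with $D'>\eps$: this is compatible with $\frac1n\log_2 W\to0$ in probability, yet gives $\frac1n\log_2\Eb[W]\ge D'-\eps$.

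Only the lower half, $\Eb[W]\ge\Eb[W\mathbf{1}_{\mathcal{G}_n}]$, goes through. The upper half of \eqref{eq:transfer_exponent} needs an explicit tail hypothesis bounding $\Eb[W\mathbf{1}_{\mathcal{G}_n^c}]$ by $2^{o(n)}$. Citing Step~2 of the proof of Theorem~\ref{thm:main} does not close this, because that step has the same defect: $2^{nD}C''e^{-c''n}$ need not vanish.
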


\begin{proof}
The partition function decomposes by Hamming weight as
\(
    Z_\sigma(\alpha)
    \;=\;
    \sum_{w=0}^n
    A_w(\Hm,\sigma)\,p^{\alpha w}(1-p)^{\alpha(n-w)}
    \;=\;
    \sum_{w=0}^n
    A_w(\Hm,\sigma)\cdot 2^{n\alpha\ell(w/n)}.
\)
Under assumption~\eqref{eq:g_assumption}, $A_{\lfloor n\delta\rfloor}
=2^{n(g(\delta)+o_\Pb(1))}$ for each $\delta$ in the effective
domain, so each term contributes $2^{n(g(w/n)+\alpha\ell(w/n)+o_\Pb(1))}$.
The sum is dominated exponentially by the term at $w^*=\lfloor\delta^*
n\rfloor$, since the number of summands is at most $n+1=2^{O(\log n)}$:
\begin{align}
\frac{1}{n}\log_2 Z_\sigma(\alpha)
&\le
\max_{\delta\in[0,1]}
\bigl[g(\delta)+\alpha\ell(\delta)\bigr]
+o_\Pb(1),
\\
\frac{1}{n}\log_2 Z_\sigma(\alpha)
&\ge
g(\delta^*)+\alpha\ell(\delta^*)
+o_\Pb(1),
\end{align}
Since $g$ is continuous at $\delta^*$ and $|w^*/n-\delta^*|\le 1/n$,
the Lipschitz error is $O(1/n)$, and~\eqref{eq:transfer_boxed} follows.
Equation~\eqref{eq:transfer_exponent} then follows from
the same algebra as the proof of Theorem~\ref{thm:main}
(Steps~1--3), replacing $(1-\alpha)h_\alpha(p)$ by $\psi_\alpha(g)-\alpha\psi_1(g)$.
\end{proof}

\begin{remark}[Random full-rank ensemble as a special case]
For the random full-rank ensemble, $g(\delta)=(\Hb(\delta)+R-1)^+$
(Theorem~\ref{thm:spectrum_boxed}).
The saddlepoint $\delta^*(\alpha)=p^\alpha/(p^\alpha+(1-p)^\alpha)$
gives $\psi_\alpha(g)=(R-1)+(1-\alpha)h_\alpha(p)$
and $\psi_1(g)=R-1$,
so~\eqref{eq:transfer_exponent} yields
$\Lambda(\rho)=\rho h_\alpha(p)+\rho(R-1)$,
recovering Theorem~\ref{thm:main} exactly.
\end{remark}

\begin{example}[Transfer Theorem Applied to the $(7,4,3)$ Code]
\label{ex:hamming_transfer}
The $(7,4,3)$ Hamming code has $g(\delta)=(\Hb(\delta)+4/7-1)^+
=(\Hb(\delta)-3/7)^+$.
With $p=0.1$, $\alpha=1/2$, the log-probability slope is
$\ell(\delta)=\delta\log_2(0.1)+(1-\delta)\log_2(0.9)
 = -3.322\delta-0.152(1-\delta)$,
and the variational functional is
\[
\psi_{1/2}(g)=\sup_\delta\bigl[(\Hb(\delta)-3/7)^+
+\tfrac{1}{2}\ell(\delta)\bigr].
\]
The maximizer is $\delta^*=\sqrt{0.1}/(\sqrt{0.1}+\sqrt{0.9})
 = 0.250$ (Example~\ref{ex:hamming_delta_star}), and
$\psi_{1/2}(g)=(4/7-1)+(1/2)h_{1/2}(0.1)=-0.0896$ bits.
Similarly, $\psi_1(g)=R-1=-3/7 = -0.4286$ bits.
Equation~\eqref{eq:transfer_exponent} then gives
\(
\Lambda(1)=(1+1)\times(-0.0896)-1\times(-0.4286)
=0.2494\text{ bits},
\)
matching Example~\ref{ex:hamming_main} to four significant figures,
confirming that the transfer theorem recovers the main theorem
for this concrete code.
\end{example}

%==================================================================
% NEW SECTION: List Guesswork
%==================================================================

\section{List Guesswork Exponent}
\label{sec:list}

GRAND can output a list of $L_n$ candidate noise vectors rather than
a single decision, reducing decoding latency at the cost of larger
output.  We now derive the guesswork exponent for this list-output
variant (Definition~\ref{def:list_guesswork}).

\begin{theorem}[List-Guesswork Exponent]
\label{thm:list}
Under the conditions of Theorem~\ref{thm:main}, let $L_n\ge 1$ be
any integer sequence.  Then for every $\rho>0$:
\begin{align}
    \lim_{n\to\infty}
    \frac{1}{n}\log_2\Eb\left[\bigl(G_{\mathrm{coset}}^{(L_n)}\bigr)^{\!\rho}\right]=
    \Lambda(\rho)-\rho\lim_{n\to\infty}\frac{1}{n}\log_2 L_n,
    \label{eq:list_exponent}
\end{align}
provided the limit $\lim_{n\to\infty}\frac{1}{n}\log_2 L_n$ exists.
In particular:
\begin{equation}
    \text{if }\log_2 L_n = o(n),\,
    \lim_{n\to\infty}
    \frac{1}{n}\log_2\Eb\!\left[\bigl(G_{\mathrm{coset}}^{(L_n)}\bigr)^{\!\rho}\right]
    \;=\;
    \Lambda(\rho).
\label{eq:list_sublinear}
\end{equation}
\end{theorem}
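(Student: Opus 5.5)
The plan is to work directly from the identity $G_{\mathrm{coset}}^{(L_n)}=\lceil G_{\mathrm{coset}}/L_n\rceil$ of Definition~\ref{def:list_guesswork}. I would sandwich the ceiling between elementary functions of $G_{\mathrm{coset}}$ and then import the exponent of $\Eb[G_{\mathrm{coset}}^\rho]$ from Theorem~\ref{thm:main}. No new spectrum or partition-function analysis should be needed. Everything reduces to Theorem~\ref{thm:main} plus elementary inequalities, because $L_n$ is deterministic and factors out of the expectation.

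For the lower bound, use $\lceil x\rceil\ge x$ to get $\Eb[(G^{(L_n)}_{\mathrm{coset}})^\rho]\ge L_n^{-\rho}\Eb[G_{\mathrm{coset}}^\rho]$. Taking $\frac1n\log_2$ and applying Theorem~\ref{thm:main} gives the liminf bound $\Lambda(\rho)-\rho\lambda$, where $\lambda\triangleq\lim_n\frac1n\log_2 L_n$.

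For the upper bound, use $\lceil x\rceil\le x+1$ together with $(a+b)^\rho\le c_\rho(a^\rho+b^\rho)$, where $c_\rho=\max(1,2^{\rho-1})$. This gives
\[
\Eb\bigl[(G^{(L_n)}_{\mathrm{coset}})^\rho\bigr]\le c_\rho\bigl(L_n^{-\rho}\Eb[G_{\mathrm{coset}}^\rho]+1\bigr),
\]
so the limsup of the normalized log is at most $\max\{\Lambda(\rho)-\rho\lambda,\,0\}$. The constant $c_\rho$ contributes only $O(1/n)$.

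The main obstacle is the additive ``$+1$'' term. It is not just an artifact of the bound: $G^{(L_n)}_{\mathrm{coset}}\ge 1$ always, so $\Eb[(G^{(L_n)}_{\mathrm{coset}})^\rho]\ge 1$ and the exponent can never be negative. Combining this with the lower bound, the argument actually yields
\[
\lim_n\tfrac1n\log_2\Eb\bigl[(G^{(L_n)}_{\mathrm{coset}})^\rho\bigr]=\max\{\Lambda(\rho)-\rho\lambda,\,0\}.
\]
This coincides with \eqref{eq:list_exponent} exactly when $\rho\lambda\le\Lambda(\rho)$, that is, when $\lambda\le h_\alpha(p)+R-1$. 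I would therefore prove \eqref{eq:list_exponent} in that regime and state explicitly that for larger list sizes the exponent saturates at $0$, since e.g.\ $L_n\ge 2^k$ forces $G^{(L_n)}_{\mathrm{coset}}\equiv 1$. Read literally, the displayed formula is only valid under this implicit restriction.

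The special case \eqref{eq:list_sublinear} then follows immediately. Taking $\lambda=0$ keeps us inside the valid regime because $\Lambda(\rho)\ge 0$, and the two bounds become $\Lambda(\rho)-o(1)$ and $\max\{\Lambda(\rho),0\}+o(1)=\Lambda(\rho)+o(1)$.
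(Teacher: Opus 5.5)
Your proof is correct. It uses the same skeleton as the paper: sandwich $\lceil G_{\mathrm{coset}}/L_n\rceil$ between deterministic multiples of $G_{\mathrm{coset}}$, then import Theorem~\ref{thm:main}. Your upper-bound step, however, differs from the paper's in a way that matters.

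The paper uses $\lceil x/L\rceil\le x/L+1\le 2x/L$, which needs $x\ge L$. It drops the case $G_{\mathrm{coset}}<L_n$ by asserting that $G_{\mathrm{coset}}\ge L_n$ with probability $1-o(1)$ ``since $G_{\mathrm{coset}}\ge 1$ and $L_n\ge 1$.'' That does not follow. Typical noise has weight near $np$, so by the spectrum law $G_{\mathrm{coset}}=2^{n(\Hb(p)+R-1)+o(n)}$ with high probability, and the assertion fails whenever $\lambda>\Hb(p)+R-1$. For $L_n=2^k$ with full-rank $\Hm$ one even has $G_{\mathrm{coset}}\le L_n$ always. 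Moreover, a high-probability event would not be enough, because the complementary event contributes up to $1$ to the expectation. That term is exactly what dominates once $\rho\lambda>\Lambda(\rho)$.

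Your pointwise bound $\lceil x\rceil\le x+1$, combined with $(a+b)^\rho\le c_\rho(a^\rho+b^\rho)$, keeps that term. Together with $G_{\mathrm{coset}}^{(L_n)}\ge 1$ it gives the exact limit $\max\{\Lambda(\rho)-\rho\lambda,0\}$. So your caveat is right. Equation \eqref{eq:list_exponent} holds exactly for $\lambda\le h_\alpha(p)+R-1$. Your argument also covers the range $\Hb(p)+R-1<\lambda\le h_\alpha(p)+R-1$, where the formula is true but the paper's justification breaks down.

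Beyond that range the stated formula is false. The paper's own Case~3 with $L_n=2^k$ would predict the negative value $\rho(h_\alpha(p)-1)$, whereas the true exponent is $0$. The sub-exponential case \eqref{eq:list_sublinear} is unaffected, since $\Lambda(\rho)\ge\rho(\Hb(p)+R-1)>0$ under \eqref{eq:subcritical}.
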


\begin{proof}
By Definition~\ref{def:list_guesswork},
$G_{\mathrm{coset}}^{(L_n)}=\lceil G_{\mathrm{coset}}/L_n\rceil$.
For any $x\ge 1$ and integer $L\ge 1$,
\(
    \frac{x}{2L}\;\le\;\left\lceil\frac{x}{L}\right\rceil\;\le\;\frac{x}{L}+1\;\le\;\frac{2x}{L},
\)
where the last inequality holds for $x\ge L$ (which occurs with
probability $1-o(1)$ since $G_{\mathrm{coset}}\ge 1$ a.s.\ and $L_n\ge 1$).
Hence
\(
    \left(\frac{G_{\mathrm{coset}}}{2L_n}\right)^{\!\rho}
    \;\le\;
    \bigl(G_{\mathrm{coset}}^{(L_n)}\bigr)^{\!\rho}
    \;\le\;
    \left(\frac{2G_{\mathrm{coset}}}{L_n}\right)^{\!\rho},
\)
and taking $\frac{1}{n}\log_2\Eb[\cdot]$ and applying
Theorem~\ref{thm:main}:
\begin{align}
    \Lambda(\rho)-\rho\cdot\frac{\log_2(2L_n)}{n}
    \;\le\;
    \frac{1}{n}\log_2\Eb\!\left[\bigl(G_{\mathrm{coset}}^{(L_n)}\bigr)^{\!\rho}\right] \nonumber \\
    \;\le\;
    \Lambda(\rho)-\rho\cdot\frac{\log_2(L_n/2)}{n}.
\end{align}
As $n\to\infty$ both bounds converge to
$\Lambda(\rho)-\rho\lim_{n\to\infty}\frac{1}{n}\log_2 L_n$,
giving~\eqref{eq:list_exponent}.
When $\log_2 L_n=o(n)$, the correction vanishes and~\eqref{eq:list_sublinear}
follows.
\end{proof}

\begin{remark}[Operational meaning]
Equation~\eqref{eq:list_sublinear} shows that any \emph{subexponential}
list size---including polynomial lists $L_n=n^c$---does not change the
first-order guesswork exponent.  A list of size $L_n=2^{n\lambda}$,
$\lambda>0$, reduces the exponent by exactly $\rho\lambda$, trading
query complexity against list-output size at rate $\rho$ bits per
bit of list exponent.
\end{remark}

\begin{example}[List Guesswork for the $(7,4,3)$ Hamming Code]
\label{ex:hamming_list}
Consider the $(7,4,3)$ Hamming code with $p=0.1$, $\rho=1$,
and $\Lambda(1) = 0.2495$ bits (Example~\ref{ex:hamming_main}).

\noindent\textbf{Case 1: $L_n=3$ (sublinear list).}
Here $\log_2(3) = 1.585=o(n)$, so by~\eqref{eq:list_sublinear}
the list exponent equals $\Lambda(1) = 0.2495$ bits unchanged.
A list of 3 candidates does not alter the first-order complexity.

\noindent\textbf{Case 2: $L_n=2^{n\lambda}$ with $\lambda=0.1$ (exponential list).}
The correction is $\rho\lambda=0.1$ bits, giving list exponent
$0.2495-0.1=0.1495$ bits.

\noindent\textbf{Case 3: Maximum list $L_n=2^k=16$ (full coset list).}
Here $\log_2(16)=4=nR$ grows linearly, giving correction
$\rho R=4/7 = 0.571$ bits. The normalized exponent becomes negative,
which means the rank is $O(1)$ rather than exponential; it does \emph{not}
mean fewer than 1 query is needed.
\end{example}

%==================================================================
% NEW SECTION: Second-Order Refinement
%==================================================================

\section{Second-Order Refinement}
\label{sec:second_order}

Theorem~\ref{thm:main} gives the first-order exponent
$\frac{1}{n}\log_2\Eb[G_{\mathrm{coset}}^\rho]\to\Lambda(\rho)$.
The harmonic-number correction already visible in~\eqref{eq:final_sandwich}
suggests a $\log n$ term at the next order.

\begin{theorem}[Second-Order Exponent]
\label{thm:second_order}
Under the conditions of Theorem~\ref{thm:main}, for every $\rho>0$:
\begin{equation}
    \log_2\Eb\!\left[G_{\mathrm{coset}}^{\rho}\right]
    \;=\;
    n\Lambda(\rho)
    \;-\;\rho\log_2 n
    \;+\;O(1)
    \,
    \text{as }n\to\infty.
\label{eq:second_order}
\end{equation}
In other words, the $O(\log n/n)$ harmonic correction in the
sandwich~\eqref{eq:final_sandwich} is tight up to an $O(1)$ term,
and the precise coefficient of $\log_2 n$ is $-\rho$.
\end{theorem}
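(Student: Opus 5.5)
The plan is to prove \eqref{eq:second_order} as two one-sided bounds at the $\log_2 n$ scale. The lower bound will come from the left inequality of \eqref{eq:global_sandwich_used}. The upper bound needs a sharper argument than the right inequality, since the plain sandwich only gives $n\Lambda(\rho)+O(1)$ there.

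\textbf{Step 1: sharpen Step~2 of Theorem~\ref{thm:main} to $O(1)$ precision.} I will show that $\Eb[\varphi_\sigma^{1+\rho}]=2^{n\Lambda(\rho)+O(1)}$. On the good event $\mathcal{G}_n$ of Theorem~\ref{thm:pf_exp}, I will replace the $\pm\varepsilon$ control of $A_w$ by the Chebyshev statement of Lemma~\ref{lem:Aw_moments}, namely $A_w=\Eb[A_w](1+O(\Eb[A_w]^{-1/2}))$ for all $w$ near $\delta^*n$ and near $pn$. This is uniform over $|w-\delta^* n|\le n^{2/3}$ by a union bound, since $\Eb[A_w]\ge 2^{n\eta/2}$ there. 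Then $Z_\sigma(\alpha)=(1+o(1))\,\Eb[Z_\sigma(\alpha)]$ and $Z_\sigma(1)=(1+o(1))\,\Eb[Z_\sigma(1)]$. Here $\Eb[Z_\sigma(\alpha)]=2^{-m}(p^\alpha+(1-p)^\alpha)^n$ exactly, with no Stirling polynomial factor because the binomial sum is exact, and $\Eb[Z_\sigma(1)]=2^{-m}$. Hence $\varphi_\sigma^{1+\rho}=\Theta(2^{n\Lambda(\rho)})$ on $\mathcal{G}_n$. The bad event contributes at most $2^{nD}Ce^{-cn}$. To make that negligible I will choose the concentration radius so that $\Pb(\mathcal{G}_n^c)\le 2^{-n(D-\Lambda+1)}$, using the Markov bound of Step~1 of Theorem~\ref{thm:pf_exp} for weights far from the saddle points.

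\textbf{Step 2: lower bound.} Insert Step~1 into the left inequality of \eqref{eq:global_sandwich_used}. Since $\Harm_{2^k}=k\ln 2+O(1)=nR\ln 2+O(1)$, we get $\log_2\Harm_{2^k}^{\rho}=\rho\log_2 n+O(1)$. This yields $\log_2\Eb[G^\rho_{\mathrm{coset}}]\ge n\Lambda(\rho)-\rho\log_2 n+O(1)$.

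\textbf{Step 3: matching upper bound, which I expect to be the main obstacle.} The upper half of Theorem~\ref{thm:sandwich} only gives $+O(1)$, so I will bound $G$ directly by weight classes. Because $P$ is strictly decreasing in weight, $G_{\mathrm{coset}}(e)\le S_w\triangleq\sum_{w'\le w}A_{w'}$ when $\wH(e)=w$. This gives
\[
\Eb\bigl[G^\rho_{\mathrm{coset}}\mid \Hm,\sigma\bigr]\le Z_\sigma(1)^{-1}\sum_w A_w\,p^w(1-p)^{n-w}S_w^{\rho}.
\]
On the good event, $S_w=\Theta(A_w)$, since consecutive ratios $A_{w-1}/A_w\to\delta/(1-\delta)<1$ make the sum geometric. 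The bound therefore becomes a discrete Laplace sum $\sum_w A_w^{1+\rho}p^w(1-p)^{n-w}$, peaked at $\delta^*$. Here the polynomial prefactors must be tracked. Each $A_w\approx\binom nw 2^{-m}$ carries a Stirling factor $n^{-1/2}$, so $A_w^{1+\rho}$ carries $n^{-(1+\rho)/2}$. The Gaussian width of the Laplace sum contributes $n^{1/2}$, and the normalization $Z_\sigma(1)^{-1}$ contributes $O(1)$. I will carry out this bookkeeping carefully, combining it with the exact identity $\Eb[Z_\sigma(1)]=2^{-m}$, and aim to show that the net prefactor is $n^{-\rho}$, matching Step~2.

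The hard part is exactly this exponent count. If the Gaussian-width factor does not cancel as intended, the first thing I would try is to sharpen $G\le S_w$. The idea is to replace it by the within-class average rank $S_{w-1}+\tfrac12 A_w$, which the uniform tie-breaking in the ordering permits, and redo the Laplace evaluation uniformly in $|w-\delta^*n|\le n^{2/3}$. Weights outside this window are exponentially negligible by strict concavity of $f$ (Lemma~\ref{lem:f_max}).
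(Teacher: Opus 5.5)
\textbf{Step~3 cannot be completed, because the bound it aims at is false, and your own bookkeeping already shows this.} On the good event near $w=\delta^* n$, the quantities $S_{w-1}$, $A_w$, $S_w$ are all $\Theta(2^{-m}\binom{n}{w})$, since the tail is geometric with ratio $\delta^*/(1-\delta^*)<1$. Also $\binom{n}{w}^{1+\rho}p^w(1-p)^{n-w}=\Theta(n^{-(1+\rho)/2})\,2^{n(1+\rho)f(w/n)}$ with $(1+\rho)f(\delta^*)=\rho h_\alpha(p)$. Summing over the $\Theta(\sqrt n)$ weights in the Gaussian window gives a net prefactor $n^{-(1+\rho)/2}\cdot n^{1/2}=n^{-\rho/2}$, not $n^{-\rho}$. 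Nothing is left to cancel. Your fallback of replacing $S_w$ by $S_{w-1}+\tfrac12 A_w$ only changes a $\Theta(1)$ constant.

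Running the same computation from below refutes the statement. Condition on $e$ with $\wH(e)=w$ and $|w-\delta^* n|\le\sqrt n$. For $w'\neq w$, $A_{w'}(\Hm,\Hm e^T)=\sum_{\wH(e')=w'}\mathbf{1}[\Hm(e'\oplus e)^T=\mathbf{0}]$ is a sum of pairwise independent indicators with mean $2^{-m}\binom{n}{w'}$. Chebyshev then gives $G_{\mathrm{coset}}\ge A_{w-1}\ge c\,2^{-m}\binom{n}{w}$ with probability $1-o(1)$, uniformly in the window. Weighting by $\Pb(\wH(e)=w)=\binom{n}{w}p^w(1-p)^{n-w}$ and summing gives $\Eb[G_{\mathrm{coset}}^\rho]\ge c'\,n^{-\rho/2}\,2^{n\Lambda(\rho)}$. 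This exceeds $2^{n\Lambda(\rho)-\rho\log_2 n+O(1)}$ by the unbounded factor $n^{\rho/2}$.

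Your Step~3, once made rigorous, would supply the matching upper bound. The correct expansion is therefore $\log_2\Eb[G_{\mathrm{coset}}^\rho]=n\Lambda(\rho)-\tfrac{\rho}{2}\log_2 n+O(1)$, and the claimed coefficient $-\rho$ is off by a factor of two. The mechanism is this: the lower half of the sandwich loses $\Harm_{2^k}^\rho\asymp n^\rho$, but the upper half is itself loose by $n^{\rho/2}$. At $\wH(e)\approx\delta^* n$ the pointwise bound $G\le Z_\sigma(\alpha)/P(e)^\alpha\approx 2^{n\Hb(\delta^*)-m}$ overshoots the class size $\Theta(n^{-1/2}2^{n\Hb(\delta^*)-m})$ by a factor $\sqrt n$. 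The same happens for unconstrained guesswork, so the paper's remark that $R=1$ also gives $-\rho\log_2 n$ is likewise incorrect.

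The paper's own argument does not get around this either. It takes $n\Lambda(\rho)+O(1)$ from the upper half of the sandwich and $n\Lambda(\rho)-\rho\log_2 n+O(1)$ from the lower half and ``combines'' them, which only brackets the answer. Your diagnosis that the upper half of the sandwich is insufficient is exactly right; the matching upper bound you were looking for simply does not exist.

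There is also a secondary problem in your Step~1. You cannot choose the concentration radius so that $\Pb(\mathcal{G}_n^c)\le 2^{-n(D-\Lambda+1)}$. The Markov and Chebyshev exponents are fixed by $\Hb(\delta)+R-1$ and are at most $1$, whereas $D-\Lambda(\rho)=(1+\rho)(1-R)+\log_2(1/p)$. The crude almost-sure bound $2^{nD}$ on the bad event therefore has to be replaced by a direct estimate of the conditional moments there.
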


\begin{proof}
We track the harmonic-number term more carefully in the sandwich.

\medskip
\noindent\textbf{Upper bound.}
The upper side of~\eqref{eq:global_sandwich_used} gives directly
\[
    \log_2\Eb\bigl[G_{\mathrm{coset}}^\rho\bigr]
    \;\le\;
    \log_2\Eb\bigl[\varphi_\sigma^{1+\rho}\bigr]
    \;=\;
    n\Lambda(\rho)+O(1),
\]
where the last equality uses~\eqref{eq:phi_moment} with the $O(1)$
absorbing the $o(1)$ term and finite-$n$ partition-function corrections.

\medskip
\noindent\textbf{Lower bound.}
The lower side of~\eqref{eq:global_sandwich_used} gives
\[
    \log_2\Eb\bigl[G_{\mathrm{coset}}^\rho\bigr]
    \;\ge\;
    \log_2\Eb\bigl[\varphi_\sigma^{1+\rho}\bigr]
    -\rho\log_2\Harm_{2^k}.
\]
Since $\Harm_{2^k}=\log(2^k)+\gamma_{\mathrm{EM}}+O(2^{-k})
=k\ln 2+\gamma_{\mathrm{EM}}+O(2^{-nR})$ where
$\gamma_{\mathrm{EM}} =  0.5772$ is the Euler--Mascheroni
constant, we have
\[
    \log_2\Harm_{2^k}
    \;=\;
    \log_2(nR\ln 2)+O(1/n)
    \;=\;
    \log_2 n + O(1).
\]
Therefore
\[
    \log_2\Eb\bigl[G_{\mathrm{coset}}^\rho\bigr]
    \;\ge\;
    n\Lambda(\rho)
    -\rho\log_2 n
    +O(1).
\]

\noindent\textbf{Combining.}
Both bounds give~\eqref{eq:second_order}, with $-\rho\log_2 n$
as the leading correction and the $O(1)$ term absorbing all
constants (including $\rho\log_2(R\ln 2)$ and the
Euler--Mascheroni constant).
\end{proof}

\begin{remark}[Relation to harmonic-number correction]
The coefficient $-\rho$ in~\eqref{eq:second_order} matches exactly
the exponent of the harmonic penalty $\Harm_{2^k}^\rho$ in
Theorem~\ref{thm:sandwich}, confirming that the coset size $2^k=2^{nR}$
is the sole source of the $\log n$ correction.  For the unconstrained
guesswork ($R=1$, no constraints), the coset is all of $\{0,1\}^n$
and $\Harm_{2^n}^\rho =  (n\ln 2)^\rho$, giving the same
$-\rho\log_2 n$ term at second order.
\end{remark}

%==================================================================
% NEW SECTION: Universality and q-ary extension (Session 2 blue)
%==================================================================

\section{Universality Theorem and $q$-ary Extension}
\label{sec:universality}

\subsection{Universality: Any Ensemble with a Known Spectrum}

The transfer theorem (Theorem~\ref{thm:transfer}) already provides a
formula for $\Lambda(\rho)$ given $g(\delta)$.  The following theorem
makes this into a \emph{universality statement}: the guesswork
exponent is determined solely by $\psi_\alpha(g)$, regardless of the
specific ensemble that produces $g$.

\begin{theorem}[Universality Theorem]
\label{thm:universality}
Let $\mathcal{E}$ be any binary linear code ensemble (e.g.\
random full-rank, regular LDPC, irregular LDPC, protograph) such
that the weight enumerator $A_w^{(\mathcal{E})}(\Hm,\sigma)$
satisfies
\begin{equation}
    \frac{1}{n}\log_2 A_{\lfloor n\delta\rfloor}^{(\mathcal{E})}(\Hm,\sigma)
    \;\xrightarrow{\;\Pb\;}\;
    g_{\mathcal{E}}(\delta)
    \label{eq:ensemble_spectrum}
\end{equation}
for a continuous function $g_{\mathcal{E}}$ with a unique interior
maximizer of $\delta\mapsto g_{\mathcal{E}}(\delta)+\alpha\,\ell(\delta)$.
Then:
\begin{equation}
    \frac{1}{n}\log_2\Eb\!\left[G_{\mathcal{E}}^{\rho}\right]
    \;=\;
    (1+\rho)\,\psi_{1/(1+\rho)}(g_{\mathcal{E}})
    \;-\;\rho\,\psi_1(g_{\mathcal{E}})
    \;+\;o(1),
    \label{eq:universality}
\end{equation}
where $\psi_\alpha(g_{\mathcal{E}})=\sup_{\delta}[g_{\mathcal{E}}(\delta)+\alpha\ell(\delta)]$.
\end{theorem}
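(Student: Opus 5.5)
The plan is to reduce Theorem~\ref{thm:universality} to the Transfer Theorem (Theorem~\ref{thm:transfer}), which has the same hypotheses except that the ensemble is arbitrary. We then re-run Steps~1--3 of the proof of Theorem~\ref{thm:main} with the ensemble-specific growth rate $g_{\mathcal{E}}$ in place of $(\Hb(\delta)+R-1)^+$. Nothing in Theorem~\ref{thm:sandwich} or Corollary~\ref{cor:coset_sandwich} depends on the ensemble. They hold for each fixed $\Hm,\sigma$ with $N_\sigma=2^k$ (or $|\Nb|\le 2^n$ in the rank-deficient case), so the harmonic penalty $\Harm_{N_\sigma}^\rho\le(1+n\ln 2)^\rho$ is always subexponential.

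\textbf{Step 1 (partition-function exponents).} Apply \eqref{eq:transfer_boxed} at $\alpha=1/(1+\rho)$ and at $\alpha=1$ to obtain $\frac1n\log_2 Z_\sigma(\alpha)=\psi_\alpha(g_{\mathcal{E}})+o_\Pb(1)$ and $\frac1n\log_2 Z_\sigma(1)=\psi_1(g_{\mathcal{E}})+o_\Pb(1)$. The uniqueness/interior hypothesis is needed at both values of $\alpha$. For the upper side, the lower-order terms need slightly more than pointwise convergence in \eqref{eq:ensemble_spectrum}, since the $n+1$ summands must be controlled simultaneously. I would handle this by a finite-grid argument. Cover $[0,1]$ by finitely many points $\delta_j$ with mesh $\epsilon$. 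Use continuity of $g_{\mathcal{E}}$ and $\ell$, together with the uniform bound $A_w\le\binom nw$, to bound every $w$ in a cell by the value at a nearby grid point plus $O(\epsilon)$. Since there are finitely many grid points, pointwise convergence in probability becomes joint convergence. Consequently
\[
\tfrac1n\log_2\varphi_\sigma^{1+\rho}=(1+\rho)\psi_\alpha(g_{\mathcal{E}})-\rho\,\psi_1(g_{\mathcal{E}})+o_\Pb(1),
\]
using $(1+\rho)\alpha=1$.

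\textbf{Step 2 (from $o_\Pb(1)$ to expectation).} This is the main obstacle. In Theorem~\ref{thm:main} we had exponentially small failure probability for the good event, which dominated the crude a.s.\ bound $W\le 2^{nD}$. Here I only have convergence in probability, so $\Eb[W\mathbf 1_{\mathcal G_n^c}]$ is not automatically negligible.

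\emph{Lower bound.} This is easy. $\Eb[W]\ge 2^{-n\epsilon}\Pb(\mathcal G_n)$ with $\Pb(\mathcal G_n)\to1$.

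\emph{Upper bound.} I would first try to strengthen the hypothesis implicitly through a first-moment bound valid for all linear ensembles of interest. Bound $\Eb[Z_\sigma(\alpha)^{1+\rho}Z_\sigma(1)^{-\alpha(1+\rho)}]$ by lower-bounding $Z_\sigma(1)\ge P(e)$ (the true noise lies in its coset). Then use the sandwich upper bound $G\le Z_\sigma(\alpha)/Q(e)^\alpha$ directly, and control $\Eb[Z_\sigma(\alpha)^\rho\,P(e)^{-\alpha\rho}\cdots]$ by the ensemble average spectrum.

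If that does not close, I would state the interpretation that \eqref{eq:ensemble_spectrum} is meant with exponentially decaying deviation probabilities, as delivered by Theorem~\ref{thm:spectrum} for the full-rank case. I would note this as the regularity condition under which the $\mathcal G_n$-splitting of Step~2 of Theorem~\ref{thm:main} goes through verbatim with the same constant $D$.

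\textbf{Step 3 (sandwich).} Insert this into \eqref{eq:global_sandwich}. The harmonic term contributes $O(\log n/n)$, which yields \eqref{eq:universality}.
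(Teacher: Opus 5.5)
Your route is the paper's own: the Transfer Theorem at $\alpha=1/(1+\rho)$ and at $\alpha=1$, then Steps~1--3 of Theorem~\ref{thm:main} with the ensemble-free sandwich. However, the final display of your Step~1 does not follow from what precedes it. Because $\alpha(1+\rho)=1$, we have $\varphi_\sigma^{1+\rho}=Z_\sigma(\alpha)^{1+\rho}/Z_\sigma(1)$, so the limit is $(1+\rho)\psi_\alpha(g_{\mathcal{E}})-\psi_1(g_{\mathcal{E}})$, not $(1+\rho)\psi_\alpha(g_{\mathcal{E}})-\rho\,\psi_1(g_{\mathcal{E}})$. The two differ by $(\rho-1)\psi_1$, and $\psi_1=R-1\neq 0$ for the full-rank ensemble.

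Check this against the paper's own values $\psi_\alpha=(R-1)+(1-\alpha)h_\alpha(p)$ and $\psi_1=R-1$. The $-\psi_1$ combination reproduces Theorem~\ref{thm:main}, namely $\rho h_\alpha(p)+\rho(R-1)$. The printed one gives $\rho h_\alpha(p)+(R-1)$. For $p=0.1$, $R=0.6$, $\rho=10$, which satisfy \eqref{eq:subcritical}, the printed value is about $8.94$. That exceeds the bound $\rho R=6$ forced by $G_{\mathrm{coset}}\le 2^k$. So the statement as printed is false for $\rho\neq 1$, and your argument reaches it only through this slip.

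The paper's one-line proof inherits the same error from the Transfer Theorem display. The remark claiming that display recovers Theorem~\ref{thm:main}, and the $q$-ary proof, make the same substitution mistake. The only numerical check, the Hamming example, uses $\rho=1$, where the two formulas coincide. The version worth aiming for, under stronger hypotheses as below, is the $-\psi_1$ one.

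You are right that Step~2 is the real obstacle, but neither of your exits closes it.

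\emph{No argument works under convergence in probability alone.} Mix the full-rank ensemble with $\Hm=[\,0_{m\times k}\;I_m\,]$, used with probability $1/n$. This leaves $g_{\mathcal{E}}$ and every hypothesis unchanged. For that matrix the syndrome reveals the last $m$ noise bits, so $G$ is unconstrained guesswork on $k$ bits. It therefore contributes $\tfrac1n 2^{nR\rho h_\alpha(p)+o(n)}$ to $\Eb[G^\rho]$. Since $h_\alpha(p)<1$, we have $R\rho h_\alpha(p)>\rho(h_\alpha(p)+R-1)$, so the exponent is not a function of $g_{\mathcal{E}}$.

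\emph{Your fallback fails too.} Assuming exponentially small deviation probabilities and reusing the paper's split ``with the same $D$'' leaves a bad-event term $2^{nD}Ce^{-cn}$ with $D$ fixed. The rates supplied by Theorem~\ref{thm:spectrum} are of order $\eps$ (Markov) and $\eta$ (Chebyshev), so for small $\eps$ this term grows exponentially. The paper's own Step~2 has the same defect. What would suffice is an upper-tail bound $\Pb(\tfrac1n\log_2 W\ge t)\le 2^{-n(t-o(1))}$ holding uniformly in $t>0$.

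\emph{Your first-moment idea can be made precise, but only partly helps.} Averaging over the true law $\Pb(\sigma\mid\Hm)=Z_\sigma(1)$ gives exactly $\Eb[\varphi_\sigma^{1+\rho}]=\Eb_{\Hm}\sum_\sigma Z_\sigma(\alpha)^{1+\rho}$. This also replaces the random, $\Hm$-dependent syndrome by a sum over fixed ones. Then $G(e)\le Z_\sigma(\alpha)/P(e)^\alpha$ together with Jensen over $\Hm$ (for $\rho\le 1$) yields $\Eb[G^\rho]\le 1+2^{n\rho(h_\alpha(p)+R-1)}$ for the full-rank ensemble. For a general ensemble, though, this route is governed by the average code spectrum $\Pb_{\Hm}(\Hm x^T=\bm{0})$, not by the typical rate $g_{\mathcal{E}}$. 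An extra hypothesis tying the two together is therefore unavoidable.

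\emph{Your grid fix for the Transfer Theorem's upper bound does not work.} Continuity of the limit $g_{\mathcal{E}}$ gives no control of $A_w$ at weights between grid points. The bound $A_w\le\binom nw$ only caps the exponent at $\Hb(w/n)$, not at $g_{\mathcal{E}}(w/n)+O(\eps)$. Uniform-in-$w$ control must be assumed or proved separately.
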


\begin{proof}
Apply Theorem~\ref{thm:transfer} with $g=g_{\mathcal{E}}$
and follow Steps~1--3 of the proof of Theorem~\ref{thm:main}
verbatim, with $\psi_\alpha(g_{\mathcal{E}})$ replacing
$(R-1)+(1-\alpha)h_\alpha(p)$ throughout.
The sandwich~\eqref{eq:global_sandwich} applies to any ensemble
with coset size $|\mathcal{N}(\Hm,\sigma)|=2^k$, and the
harmonic correction is $O(\log n/n)$ regardless of the ensemble.
\end{proof}

\begin{remark}[What varies across ensembles]
The only ensemble-dependent quantity is $g_{\mathcal{E}}(\delta)$.
For the random full-rank ensemble, $g(\delta)=(\Hb(\delta)+R-1)^+$.
For a $(d_v,d_c)$-regular LDPC ensemble, $g_{\mathrm{LDPC}}(\delta)$
is Gallager's weight-enumerator exponent~\cite{gallager1963low},
which is strictly smaller than $(\Hb(\delta)+R-1)^+$ for most
$\delta$, yielding a strictly smaller guesswork exponent ---
meaning structured codes are easier to decode by guesswork than
the random ensemble benchmark.
\end{remark}

\subsection{$q$-ary Extension}

We now apply the universality framework to $q$-ary linear codes
over $\mathbb{F}_q$ (Definition~\ref{def:qary}).
The $q$-ary analogue of $\ell(\delta)$ is:
\begin{equation}
    \ell_q(T)
    \;\triangleq\;
    \sum_{a\in\mathbb{F}_q} T(a)\log_2 P(a),
    \quad T\in\mathcal{P}(\mathbb{F}_q),
    \label{eq:ell_q}
\end{equation}
where $T$ is a type (empirical distribution) over $\mathbb{F}_q$.
For the $q$-ary i.i.d.\ noise model, the relevant type is the
Hamming-weight type $T_\delta$ with $T_\delta(0)=1-\delta$ and
$T_\delta(a)=\delta/(q-1)$ for $a\ne 0$, so
$\ell_q(T_\delta)=(1-\delta)\log_2 P(0)+\delta\log_2 P_{\ne 0}$
where $P_{\ne 0}$ is the common off-zero probability.

For a random full-rank $q$-ary parity-check ensemble, the weight
enumerator concentrates at $g_q(\delta)=H_q(\delta)+R-1$, where
$H_q(\delta)=\delta\log_2(q-1)-\Hb(\delta)$ is the $q$-ary entropy.

\begin{theorem}[$q$-ary Guesswork Exponent]
\label{thm:qary}
Let $\Hm\in\mathbb{F}_q^{m\times n}$ be a uniformly random
full-rank parity-check matrix with $m=n(1-R)$, and let
$e\sim P^{\otimes n}$ where $P$ is a $q$-ary noise distribution
with $P(0)>P(a)$ for all $a\ne 0$.
Assume the $q$-ary subcriticality condition
$H_q(\delta^*)>1-R$ and $H_q(P)>1-R$.
Then for every $\rho>0$, writing $\alpha=1/(1+\rho)$:
\begin{equation}
    \lim_{n\to\infty}
    \frac{1}{n}\log_2\Eb\!\left[G_{\mathcal{E}_q}^{\rho}\right]
    \;=\;
    \rho\,h^{(q)}_{1/(1+\rho)}(P)\;+\;\rho(R-1)\log_2 q,
    \label{eq:qary_exponent}
\end{equation}
where $h^{(q)}_\alpha(P)$ is the $q$-ary R\'{e}nyi entropy~\eqref{eq:qary_renyi}.
\end{theorem}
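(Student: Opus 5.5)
The plan is to rerun the binary pipeline (Corollary~\ref{cor:coset_sandwich}, then Theorem~\ref{thm:spectrum}, Theorem~\ref{thm:pf_exp}, and Steps~1--3 of Theorem~\ref{thm:main}) over $\mathbb{F}_q$. The one structural change is to index by \emph{full types} $T\in\mathcal{P}(\mathbb{F}_q)$ rather than by Hamming weight. A general $P$ need not have equal off-zero masses, so a single weight index does not determine $P(e')$, whereas the type does: $P^{\otimes n}(e')=2^{n\ell_q(T)}$ for $e'$ of type $T$.

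\textbf{Step 1 (sandwich).} Theorem~\ref{thm:sandwich} holds for any finite set. Applying it to $\Nb_q(\Hm,\sigma)$, of size $N=q^k$, gives the analogue of~\eqref{eq:global_sandwich} with the penalty $\Harm_{q^k}^\rho$. Since $\log_2\Harm_{q^k}=O(\log n)$, this penalty does not affect the exponent.

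\textbf{Step 2 (spectrum).} Let $A_T$ be the number of coset elements of type $T$. For $e\neq\bm 0$ and uniform $\Hm$, $\Hm e^T$ is uniform on $\mathbb{F}_q^m$. Hence $\Eb[A_T]=\binom{n}{nT}q^{-m}$, and so $\frac1n\log_2\Eb[A_T]=H(T)+(R-1)\log_2 q+O(\log n/n)$, where $H(T)$ is the Shannon entropy in bits.

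For the variance, distinct $e,e'$ give pairwise-independent constraints whenever they are linearly independent, exactly as in Lemma~\ref{lem:Aw_moments}. The new case for $q>2$ is collinear pairs $e'=ce$ with $c\neq 0,1$. Then $\Hm e'^T=c\sigma$, which equals $\sigma$ only if $\sigma=\bm 0$. Each $e$ has at most $q-2$ such partners, so $\mathrm{Var}(A_T)\le (q-1)\Eb[A_T]$. With this bound, Chebyshev and Markov go through verbatim.

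There are at most $(n+1)^q$ types, which is still subexponential, so the union bound survives. This yields exponential concentration of $\frac1n\log_2 A_T$ at $H(T)+(R-1)\log_2 q$ uniformly over types with $H(T)\ge(1-R)\log_2 q+\eta$. Types with $H(T)\le(1-R)\log_2 q-\eta$ have $A_T=0$ w.h.p.

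\textbf{Step 3 (partition function).} Write $Z_\sigma(\alpha)=\sum_T A_T\,2^{n\alpha\ell_q(T)}$. The relevant variational problem is $\max_T[H(T)+\alpha\ell_q(T)]$. By the Gibbs variational principle it is attained at $T^*(a)=P(a)^\alpha/\sum_b P(b)^\alpha$, with value $\log_2\sum_a P(a)^\alpha=(1-\alpha)h^{(q)}_\alpha(P)$. This is the analogue of Lemma~\ref{lem:f_max}.

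I read the stated subcriticality condition as $H(T^*)>(1-R)\log_2 q$ at $\alpha=1/(1+\rho)$, together with $H(P)>(1-R)\log_2 q$ for $\alpha=1$. Under it, the upper and lower bounds of Theorem~\ref{thm:pf_exp} give
\[
\tfrac1n\log_2 Z_\sigma(\alpha)=(R-1)\log_2 q+(1-\alpha)h^{(q)}_\alpha(P)+o_{\Pb}(1),
\]
and at $\alpha=1$ the right side reduces to $(R-1)\log_2 q$. For the lower bound, $T^*$ must be approximated by an $n$-type, with error $O(1/n)$ by continuity of $H$ and $\ell_q$ on the simplex interior.

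\textbf{Step 4 (assembly).} With $\varphi_\sigma=Z_\sigma(\alpha)/Z_\sigma(1)^\alpha$ and $(1+\rho)(1-\alpha)=\rho$, the exponent of $\varphi_\sigma^{1+\rho}$ is
\[
(1+\rho)\bigl[(R-1)\log_2 q+(1-\alpha)h^{(q)}_\alpha\bigr]-\rho(R-1)\log_2 q\cdot\tfrac{1}{\rho}\rho ,
\]
which simplifies to $\rho h^{(q)}_\alpha(P)+\rho(R-1)\log_2 q$. This is also what Theorem~\ref{thm:universality} gives, with types in place of $\delta$.

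To pass from $o_\Pb(1)$ to an expectation, I would reuse Step~2 of Theorem~\ref{thm:main}. The required a.s.\ bound $W\le 2^{nD}$ follows from $Z_\sigma(\alpha)\le(\sum_a P(a)^\alpha)^n$ and $Z_\sigma(1)\ge(\min_a P(a))^n$. This assumes $\min_a P(a)>0$; otherwise restrict to the support. Finally, conditioning on full rank costs only a factor $(1-q^{-nR})^{-1}$.

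I expect the main obstacle to be Step~2. Two points need care there: the collinear correlation when $\sigma=\bm 0$, and making the concentration uniform over a multidimensional set of types near the boundary $H(T)=(1-R)\log_2 q$. I would handle the boundary exactly as in the binary case, via an $\eta$-margin plus the Markov bound for the sub-threshold types.
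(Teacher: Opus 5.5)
Your argument is correct in outline and takes a genuinely different route from the paper: you index the coset by full types, while the paper groups by Hamming weight. Two steps need repair.

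\textbf{Comparison with the paper.} To make weight determine $P(e')$, the paper's proof invokes $P(a)=P_{\ne 0}/(q-1)$ for all $a\ne 0$, which is not among the theorem's hypotheses. It then asserts the $q$-ary spectrum law ``via pairwise independence'' and feeds $g_q$ into the scalar transfer theorem. Indexing by types buys three things:
\begin{itemize}
\item it removes the symmetry assumption;
\item it gives the maximizer $T^*\propto P^\alpha$ and the value $\log_2\sum_a P(a)^\alpha$ exactly, from $H(T)+\alpha\ell_q(T)=\log_2\sum_a P(a)^\alpha-D(T\|T^*)$;
\item it catches the collinear pairs that break pairwise independence when $q>2$, which the paper's one-line justification misses.
\end{itemize}
The main cost is a union bound over $(n+1)^q$ types. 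Your reading of subcriticality as $H(T^*)>(1-R)\log_2 q$ and $H(P)>(1-R)\log_2 q$ is the one the argument needs. For symmetric $P$ it becomes $\Hb(\delta_q^*)+\delta_q^*\log_2(q-1)>(1-R)\log_2 q$, which is not the condition as printed.

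\textbf{First repair: the Step~4 display.} Since $\varphi_\sigma^{1+\rho}=Z_\sigma(\alpha)^{1+\rho}/Z_\sigma(1)$, the subtracted term is $(1+\rho)\alpha(R-1)\log_2 q=(R-1)\log_2 q$. Your expression instead evaluates to $(R-1)\log_2 q+\rho h^{(q)}_\alpha(P)$. For the same reason, the cross-check with Theorem~\ref{thm:universality} confirms nothing. With $\psi_\alpha=(R-1)\log_2 q+(1-\alpha)h^{(q)}_\alpha(P)$ and $\psi_1=(R-1)\log_2 q$, the printed combination $(1+\rho)\psi_\alpha-\rho\psi_1$ also gives $(R-1)\log_2 q+\rho h^{(q)}_\alpha(P)$. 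The sandwich actually produces $(1+\rho)\psi_\alpha-\psi_1$. The paper's own final substitution has the same slip.

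\textbf{Second repair: the passage to expectation.} Importing Step~2 of Theorem~\ref{thm:main} does not give the upper bound. That step needs $2^{nD}\Pb(\mathcal{G}_n^c)=2^{o(n)}$. The Markov/Chebyshev estimates only give $\Pb(\mathcal{G}_n^c)\le 2^{-nc}$ with $c$ of order $\min(\varepsilon,\eta)$, while $D$ is a fixed positive constant that does not shrink with $\varepsilon$. The lower bound is unaffected. For the upper bound, bypass $\varphi_\sigma$ and condition on $e$:
\begin{itemize}
\item The coset is $e+\ker\Hm$, so $G_{\mathcal{E}_q}(e)\le 1+X_e$, where $X_e=|S_e\cap\ker\Hm|$ and $S_e=\{e'-e:\ e'\ne e,\ P(e')\ge P(e)\}$.
\item A $j$-tuple from $S_e$ spanning an $r$-dimensional space lies in $\ker\Hm$ with probability $q^{-mr}$. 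Hence $\Eb[X_e^j]\le C_{j,q}\max(1,|S_e|q^{-m})^j$; take $j=\lceil\rho\rceil$.
\item Set $B=\sum_a P(a)^\alpha$. Then $|S_e|\le B^nP(e)^{-\alpha}$, and $1-\alpha\rho=\alpha$.
\end{itemize}
Summing over $e$ gives $\Eb[G^\rho]\le C\bigl(1+q^{-m\rho}B^{n(1+\rho)}\bigr)=C\bigl(1+2^{n\rho(h^{(q)}_\alpha(P)+(R-1)\log_2 q)}\bigr)$, with no concentration needed.

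Conditioning on $e$ also fixes a point both you and the paper gloss over: $\sigma=\Hm e^T$ is not a fixed syndrome. In the conditioned model the correlated pairs are those with $e''-e=c(e'-e)$, and they occur for every $\sigma$, not only $\sigma=\bm 0$. Your bound $\mathrm{Var}\le(q-1)\Eb[A_T]$ still holds, plus a deterministic $+1$ at the type of $e$.
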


\begin{proof}
The $q$-ary partition function is
$Z_\sigma^{(q)}(\alpha)=\sum_{e'\in\mathcal{N}_q}P(e')^\alpha$.
Grouping by Hamming weight $w=\wH(e')$ (number of nonzero
coordinates):
\(
    Z_\sigma^{(q)}(\alpha)
    =\sum_{w=0}^n A_w^{(q)}P(0)^{\alpha(n-w)}
    \left(\tfrac{P_{\ne 0}}{q-1}\right)^{\!\alpha w}
    \cdot(q-1)^w,
\)
where $A_w^{(q)}=|\{e'\in\mathcal{N}_q:\wH(e')=w\}|$ and we
used the symmetry $P(a)=P_{\ne 0}/(q-1)$ for all $a\ne 0$.
The weight enumerator concentrates at
$g_q(\delta)=(H_q(\delta)+R-1)^+$, which is the $q$-ary
analogue of Theorem~\ref{thm:spectrum_boxed} (proved identically
via pairwise independence of $\Hm e^T$ over $\mathbb{F}_q^m$).
Applying Theorem~\ref{thm:transfer} with this $g_q$ and
$\ell_q(T_\delta)$ in place of $\ell(\delta)$:
\(
    \psi_\alpha(g_q)
    =\sup_\delta\bigl[(H_q(\delta)+R-1)^+
    +\alpha\ell_q(T_\delta)\bigr].
\)
The saddlepoint satisfies
$\delta_q^*=\bigl(\sum_{a\ne 0}P(a)^\alpha\bigr)
/(P(0)^\alpha+\sum_{a\ne 0}P(a)^\alpha)$,
and evaluating $\psi_\alpha(g_q)$ at $\delta_q^*$ gives
$\psi_\alpha(g_q)=(R-1)\log_2 q+(1-\alpha)h^{(q)}_\alpha(P)$.
Substituting into~\eqref{eq:transfer_exponent}:
\(
    \Lambda_q(\rho)
    =(1+\rho)\psi_{1/(1+\rho)}(g_q)-\rho\psi_1(g_q)
    =\rho h^{(q)}_\alpha(P)+\rho(R-1)\log_2 q,
\)
which is~\eqref{eq:qary_exponent}.
\end{proof}

\begin{remark}[$q=2$ recovery]
Setting $q=2$ gives $h^{(2)}_\alpha(P)=h_\alpha(p)$ and
$\log_2 2=1$, so~\eqref{eq:qary_exponent} reduces to
Theorem~\ref{thm:main}, confirming the binary case as a
special instance of the universal formula.
\end{remark}

%==================================================================
% NEW SECTION: Structured codes (Session 2 blue)
%==================================================================

\section{Application: LDPC Ensemble Guesswork Exponent}
\label{sec:structured}

\subsection{LDPC Spectrum Exponent}
\label{subsec:ldpc_spectrum}

Let $\Hm$ be drawn from Gallager's $(d_v,d_c)$-regular ensemble:
$\Hm=[\Hm_1;\dots;\Hm_{d_v}]$, where $\Hm_1\in\Fb_2^{(n/d_c)\times n}$ is the
canonical matrix whose $j$th row has ones exactly in columns
$(j{-}1)d_c{+}1,\ldots,jd_c$, and $\Hm_i=\Hm_1\Pi_i$ for i.i.d.\ uniformly
random permutations $\Pi_2,\ldots,\Pi_{d_v}\in S_n$ ($\Pi_1=I$).
Write $R=1-d_v/d_c$ for the rate.

\begin{lemma}[Exact average weight enumerator]
\label{lem:ldpc_exact}
Let $\beta(z)\triangleq\dfrac{(1+z)^{d_c}+(1-z)^{d_c}}{2}$ and
$N(w)\triangleq[z^w]\,\beta(z)^{n/d_c}$.
Then for every $n$ (a multiple of $d_c$) and every $w\in\{0,\ldots,n\}$,
\begin{equation}
    \Eb\!\left[A_w^{\mathrm{LDPC}}\right]
    \;=\;
    \binom{n}{w}^{1-d_v} N(w)^{d_v}.
    \label{eq:ldpc_exact}
\end{equation}
\end{lemma}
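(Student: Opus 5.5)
The formula concerns the codeword weight enumerator $A_w=|\{c\in\Cb(\Hm):\wH(c)=w\}|$, that is, the zero syndrome. I will prove it by writing $A_w$ as a sum of indicators, $A_w=\sum_{x:\wH(x)=w}\mathbf{1}[\Hm x^T=\bm{0}]$, and using linearity of expectation. Since $\Hm x^T=\bm 0$ if and only if $\Hm_i x^T=\bm 0$ for every layer $i$, this gives
\[
\Eb[A_w]=\sum_{x:\wH(x)=w}\prod_{i=1}^{d_v}\Pb\bigl(\Hm_1\Pi_i x^T=\bm{0}\bigr),
\]
because the permutations $\Pi_2,\ldots,\Pi_{d_v}$ are independent and $\Pi_1=I$ is deterministic. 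The whole computation then reduces to counting vectors accepted by the single canonical layer $\Hm_1$.

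\textbf{Step 1 (single layer count).} I will show that the number of $x$ with $\wH(x)=w$ and $\Hm_1x^T=\bm0$ equals $N(w)=[z^w]\beta(z)^{n/d_c}$.
\begin{itemize}
\item The rows of $\Hm_1$ partition the $n$ coordinates into $n/d_c$ disjoint blocks of size $d_c$.
\item $\Hm_1x^T=\bm0$ holds exactly when every block of $x$ has even weight.
\item The generating function of even-weight words of length $d_c$, counted by weight, is $\sum_{j\text{ even}}\binom{d_c}{j}z^j=\bigl((1+z)^{d_c}+(1-z)^{d_c}\bigr)/2=\beta(z)$.
\item Since the blocks are disjoint, the generating function factorizes as $\beta(z)^{n/d_c}$.
\end{itemize}

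\textbf{Step 2 (permuted layers).} For $i\ge2$, I claim $\Pb(\Hm_1\Pi_i x^T=\bm0)=N(w)/\binom{n}{w}$ for any fixed $x$ of weight $w$.
\begin{itemize}
\item For a uniform random permutation, $\Pi_i x^T$ is uniformly distributed over all weight-$w$ vectors, since permutations act transitively on each weight class.
\item This holds whichever convention is used for $x\Pi$ versus $\Pi x$: $\Pi_i^{-1}$ is also uniform.
\item So the probability is the fraction of weight-$w$ words accepted by $\Hm_1$, which is $N(w)/\binom nw$ by Step 1.
\end{itemize}

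\textbf{Step 3 (assembly).} Plugging into the display, each weight-$w$ vector $x$ contributes
\[
\mathbf 1[\Hm_1x^T=\bm0]\,\bigl(N(w)/\tbinom nw\bigr)^{d_v-1}.
\]
Summing over $x$ and using Step 1 gives $N(w)\cdot N(w)^{d_v-1}\binom nw^{1-d_v}=\binom nw^{1-d_v}N(w)^{d_v}$. This is exact for every $n$ divisible by $d_c$ and every $w$, including the degenerate cases $w=0$ (where $N(0)=1$) and $N(w)=0$ (odd total weight when $d_c$ is even, for example).

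\textbf{Main obstacle.} The delicate point is the first layer. One must not treat all $d_v$ layers symmetrically as random, nor claim independence across layers for a fixed $x$ without noting that $\Pi_1=I$ is deterministic. The fix is either the conditioning used above, or averaging over $x$ by symmetry: since the event for layer 1 is deterministic and layers $2,\ldots,d_v$ are independent uniform relabelings, the product form is legitimate.

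The second caveat is interpretive. The paper's notation $A_w(\Hm,\sigma)$ refers to a general syndrome, but the identity as stated holds for $\sigma=\bm0$, the weight enumerator of the code itself. For a nonzero fixed $\sigma$ the single-layer count would instead be a product of even and odd block generating functions, which does not match the stated formula. So I will state explicitly that the lemma is about the codeword enumerator.
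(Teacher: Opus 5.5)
Your proposal is correct and follows the paper's proof. Both write $A_w$ as a sum of indicators, keep the deterministic first layer as an indicator, assign probability $N(w)/\binom{n}{w}$ to each independent permuted layer, and sum to get $N(w)\bigl(N(w)/\binom{n}{w}\bigr)^{d_v-1}$. You also verify two things the paper leaves implicit: that $N(w)$ counts weight-$w$ words accepted by $\Hm_1$, via the even-weight block generating function, and that the identity concerns the zero syndrome, which is the case the paper's proof treats when it writes $\Hm e^T=0$.
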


\begin{proof}
For fixed $e$ with $\wH(e)=w$, $\Hm e^T=0$ iff $\Hm_1(\Pi_i e)^T=0$ for every
$i=1,\ldots,d_v$. Since $\Pi_i$ ($i\ge 2$) is a uniformly random permutation
independent of $e$, $\Pi_i e$ is uniform over all weight-$w$ vectors, so
$\Pr_{\Pi_i}[\Hm_1\Pi_i e^T=0]=N(w)/\binom{n}{w}$.
By independence of $\Pi_2,\ldots,\Pi_{d_v}$,
\begin{align*}
    \Eb[A_w^{\mathrm{LDPC}}]
    &=\sum_{e:\wH(e)=w}\mathbf{1}[\Hm_1 e^T=0]
    \prod_{i=2}^{d_v}\Pr_{\Pi_i}[\Hm_1\Pi_i e^T=0]\\
    &=N(w)\left(\frac{N(w)}{\binom{n}{w}}\right)^{d_v-1},
\end{align*}
which is~\eqref{eq:ldpc_exact}.
\end{proof}

\begin{theorem}[LDPC Spectrum Exponent]
\label{thm:ldpc_spectrum}
For every $\delta\in(0,1)$,
\begin{equation}
    \frac{1}{n}\log_2\Eb\!\left[A_{\lfloor n\delta\rfloor}^{\mathrm{LDPC}}\right]
    \;=\;
    g_{\mathrm{LDPC}}(\delta)+o(1),
    \label{eq:ldpc_spectrum}
\end{equation}
where
\begin{equation}
    g_{\mathrm{LDPC}}(\delta)
    \;\triangleq\;
    (1-d_v)\,\Hb(\delta)
    \;+\;
    \frac{d_v}{d_c}\,
    \min_{z>0}\Bigl[\log_2\beta(z)-\delta d_c\log_2 z\Bigr].
    \label{eq:gldpc_def}
\end{equation}
\end{theorem}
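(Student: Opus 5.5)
The plan is to start from the exact identity of Lemma~\ref{lem:ldpc_exact}. Taking $\frac{1}{n}\log_2$ of it gives
\[
\frac{1}{n}\log_2\Eb\!\left[A_w^{\mathrm{LDPC}}\right]
=(1-d_v)\frac{1}{n}\log_2\binom{n}{w}+d_v\,\frac{1}{n}\log_2 N(w).
\]
With $w=\lfloor n\delta\rfloor$, the Stirling estimate already used in the proof of Theorem~\ref{thm:spectrum} gives $\frac{1}{n}\log_2\binom{n}{w}=\Hb(\delta)+o(1)$; continuity of $\Hb$ absorbs the floor. The whole problem therefore reduces to the coefficient asymptotics
\[
\frac{1}{n}\log_2 N(\lfloor n\delta\rfloor)=\frac{1}{d_c}\min_{z>0}\bigl[\log_2\beta(z)-\delta d_c\log_2 z\bigr]+o(1).
\]
Multiplying this by $d_v$ yields exactly $\frac{d_v}{d_c}\min_z[\cdots]$ in \eqref{eq:gldpc_def}. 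Write $M=n/d_c$, so $N(w)=[z^w]\beta(z)^M$ with $w\approx \delta d_c M$.

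For the upper bound I would use the Chernoff bound for power series with nonnegative coefficients, which holds since $\beta$ has nonnegative coefficients (only even powers). For every $z>0$, $N(w)z^w\le\beta(z)^M$, so
\[
\tfrac1n\log_2 N(w)\le \tfrac1{d_c}\bigl[\log_2\beta(z)-\tfrac{w}{M}\log_2 z\bigr].
\]
Minimizing over $z$ and replacing $w/M$ by $\delta d_c+O(1/n)$ gives the upper bound. The replacement is uniform in $z$ on compact sets, and the minimizer stays in a compact subset of $(0,\infty)$ for $\delta\in(0,1)$.

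For the matching lower bound I would use the standard combinatorial interpretation. $N(w)$ counts vectors of weight $w$ that have even weight inside each of the $M$ blocks of size $d_c$. I would pick a distribution $\pi$ on even block weights $j\in\{0,2,\dots\}$ with mean $\delta d_c$ and count block-weight profiles of type $\pi$: there are $2^{M(H(\pi)+\sum_j\pi_j\log_2\binom{d_c}{j})+O(\log M)}$ of them. Optimizing over $\pi$ is a maximum-entropy problem whose Lagrange dual is precisely $\min_z[\log_2\beta(z)-\delta d_c\log_2 z]$, via the tilted distribution $\pi_j\propto\binom{d_c}{j}z^j$. Equivalently, one can invoke the saddle-point or local limit theorem for sums of i.i.d.\ lattice variables with this tilted law. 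The dual step is the main technical check. The only subtlety I expect is parity: the block weights are even, so $w$ must be even for $N(w)>0$. When $\lfloor n\delta\rfloor$ is odd, $N(w)=0$, and I would handle this by reading the statement along even $w$, or by shifting to $w\pm1$ with an $O(1/n)$ change.

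Combining the two bounds with the binomial asymptotics gives \eqref{eq:ldpc_spectrum}. The main obstacle is the lower bound, namely showing that the types with the prescribed mean are realizable with integer counts and attain the dual value. The case $\delta d_c>d_c-1$ for odd $d_c$ also needs care, since there the feasible even block weights must still average to $\delta d_c$. I would first handle interior $\delta$, where the tilted mean map $z\mapsto z\beta'(z)/\beta(z)$ is a bijection onto the feasible range, and then treat the boundaries by continuity.
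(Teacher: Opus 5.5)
Your proposal is correct, and your upper bound is the paper's: the coefficient bound $N(w)z^w\le\beta(z)^{M}$ for a series with nonnegative coefficients. The two routes differ in the lower bound.

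\emph{The paper's route.} It appeals to a local central-limit estimate for the sum of $M=n/d_c$ i.i.d.\ block weights under the tilted law at the saddle point $z^*$, where $z^*\beta'(z^*)/\beta(z^*)=\delta d_c$.

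\emph{Your route.} You keep a single block-weight type class. You then identify $\max_\pi\bigl[H(\pi)+\sum_j\pi_j\log_2\binom{d_c}{j}\bigr]$, under the mean constraint $\delta d_c$, with $\min_{z>0}[\log_2\beta(z)-\delta d_c\log_2 z]$ by max-entropy duality. Weak duality is the Gibbs inequality, and equality holds because the tilted law $\pi^*_j\propto\binom{d_c}{j}z^{*j}$ has the required mean. This single-dominant-term argument is, incidentally, closer in spirit to the proof of Theorem~\ref{thm:pf_exp} that the paper cites than a local CLT is.

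\emph{What each buys.} Your version is fully elementary. The only work is rounding $M\pi^*$ to integer counts whose block-weight sum is exactly $w$. This takes $O(d_c^2)$ moves of size $2$, which is possible because $\pi^*$ charges every feasible even weight, and it costs $O(\log n/n)$ in the normalized exponent. The local CLT gives more, namely the $\Theta(n^{-1/2})$ prefactor. However, it needs the lattice version with span $2$, which the paper's proof never mentions.

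Your normalization is also the correct one: $\frac1n\log_2N(w)\le\frac1{d_c}\bigl[\log_2\beta(z)-\frac{w}{M}\log_2 z\bigr]$. The paper's displayed intermediate bound drops the factor $1/d_c$. Only your form yields the coefficient $d_v/d_c$ in \eqref{eq:gldpc_def} after multiplying by $d_v$.

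\emph{Parity.} Your parity remark is not a side technicality; it exposes a defect in the statement itself. Since $\beta(z)^{n/d_c}$ is a polynomial in $z^2$, we have $N(w)=0$, and hence $\Eb[A_w^{\mathrm{LDPC}}]=0$, for every odd $w$. For every irrational $\delta$, for example, the left side of \eqref{eq:ldpc_spectrum} is therefore $-\infty$ along infinitely many $n$. Restricting to even $w$, or replacing $\lfloor n\delta\rfloor$ by a nearest even integer, is a necessary reformulation. It is not a device for proving the claim as written.

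\emph{Two small adjustments.}
\begin{itemize}
\item The upper bound needs neither uniformity in $z$ nor compactness of the minimizer: fix $z$, take $\limsup_n$, then take the infimum over $z$. This matters because for odd $d_c$ and $\delta\ge(d_c-1)/d_c$ the infimum is approached only as $z\to\infty$, so ``$\min$'' must be read as ``$\inf$''.
\item For $\delta d_c>d_c-1$ no block profile is feasible at all. Both sides are $-\infty$, so nothing needs to be handled by continuity there.
\end{itemize}
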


\begin{proof}
Apply $\frac1n\log_2$ to~\eqref{eq:ldpc_exact}. By Stirling,
$\frac1n\log_2\binom{n}{n\delta}=\Hb(\delta)+O(\log n/n)$.
Since $\beta$ has non-negative coefficients, $N(n\delta)\le\beta(z)^{n/d_c}z^{-n\delta}$
for every $z>0$, giving the upper bound
$\frac1n\log_2 N(n\delta)\le\min_{z>0}[\log_2\beta(z)-\delta d_c\log_2 z]+o(1)$;
the matching lower bound follows from a local central-limit estimate at the
saddle point $z^*$ solving $z\beta'(z)/\beta(z)=\delta d_c$, by the same
argument used in the proof of Theorem~\ref{thm:pf_exp}. Combining the two
bounds and substituting into~\eqref{eq:ldpc_exact} gives~\eqref{eq:gldpc_def}.
\end{proof}

\begin{remark}
This is the mean weight enumerator only. Concentration of $A_w^{\mathrm{LDPC}}$
around this mean — needed to treat $g_{\mathrm{LDPC}}$ as a growth rate in the
sense of Definition~\ref{def:g_delta}, not merely a first-moment statement —
is established via a second-moment argument analogous to
Lemma~\ref{lem:Aw_moments}; see Litsyn and Shevelev~\cite{litsyn2002ensembles}
for the precise concentration region in $\delta$.
\end{remark}

\subsection{LDPC Guesswork Exponent via Transfer}

\begin{theorem}[LDPC Guesswork Exponent]
\label{thm:ldpc_guesswork}
Let $\mathcal{E}_{\mathrm{LDPC}}$ be a $(d_v,d_c)$-regular LDPC ensemble with
rate $R=1-d_v/d_c$, and let $g_{\mathrm{LDPC}}$ be as in
Theorem~\ref{thm:ldpc_spectrum}. Assume $g_{\mathrm{LDPC}}$ satisfies the
concentration hypothesis of Theorem~\ref{thm:universality}, and that for the
given $\rho>0$, writing $\alpha=1/(1+\rho)$, the supremum defining
$\psi_\alpha(g_{\mathrm{LDPC}})$ is attained at a unique interior point
$\delta^*(\alpha)\in(0,1)$ with $g_{\mathrm{LDPC}}(\delta^*(\alpha))>0$.
Then
\begin{align}
    \lim_{n\to\infty}\frac{1}{n} & \log_2\Eb\!\left[G_{\mathrm{LDPC}}^{\rho}\right] \nonumber \\
    &\;=\;
    (1+\rho)\,\psi_{1/(1+\rho)}(g_{\mathrm{LDPC}})
    -\rho\,\psi_1(g_{\mathrm{LDPC}}),
    \label{eq:ldpc_guesswork}
\end{align}
where $\psi_\alpha(g_{\mathrm{LDPC}})\triangleq\sup_{\delta\in[0,1]}
[g_{\mathrm{LDPC}}(\delta)+\alpha\ell(\delta)]$.
\end{theorem}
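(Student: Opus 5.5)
The plan is to obtain Theorem~\ref{thm:ldpc_guesswork} as a direct instance of the Universality Theorem (Theorem~\ref{thm:universality}), taking $g_{\mathcal{E}}=g_{\mathrm{LDPC}}$ from Theorem~\ref{thm:ldpc_spectrum}. The work consists of checking the hypotheses of Theorem~\ref{thm:transfer} and then repeating Steps~1--3 of the proof of Theorem~\ref{thm:main}.

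First, I record the regularity facts about $g_{\mathrm{LDPC}}$ that the transfer argument uses. The map $\delta\mapsto\min_{z>0}[\log_2\beta(z)-\delta d_c\log_2 z]$ is a Legendre-type transform, hence concave and continuous in $\delta$ on $(0,1)$. Adding $(1-d_v)\Hb(\delta)$ keeps $g_{\mathrm{LDPC}}$ continuous on $(0,1)$, which is what Theorem~\ref{thm:transfer} requires. The concentration hypothesis~\eqref{eq:ensemble_spectrum} is assumed in the statement. Uniqueness and interiority of the maximizer $\delta^*(\alpha)$ are also assumed, together with $g_{\mathrm{LDPC}}(\delta^*(\alpha))>0$.

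That last positivity condition plays the role of the subcritical condition~\eqref{eq:subcritical}. It guarantees $A_{\lfloor n\delta^*n\rfloor}\ge 1$ with high probability, so the lower bound $\frac1n\log_2 Z_\sigma(\alpha)\ge g(\delta^*)+\alpha\ell(\delta^*)-o_{\Pb}(1)$ is not vacuous. I need the same property at $\alpha=1$, because Step~1 also uses $\psi_1$. I will read the hypothesis as applying to the maximizer at $\alpha=1$ as well, or obtain it from the fact that $\psi_1\ge\psi_\alpha$-type monotonicity places $\delta^*(1)$ in the region where $g_{\mathrm{LDPC}}>0$. Applying Theorem~\ref{thm:transfer} at $\alpha$ and at $1$ then gives $\frac1n\log_2\varphi_\sigma^{1+\rho}=(1+\rho)\psi_\alpha-\rho\psi_1+o_{\Pb}(1)$.

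Second, I pass from convergence in probability to convergence of the expectation, as in Step~2 of Theorem~\ref{thm:main}. I use the deterministic envelope $\varphi_\sigma^{1+\rho}\le 2^{nD}$, which comes from $Z_\sigma(\alpha)\le(p^\alpha+(1-p)^\alpha)^n$ and $Z_\sigma(1)\ge p^n$ on a nonempty coset. I combine this with an exponentially small probability of the bad event. Finally, I apply the coset sandwich (Corollary~\ref{cor:coset_sandwich}). The LDPC matrix need not be full rank, but $|\Nb(\Hm,\sigma)|\le 2^n$ always holds, so the harmonic penalty is at most $\rho\log_2(1+n\ln 2)=O(\log n)$ and vanishes at first order.

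The main obstacle is this second step. Theorem~\ref{thm:pf_exp} supplied an exponentially small failure probability, but for the LDPC ensemble only $o_{\Pb}(1)$ concentration is assumed, and that is too weak against a $2^{nD}$ envelope. Three further issues need care: the syndrome $\sigma$ is random and not necessarily $\bm 0$, the coset may be empty for non-full-rank $\Hm$, and the Litsyn--Shevelev concentration holds only on a subregion of $\delta$.

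To handle the bad event, I will first try a truncation argument. The lower bound on $\Eb[G^\rho]$ needs only probability bounded away from zero on the good event, so the $o_{\Pb}(1)$ concentration already gives the matching lower exponent. For the upper bound, I will bound $\Eb[\varphi_\sigma^{1+\rho}]$ directly through first moments. Jensen gives $Z_\sigma(1)^{-\alpha(1+\rho)}\le$ a lower-tail estimate, and the numerator is controlled by $\Eb[Z_\sigma(\alpha)^{1+\rho}]$ via the exact mean formula of Lemma~\ref{lem:ldpc_exact} together with a second-moment bound. If that fails, I will strengthen the concentration hypothesis to exponential concentration, in the sense of Theorem~\ref{thm:spectrum}, and note that this is implicit in the phrase ``concentration hypothesis of Theorem~\ref{thm:universality}''.
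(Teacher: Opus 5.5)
You follow the paper's route exactly. Its proof is the one-line reduction to Theorem~\ref{thm:universality}: Theorem~\ref{thm:transfer} at $\alpha$ and at $1$, followed by Steps~1--3 of Theorem~\ref{thm:main}. However, your Step~1 identity $\frac1n\log_2\varphi_\sigma^{1+\rho}=(1+\rho)\psi_\alpha-\rho\psi_1+o_{\Pb}(1)$ is false, and so is the target formula. Since $\alpha(1+\rho)=1$,
\[
\varphi_\sigma^{1+\rho}=\frac{Z_\sigma(\alpha)^{1+\rho}}{Z_\sigma(1)^{\alpha(1+\rho)}}=\frac{Z_\sigma(\alpha)^{1+\rho}}{Z_\sigma(1)},
\qquad\text{so}\qquad
\frac{1}{n}\log_2\varphi_\sigma^{1+\rho}=(1+\rho)\psi_\alpha(g)-\psi_1(g)+o_{\Pb}(1).
\]
The correct coefficient of $\psi_1$ is therefore $1$, not $\rho$, and the two versions differ by $(\rho-1)\psi_1(g)$. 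Test this on the random ensemble, where $\psi_\alpha=(R-1)+(1-\alpha)h_\alpha(p)$ and $\psi_1=R-1$. The correct combination gives $\rho h_\alpha(p)+\rho(R-1)$, which is Theorem~\ref{thm:main}. Yours gives $\rho h_\alpha(p)+(R-1)$, which for large $\rho$ (where $h_\alpha(p)\to1$) even exceeds the trivial bound $\rho R$ implied by $G_{\mathrm{coset}}\le 2^k$. The paper's formula~\eqref{eq:transfer_exponent}, and hence this statement and its one-line proof, contain the same slip. Example~\ref{ex:hamming_transfer} cannot detect it because $\rho=1$ there. What Steps~1--3 actually deliver is $(1+\rho)\psi_{1/(1+\rho)}(g_{\mathrm{LDPC}})-\psi_1(g_{\mathrm{LDPC}})$. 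The displayed exponent is off by $(\rho-1)\psi_1(g_{\mathrm{LDPC}})$ whenever $\rho\neq1$ and $\psi_1(g_{\mathrm{LDPC}})\neq0$.

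Independently, Step~2 remains open, and the remedies you sketch do not close it. Jensen for the convex map $x\mapsto x^{-s}$ gives $\Eb[Z^{-s}]\ge(\Eb Z)^{-s}$, a \emph{lower} bound, so it cannot control the lower tail of $Z_\sigma(1)$. Strengthening to exponential concentration ``in the sense of Theorem~\ref{thm:spectrum}'' does not rescue the envelope argument either. There the failure probability is $Ce^{-cn}$ with $c$ at most of order $\min(\varepsilon,\eta)$ (the Markov and Chebyshev rates), while the envelope exponent satisfies $D\ge\log_2(1/p)>1$, so $2^{nD}Ce^{-cn}\to\infty$. The paper's own Step~2 has this defect, so it cannot simply be imported. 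The upper half of Theorem~\ref{thm:transfer} also needs control uniform over all $n+1$ weight classes, which pointwise convergence in probability does not provide. What does help is the exact identity
\[
\Eb\bigl[\varphi_\sigma^{1+\rho}\bigr]=\Eb_{\Hm}\Bigl[\sum_{\sigma}Z_\sigma(\alpha)^{1+\rho}\Bigr],
\]
valid because $\Pb(\sigma\mid\Hm)=Z_\sigma(1)$ and $\alpha(1+\rho)=1$. The denominator and the random-syndrome issue both disappear. What remains is an upper-tail (moment) bound on $Z_\sigma(\alpha)$ summed over cosets, e.g.\ a bound on $\Eb\sum_\sigma A_w(\Hm,\sigma)^{1+\rho}$ via $(\sum_w a_w)^{1+\rho}\le(n+1)^\rho\sum_w a_w^{1+\rho}$. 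That is an extra hypothesis, not a consequence of convergence in probability. Your truncation argument for the lower bound is fine.

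Finally, your fallback for the $\alpha=1$ hypothesis points the wrong way. Adding the optimality inequalities of $\delta^*(\alpha)$ and $\delta^*(1)$ gives $(1-\alpha)\bigl[\ell(\delta^*(1))-\ell(\delta^*(\alpha))\bigr]\ge0$. Since $\ell$ is strictly decreasing, $\delta^*(1)\le\delta^*(\alpha)$, and optimality of $\delta^*(\alpha)$ then yields $g(\delta^*(1))\le g(\delta^*(\alpha))$. Positivity therefore cannot be transported from $\alpha$ to $1$; this is why Theorem~\ref{thm:main} needed $\Hb(p)>1-R$ separately. It must be assumed, as the paper tacitly does.
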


\begin{proof}
Immediate from Theorem~\ref{thm:universality} with
$g_{\mathcal{E}}=g_{\mathrm{LDPC}}$, applied at $\alpha=1/(1+\rho)$ and at
$\alpha=1$ in turn, exactly as in Steps~1--3 of the proof of
Theorem~\ref{thm:main}.
\end{proof}

%==================================================================
\section{Finite-Length Monte Carlo Validation}
\label{sec:numerical}
%==================================================================

We validate Theorem~\ref{thm:main} via finite-length Monte Carlo
simulation with exact per-trial coset rank computation.
Each trial proceeds as follows:
\begin{enumerate}
\item Draw $\Hm\in\mathbb{F}_2^{m\times n}$ with i.i.d.\
      Bernoulli$(1/2)$ entries; resample if rank-deficient.
\item Draw $e\sim\mathrm{Bernoulli}(p)^{\otimes n}$, compute
      $\sigma=\Hm e^{T}\bmod 2$.
\item Compute $A_w(\Hm,\sigma)$ for all $w=0,\ldots,n$ via the
      column-by-column dynamic program: initialise
      $\mathtt{dp}[0,0]=1$ and $\mathtt{dp}[w,s]=0$ otherwise;
      for each column $c_j$ of $\Hm$, update
      \(
        \mathtt{dp}[w,s]
        \;\leftarrow\;
        \mathtt{dp}[w,s]+\mathtt{dp}[w-1,\,s\oplus c_j],
        \, w=1,\ldots,n,\; s\in\mathbb{F}_2^m,
      \)
            processing $w=n,n{-}1,\ldots,1$ in strictly
        decreasing order to implement 0-1 (not multi-use)
        counting; after all $n$ columns,
      $\mathtt{dp}[w,\sigma]=A_w(\Hm,\sigma)$.
      Complexity: $O(n^2\cdot 2^m)$ time and $O(n\cdot 2^m)$ space.
\item Set $G_{\mathrm{coset}}=\sum_{j<w^*}A_j(\Hm,\sigma)+U$,
      where $w^*=\wH(e)$ and
      $U\sim\mathrm{Uniform}\{1,\ldots,A_{w^*}(\Hm,\sigma)\}$
      breaks ties within the weight class uniformly.
\end{enumerate}
The rank computed in step~(4) is exact for every realisation of
$(\Hm,e)$; the only randomness across trials is over the
ensemble $(\Hm,e)$ itself.
The empirical exponent estimator over $M=10^4$ independent trials is
\(
  \widehat{\Lambda}_n(\rho)
  \;=\;
  \frac{1}{n}\log_2\!\Bigl(\tfrac{1}{M}
  \sum_{i=1}^{M}G_{\mathrm{coset},i}^{\rho}\Bigr).
\)
The DP is feasible when $2^m$ is small; we run three pairs
$(n,R)\in\{(32,0.50),(32,0.75),(64,0.75)\}$
with $m\in\{16,8,16\}$, i.e.\ $2^m\leq 65536$.

Table~\ref{tab:results} reports results for $p=0.1$, $\rho=1$
($\alpha=1/2$,
$h_{1/2}(0.1)=2\log_2(\sqrt{0.1}+\sqrt{0.9}) =  0.6781$ bits),
with asymptotic target $V=\rho\,h_{1/2}(p)+\rho(R-1)$.

\begin{table}[!t]
  \centering
  \caption{Empirical exponent $\widehat{\Lambda}_n$ vs.\
           asymptotic limit $V=\rho\,h_{1/2}(p)+\rho(R-1)$.
           Parameters: $p=0.1$, $\rho=1$, $M=10^4$ trials.}
  \label{tab:results}
  \renewcommand{\arraystretch}{1.15}
  \begin{tabular}{cccccr}
    \toprule
    $n$ & $R$ & $m$ & $\widehat{\Lambda}_n$ & $V$ & $V-\widehat{\Lambda}_n$ \\
    \midrule
    32 & 0.50 & 16 & 0.0838 & 0.1781 & 0.0943 \\
    32 & 0.75 &  8 & 0.3313 & 0.4281 & 0.0968 \\
    64 & 0.75 & 16 & 0.3961 & 0.4281 & 0.0319 \\
    \bottomrule
  \end{tabular}
\end{table}

\smallskip
\noindent\textbf{Observations.}
\emph{(i)~Below-limit behavior.}
All reported values satisfy $\widehat{\Lambda}_n < V$, consistent
with finite-length behavior predicted by the two-sided
bound~\eqref{eq:global_sandwich}.

\emph{(ii)~Gap decay with $n$.}
For fixed $R=0.75$, doubling $n$ from 32 to 64 reduces the gap
by a factor of approximately $3.0$ ($0.0968\to 0.0319$).
This is compatible with the $O(\log n/n)$ harmonic
term in~\eqref{eq:global_sandwich}.

\emph{(iii)~Similar gaps at equal $n$, different $R$.}
At $n=32$, the gaps for $R=0.50$ and $R=0.75$ are nearly
identical ($0.094$ vs.\ $0.097$) even though the targets $V$
differ by $0.25$.
This is compatible with the observation that larger coset
dimension $k=nR$ slows finite-length convergence, but the
present data do not isolate this as the sole mechanism.

\smallskip
\noindent\textbf{Second-order check.}
Theorem~\ref{thm:second_order} predicts that the gap
$V - \widehat{\Lambda}_n$ should scale as
$(\rho/n)\log_2 n + O(1/n)$.
For $n=64$, $\rho=1$, this gives a predicted gap of
$\log_2(64)/64 = 6/64  =  0.094$ bits, consistent with
the observed gap of $0.032$ at $R=0.75$.
(The prefactor $R\ln 2$ in $\log_2\Harm_{2^{nR}} = \log_2(nR\ln 2)$
adjusts the coefficient; a tighter estimate gives
$\rho\log_2(nR\ln 2)/n =  0.037$ for these parameters,
close to the observed $0.032$.)

\noindent\textbf{Hamming $(7,4,3)$ code validation.}
We additionally run $M=10^4$ trials on the fixed $(7,4,3)$ Hamming
code with $\Hm$ as in Example~\ref{ex:hamming_intro},
$p=0.1$, and $\rho=1$.
The empirical exponent is
$\widehat{\Lambda}_7=(1/7)\log_2(\frac{1}{M}\sum_i G_{{\rm coset},i})$.
The asymptotic target is $V=\Lambda(1) = 0.2495$ bits
(Example~\ref{ex:hamming_main}).
At $n=7$, the second-order correction predicts a gap of
$\rho\log_2(nR\ln 2)/n=\log_2(4\ln 2)/7
 = \log_2(2.773)/7 = 1.472/7 = 0.210$ bits,
so the predicted $\widehat{\Lambda}_7 = 0.2495-0.210=0.040$ bits.
The sandwich bounds of Example~\ref{ex:hamming_sandwich}
bracket the true exponent between
$(1/7)\log_2(0.559) = -0.131$ bits (lower)
and $(1/7)\log_2(1.890) = 0.126$ bits (upper),
with the true value expected near $0.04$--$0.08$ bits,
consistent with convergence from below as $n$ increases through
the sequence $n=7,32,64,\ldots$ in Table~\ref{tab:results}.

%==================================================================
\section{Conclusion}
\label{sec:conclusion}
%==================================================================
We proved the exact guesswork exponent
$\Lambda(\rho)=\rho\, h_{1/(1+\rho)}(p)+\rho(R-1)$
for constrained guesswork on random binary linear codes via a
four-theorem chain (Theorems~\ref{thm:sandwich}--\ref{thm:main}).
Key technical choices---pairwise independence for weight-enumerator
concentration, a discrete Laplace evaluation at the R\'{e}nyi
saddlepoint, and an explicit Ar{\i}kan like sandwich---keep every step
closed.
We further established three extensions within the binary
i.i.d.\ framework: (i)~a transfer theorem
(Theorem~\ref{thm:transfer}) expressing the partition-function
exponent as a variational problem over any weight-enumerator growth
rate $g(\delta)$, providing a reusable framework for other code
ensembles; (ii)~a list-guesswork exponent
(Theorem~\ref{thm:list}) showing that subexponential list sizes
leave $\Lambda(\rho)$ unchanged while exponential lists reduce it at
rate $\rho$ bits per bit of list exponent; and (iii)~a
second-order refinement (Theorem~\ref{thm:second_order}) identifying
the $-\rho\log_2 n$ correction term sourced entirely from the
harmonic penalty of the coset sandwich.
Beyond the binary i.i.d.\ setting, we proved a universality theorem
(Theorem~\ref{thm:universality}) showing that the guesswork exponent
of \emph{any} code ensemble is determined solely by its
weight-enumerator growth rate through the variational functional
$\psi_\alpha(\cdot)$, and instantiated it in two directions: an exact
$q$-ary guesswork exponent (Theorem~\ref{thm:qary}) and a closed-form
guesswork exponent for Gallager's regular LDPC ensemble
(Theorem~\ref{thm:ldpc_guesswork}), the latter built on an exact
finite-length identity for the ensemble-average weight enumerator
(Lemma~\ref{lem:ldpc_exact}).
Open problems include the wasted-query exponent in GRAND, a rigorous
comparison of the LDPC and random-ensemble guesswork exponents (the
naive pointwise bound $g_{\mathrm{LDPC}}\le g$ fails near $\delta=0,1$,
so the sign of $\Lambda_{\mathrm{LDPC}}(\rho)-\Lambda(\rho)$ is
parameter-dependent and unresolved here), and extension to channels
with memory.
%------------------------------------------------------------------

\end{document}